\newtheorem{theorem}{Theorem}[section]
\newtheorem{lemma}[theorem]{Lemma}
\newtheorem{corollary}[theorem]{Corollary}
\newtheorem{proposition}[theorem]{Proposition}
\newtheorem{definition}[theorem]{Definition}
\newtheorem{remark}[theorem]{Remark}
\newenvironment{varalgorithm}[1]
	{\algorithm}
	{\endalgorithm}
\tikzstyle{scorestars}=[star, star points=5, star point ratio=2.25, draw,inner sep=1pt,anchor=center]
\newcommand*{\textoverline}[1]{$\overline{\hbox{#1}}\m@th$}
\begin{document}

\title{$O\left(1/T\right)$ Time-Average Convergence in a Generalization of Multiagent Zero-Sum Games} 
\author{James P. Bailey}

\date{}

\maketitle
\begin{abstract}
We introduce a generalization of zero-sum network multiagent matrix games and prove that alternating gradient descent converges to the set of Nash equilibria at rate $O(1/T)$ for this set of games. 
Alternating gradient descent obtains this convergence guarantee while using fixed learning rates that are four times larger than the optimistic variant of gradient descent.
Experimentally, we show with 97.5\% confidence that, on average, these larger learning rates result in time-averaged strategies that are 2.585 times closer to the set of Nash equilibria than optimistic gradient descent.
\end{abstract}

\section{Introduction}
We study online learning algorithms applied to network matrix games in the form
\begin{align*}
\max_{x_i\in \mathbb{R}^{S_i}} \left\langle x_i, \sum_{i\neq j} A^{(ij)} x_j - b_i\right\rangle \forall \ i=1,...,N. 
\end{align*} 
These games are used to capture a network where an agent receives utility based on their interactions with other agents, e.g., agent $i$ receives utility $\langle x_i, A^{(ij)}x_j\rangle$ when agent $i$ selects action $x_i$ and agent $j$ selects action $x_j$. 
A solution to this game is known as a Nash equilibrium, $x^*$, and is given by
\begin{align*}
\left\langle x^*_i, \sum_{i\neq j} A^{(ij)} x^*_j\right\rangle \geq \left\langle x_i, \sum_{i\neq j} A^{(ij)} x^*_j\right\rangle \forall x_i\in \mathbb{R}^{S_i} \forall \ i=1,...,N, 
\end{align*}
i.e., no agent can obtain a better outcome by deviating from $x^*$. 

Zero-sum network games, equivalently zero-sum polymatrix games \cite{cai2016zero}, are a special case where $A^{(ij)}= -[A^{(ji)}]^\intercal$ for all pairs of agents -- equivalently, $\langle x_i,  A^{(ij)} x_j \rangle +\langle x_j,  A^{(ji)} x_i \rangle=0$. 
Online learning dynamics and algorithms in zero-sum games have received a great deal of attention due to their numerous applications in areas such as Generative Adversarial Networks (GANS) \cite{goodfellow2014generative}, bargaining and resource allocation problems \cite{Shahrampour20OnlineAllocation}, and policy evaluation methods \cite{du2017stochastic}.

In each of these settings, the goal is to find a Nash equilibrium via online optimization techniques by having agents repeatedly play the game while updating their actions using only information about cumulative payouts, i.e., agent $i$ has access to only $\{\sum_{j\neq i} A^{(ij)}x_j^t\}_{t=0}^{T-1}$ when selecting strategy $x_i^T$ where $x_j^t$ is agent $j$'s action in the $t$-th game. 
While there are methods that guarantee last-iterate convergence  (e.g., \cite{daskalakis2019last, wei2020linear, abernethy2021last}), most methods focus on time-average convergence ($\sum_{t=0}^{T-1} x^t/T \to x^*$) since these methods tend to be faster (see e.g., \cite{golowich2020last}). 

The standard strategy for establishing time-average convergence relies on a connection between convergence and regret, a standard measure of performance in online optimization. 
Specifically, agent $i$'s regret for not playing $x_i$ is the difference between $i$'s cumulative utility and the cumulative utility had $i$ played $x_i$ instead.  
Formally $regret(x_i)= \sum_{t=0}^{T-1} \langle x_i-x_i^t, \sum_{j\neq i} A^{(ij)}x_j^t\rangle$. 
It is well-known that $f(T)$ time-average regret for all agents implies $f(T)$ time-average convergence to the set of Nash equilibrium in bounded zero-sum games (see \cite{cesa2006prediction}).

While there are several algorithms that obtain $O(1/T)$ time-average regret and convergence for zero-sum games \cite{kangarshahi2018let, mokhtari2020convergence}, no such results are known for general-sum games (no restrictions on $A^{(ij)}$). 
Recently, $poly(\log (T)/T)$ time-average regret has been shown in general-sum games \cite{daskalakis2021nearoptimal}.
However, this is insufficient for quickly finding Nash equilibria; $f(T)$ time-average regret in these settings only implies $f(T)$ time-average convergence to the set of coarse correlated equilibria -- a significantly weaker solution concept.  

To provide finer distinctions between the types of games, \cite{kannan2010games} introduces a hierarchy to capture all games.  
In the two agent settings, the rank of game is denoted by $rank(A^{(ij)}+ [A^{(ji)}]^\intercal)$ implying a two-agent game is zero-sum if and only if it is rank-0. 
Standard algorithms for finding Nash equilibria in zero-sum games are known to not work well even in rank-1 games \cite{balcan2012weighted} and other fast methods to find Nash equilibria for rank-1 games have been developed \cite{adsul2021fast}. 
In this paper, we focus on fast time-average convergence for a different generalization of zero-sum games. 

\subsection{Our Motivations}

Our methodology is heavily motivated by continuous-time optimization in games where agents' strategies are a continuous function of other agents' actions (see e.g., \cite{mertikopoulos2016learning}). 
In particular, continuous-time variants of follow-the-regularized-learner algorithms (FTRL), e.g., gradient descent and multiplicative weights, are known to achieve $O(1/T)$ time-average regret in general-sum games \cite{Mertikopoulos2018CyclesAdverserial}.
In the setting of zero-sum games, these learning \emph{dynamics} maintain constant energy and cycle around the set of Nash equilibria \cite{Mertikopoulos2018CyclesAdverserial} on closed orbits.

However, this is drastically different than what we see from discrete-time FTRL where agent strategies diverge from the set of Nash equilibria \cite{Bailey18Divergence}.  
This is because these algorithms are poor approximations of the continuous-time dynamics. 
Continuous-time variants of FTRL have been shown to form a Hamiltonian dynamic \cite{Bailey19Hamiltonian}, a well-known concept used to capture the evolution of a physical system.  
Discrete-time FTRL can be formulated by applying Euler integration to this Hamiltonian system; 
regrettably Euler integration is well-known to be a poor approximator of Hamiltonian systems. 
Instead, we focus on symplectic integrators (see e.g., \cite{Hairer2006EnergyConserve,hairer2006geometric}), which were designed for Hamiltonian systems. 
Specifically, we study alternating gradient descent, which arises naturally by applying Verlet integration, a symplectic technique, to continuous-time gradient descent.

\subsection{Our Contributions}

We prove that multi-agent alternating gradient descent achieves $O(1/T)$ time-average convergence to the set of Nash equilibrium in network zero-sum games (Theorem \ref{thm:multiResult}) matching the best known bound for convergence in zero-sum games. 
We show that alternating gradient descent accomplishes the convergence guarantee with learning rates up to four times larger than optimistic gradient descent.
Our theoretical work suggests that these larger learning rates translate to faster optimization guarantees (Theorems \ref{thm:large1} and \ref{thm:large2}). 
Our experiments support this; experimentally we show with 97.5\% confidence that, on average, alternating gradient descent results in time-averaged strategies that are 2.585 times closer to the set of Nash equilibria than optimistic gradient descent. 

Moreover, we introduce a generalization of the zero-sum network games, and show alternating gradient descent also achieves $O(1/T)$ time-average convergence to the set of Nash equilibria. 
In this generalization, we allow each agent to multiply their payoff matrices by an arbitrary positive-definite matrix.  Formally, a network positive negative definite game is given by 
\begin{align*}
&\max_{x_i\in \mathbb{R}^{S_i}} \left\langle x_i, P_i\sum_{i\neq j} A^{(ij)} x_j-b_i\right\rangle \forall \ i=1,...,N \\
&\text{where \ $P_i$ \ is \ positive-definite} \\
&\text{and} \ A^{(ij)}=-[A^{(ji)}]^\intercal \ \forall \{i,j\}\in [N]
\end{align*}
Our proposed methods allow us to extend important convergence results to settings that are adversarial in nature, but not necessarily zero-sum. 
We remark that our generalization is distinct from the rank-based hierarchy of bimatrix games introduced by \cite{kannan2010games}.
Specifically, the set of positive-negative definite games includes games at every level of the hierarchy. 
Further, unlike zero-sum games, an agent's payoff reveals no information about the payoff of other agents -- even in the 2-agent case. 

We accomplish this by showing that alternating gradient descent behaves similarly to its continuous-time analogue.  Specifically it has (i) an invariant energy function capturing all updates (Theorem \ref{thm:MultiEnergy2}), (ii) these energy functions are bounded (Theorem \ref{thm:BoundedOrbits}) and (iii) strategies approximately cycle (Theorem \ref{thm:Recurrence}). 
Finally, we relate the time-average of the strategies directly to the cyclic nature of the algorithm to prove $O(1/T)$ time-average convergence. 

In addition, we also prove several important properties of alternating gradient descent in general-sum games. 
Most notably, an agent using alternating gradient descent has $O(1/T)$ regret after agent 1 updates regardless of the opponents' strategies (Theorem \ref{thm:MultiRegret}). 
We remark that alternating gradient descent is unique relative to other learning algorithms in that agents take turns updating; as such, agent 1's regret is not necessarily $O(1/T)$ after other agents update and therefore Theorem \ref{thm:MultiRegret} cannot be directly compared to regret guarantees for other algorithms, e.g., \cite{daskalakis2021nearoptimal} remains the best guarantee for the standard notion of regret in general-sum games.

\section{Preliminaries}

We study repeated matrix network games between $N$ agents where each agent receives  utility based on their interactions with other individual agents. 
Agent $i$'s set of available actions are given by a convex space ${\cal X}_i$.  
For most of this paper, we use ${\cal X}_i=\mathbb{R}^{S_i}$  for some positive integer $S_i$. 
Once selecting strategies, $x=(x_1,...,x_N)\in {\bigtimes_{i=1}^n{\cal X}_i}$, agent $i$ receives a utility of $\langle x_i, A^{(ij)}x_j\rangle$ for the interaction between agents $i$ and $j$ where $i\neq j$. 
This yields the following network game where each agent seeks to maximize their individual utilities. 
\begin{align*}
	\max_{x_i\in {\cal X}_i} \left\langle x_i, \sum_{i\neq j} A^{(ij)} x_j\right\rangle \ for \ all \ i=1,...,N \tag{Network Matrix Game}
\end{align*}
The term $A^{(ij)}$ denotes agent $i$'s \emph{payoff} matrix against agent $j$. 
A solution to this game is known as a Nash equilibrium, $x^*$, and is characterized by
\begin{align*}
\left\langle x^*_i, \sum_{i\neq j} A^{(ij)} x^*_j\right\rangle \geq \left\langle x_i, \sum_{i\neq j} A^{(ij)} x^*_j\right\rangle \forall x_i\in \mathbb{R}^{S_i} \forall \ i=1,...,N, \tag{A Nash Equilibrium}
\end{align*}
i.e., no agent can obtain a better outcome by deviating from $x^*$. 
When ${\cal X}_i$ is affine and full-dimensional, an equivalent condition for a Nash equilibrium is given by $\sum_{j\neq i} A^{(ij)}x^*_j= \vec{0}$ since otherwise agent $i$ could move their strategy in the direction $\sum_{j\neq i} A^{(ij)}x^*_j$ to increase their utility.
Therefore $x^*$ is a Nash equilibrium if and only if $\sum_{j\neq i} A^{(ij)}x^*_j= \vec{0}$ for each agent $i$. 
When ${\cal X}_i=\mathbb{R}^{S_i}$, as is this case in most of this paper, $x^*_i=\vec{0}$ always corresponds to a Nash equilibrium. 
However in Section \ref{sec:bilinear} we extend our results to the utility function $\langle x_i , \sum_{j\neq i} A^{(ij)} x_j -b_i\rangle$ where Nash equilibria can be arbitrarily located.

In addition to general-sum games (no restrictions on $A^{(ij)}$), we also consider two other standard types of games -- zero-sum and coordination games. 
\begin{definition}
	A network game is a zero-sum network game iff $A^{(ij)}= -\left[ A^{(ji)}\right]^\intercal$ for all $i\neq j$. 
\end{definition}
\begin{definition}
	A network game is a coordination network game iff $A^{(ij)}= \left[ A^{(ji)}\right]^\intercal$ for all $i\neq j$. 
\end{definition}

In a zero-sum network game, agent $j$ loses whatever agent $i$ gains from their interaction. 
By \cite{cai2016zero} every zero-sum polymatrix game (a multiagent game where payouts are determined by tensors) is a zero-sum game and we lose no generality by replacing every instance of ``zero-sum game'' with ``zero-sum polymatrix game". 
At the other end of spectrum, agent $i$ and agent $j$ always have the same gains from their interactions in a coordination game. 
While our main results are for generalizations of zero-sum games, we also include several results for general-sum games and a generalization of coordination games.

\subsection{Online Optimization in Games}

Our primary interest is in repeated games. 
In this setting, each agent selects a sequence of strategies $\{x_i^0,...,x_i^T\}\subset {\cal X}_i$ and agent $i$ receives a cumulative utility of $\sum_{t=0}^T \langle  x_i, \sum_{j\neq i} A^{(ij)} x_j^t\rangle$. 
In most applications, $x_i^t$ is selected after seeing the gradient of the payout from the previous iteration, i.e., after seeing $\sum_{j\neq i} A^{(ij)} x_{j}^{t-1}$. 
Gradient descent (Algorithm \ref{alg:GradientMulti}) is one of the most classical algorithms for updating strategies in this setting.

\begin{varalgorithm}{SimGD}
	\caption{Gradient descent with simultaneous updates}\label{alg:GradientMulti}
	\begin{algorithmic}[1]
		\Procedure{SimGD}{$A,x^0,\eta$}\Comment{Payoff matrices, initial strategies and learning rates}
		\For{\texttt{$t=1,...,T$}}
			\For{\texttt{$i=1,...,N$}}
				\State $x_i^t:= x_i^{t-1} + {\eta_i} \sum_{j\neq i} A^{(ij)}x_j^{t-1}$ \Comment{Update strategies based on previous iteration}
			\EndFor
		\EndFor
		\EndProcedure
	\end{algorithmic}
\end{varalgorithm}

The learning rate $\eta_i>0$ describes how responsive agent $i$ is to the previous iterations. 
Typically in applications of Algorithm \ref{alg:GradientMulti}, $\eta_i$ decays  over time in order to prove $O(1/\sqrt{T})$ time-average regret and convergence when ${\cal X}$ is compact.
However, this decaying learning rate may not be necessary in general; \cite{Bailey19GDRegret} shows the same $O(1/\sqrt{T})$ guarantees in 2-agent, 2-strategy zero-sum games with an arbitrary fixed learning rate and provides experimental evidence to suggest the results extend to larger games.
In this paper, we consider variations of gradient descent in order to improve optimality and convergence guarantees. 
The variants we consider all rely on time-invariant learning rates that are independent of the time horizon $T$ and yield stronger optimization than the classical method of gradient descent with simultaneous updates. 

%\begin{algorithm}[H]
%	\SetAlgoLined
%	%	\KwResult{Write here the result }
%	\textbf{Input:} \\
%	\hspace*{1.2em} Agent $i's$ payoff matrix against agent $j$: $A^{(ij)}$ for $i=1,...,n$ and $j\neq i$\\
%	\hspace*{1.2em} Agent $i$ initial strategy: $x_i^0\in {\cal X}_i$ for $i=1,...,n$\\
%	\hspace*{1.2em} Agent $i$ learning rate: ${\eta}_i>0$ for $i=1,...,n$\\
%	
%	
%	\For{$t=1,...,T$}{
%		\For{$i=1,...,N$}{
%			$x_i^t:= x_i^{t-1} + {\eta_i} \sum_{j\neq i} A^{(ij)}x_j^{t-1}$ \tcp*{Agent Updates Strategies Based on the Outcome of the Previous Iteration}
%		}
%	}
%	\caption{Gradient Descent with Simultaneous Updates.}\label{alg:GradientMulti}
%\end{algorithm}

\section{Alternating Gradient Descent in 2-Agent Games}
\label{sec:2Agent}

We begin by closely examining a 2-agent game.  
For reasons which will become apparent later, we will simplify the notation so that $x\in{\cal X}$ describes agent $1$'s strategy space, $y\in{\cal Y}$ describes agent 2's strategy space, and $A^{(12)}=A$ and $B=A^{(21)}$ describe the agent's payoff matrices respectively.  
This results in the following game.

\begin{equation}\label{eqn:2AgentGame}
\tag{2-Agent Game}
\begin{aligned}
	\max_{x\in {\cal X}} \ \langle x, Ay\rangle \\
	\max_{y\in {\cal Y}} \ \langle y, Bx\rangle \\
\end{aligned}
\end{equation}

In this section, we analyze alternating gradient descent (Algorithm \ref{alg:2Agent} below) in 2-agent games and show four properties for general-sum games:
\begin{enumerate}
	\item \textbf{Regret:} An agent has $O\left({1}/{T}\right)$ time-average regret immediately after updating if they use alternating gradient descent with an arbitrary vector of fixed learning rates against an arbitrary opponent with an unknown time horizon $T$ (Theorem \ref{thm:regret}). 
	We remark that the $O\left({1}/{T}\right)$ guarantee does not hold after the opposing agent updates (Proposition \ref{prop:BadRegret}). \label{1}
	\item \textbf{Large Learning Rates Work Well:} Optimization guarantees of alternating gradient descent improve as we use larger learning rates.  Specifically, agent $1$ is guaranteed a utility of $-\langle x^0, D_\eta^{-1} x^0\rangle \to 0$ as $\eta\to \infty$ and, against an unresponsive opponent, agent 1's utility after updating goes to $\infty$ as $\eta\to \infty$ (Theorems \ref{thm:large1} and \ref{thm:large2}). \label{2}
	\item \textbf{Self-Actualization:} In order to maximize agent $1$'s regret  for not playing the fixed strategy $x$, agent two will actually force agent 1 to play the strategy $x$.  
	Formally, for any sequence $\{y^0,...,y^T\}$ that maximizes agent 1's regret for using $\{x^0,...,x^{T}\}$ from alternating gradient descent instead of the fixed strategy $x$ will result in $x^{T+1}=x$ (Theorem \ref{thm:Actualization}). \label{3}
%	\item \textbf{Smooth Energy Function for Discrete Updates:} If both agents use alternating gradient descent with arbitrary vectors of fixed learning rates then there is an invariant, continuous energy function that captures all updates (Theorem \ref{thm:energy}).\label{4}
	\item \textbf{Volume Preservation:} Alternating gradient descent preserves the volume of every measurable set of initial conditions when agents use arbitrary learning rates  (Theorem \ref{thm:Volume})\label{item:volume}.\label{5}
\end{enumerate}

We show and explore the meaning of each of these properties in Sections \ref{sec:2AgentRegret}--\ref{sec:ConservationVolume} respectively. 
%The results in \cite{Bailey2019Regret} related to (\ref{4}) and (\ref{5}) only applied to zero-sum games. 
Unlike standard analyses in online optimization, we prove our results for a generalized notion of learning rates. 
Specifically, we allow individual agents to use  different learning rates for each individual strategy. 
For instance, suppose an agent fundamentally believes that the strategy ``rock'' is the most important strategy in the game rock-paper-scissors. 
Then they may wish to use a larger learning rate for rock relative to scissors, e.g., a learning rate of ${\eta}_{rock}=100$ and ${\eta}_{scissors}=2$. 
In this case, if an agent observes a benefit of 1 for both rock and scissors, then the agent will increase their weight for rock by ${\eta}_{rock}\cdot 1 =100$ while only increasing their weight for scissors by ${\eta}_{scissors}\cdot 1 =2$. 
For a single agent, we do not see an immediate algorithmic benefit of using different learning rates and therefore make no suggestion for it in practice.
However, this generalization will be important for extending our results to multiagent systems in Section \ref{sec:Multi}.
We also remark that \cite{Bailey2019Regret} proves (\ref{1}) using a scalar learning rate and  (\ref{5}) in the setting of only zero-sum games using a scalar learning rate.

We begin by presenting Algorithm \ref{alg:2Agent} for alternating gradient descent between 2 agents.
In Algorithm \ref{alg:2Agent}, $D_{\eta}$ represents a diagonal matrix where the diagonal is populated by the vector of learning rates $\eta$.  
Similarly, $D_{\eta}Ay^{t-1}$ can be expressed by the Hadamard product $\eta \circ Ay^{t-1}$ indicating that the $i$th strategy is weighted according to $\eta_i$.  
However, for notation purposes, it will be simpler to work with the diagonal matrix $D_{\eta}$.
We also remark that for all of our analysis that $D_\eta$ can be replaced with an arbitrary positive-definite matrix.

\begin{varalgorithm}{2AltGD}
	\caption{2-Agent gradient descent with alternating updates}\label{alg:2Agent}
	\label{alg:euclid}
	\begin{algorithmic}[1]
		\Procedure{2AltGD}{$A,B,x^0,y^0,\eta,\gamma$}\Comment{Payoff matrices, initial strategies and learning rates}
		\For{\texttt{$t=1,...,T$}}
		\State $x^t:= x^{t-1} + D_{{\eta}}  Ay^{t-1}$ \Comment{Update strategies based on previous iteration} \label{line:agent1}
		\State $y^t:= y^{t-1} + D_{\gamma}  Bx^{{t}}$ \Comment{Update strategies based on current iteration}\label{line:agent2}
		\EndFor
		\EndProcedure
	\end{algorithmic}
\end{varalgorithm}

\begin{remark} If line \ref{line:agent2} of Algorithm \ref{alg:2Agent} is replaced with $x^{t-1}$ instead of $x^t$, then the algorithm is the normal implementation of gradient descent with simultaneous updates (Algorithm \ref{alg:GradientMulti}). 
\end{remark}

\subsection{$O\left(1/T\right)$ Time-Average Regret}\label{sec:2AgentRegret}

In traditional algorithmic settings, where agents update simultaneously, agent 1's regret with respect to a fixed strategy $x$ is defined as
\begin{align*}
\sum_{t=0}^T \langle x, Ay^t\rangle - \sum_{t=0}^T \langle x^t, Ay^t\rangle\tag{Standard Notion of Regret for Simultaneous Updates}
\end{align*}
i.e., the difference between the utility agent 1 would receive if the fixed strategy $x$ was played against $\{y^t\}_{t=0}^T$ and the utility agent 1 received by playing the sequence $\{x^t\}_{t=0}^T$.
Regret is the standard notion used to understand the performance of algorithms in repeated games and in online optimization in general. 
In the setting of bounded zero-sum games, it is well-known that the time-average of the strategies converges to the set of Nash equilibria whenever regret grows at  rate $o(T)$ (sublinearly).
Generally in online optimization, if regret grows at rate $o(T)$, then the time-average regret converges to zero implying that, on average, the algorithm performs as well as the fixed strategy $x$.

In the setting of alternating play where agents take turns updating, agent 2 plays the strategy $y^t$ twice -- once in the $t$th iteration when agent 2 updates ($x^t,y^t$) and once when agent 1 updates in the $(t+1)$th iteration ($x^{t+1},y^t$).  
As such, we update the notion of regret accordingly:
\begin{align*}
	\sum_{t=0}^T \langle 2x, Ay^t\rangle - \sum_{t=0}^T \langle x^{t+1}+x^t, Ay^t\rangle\tag{Regret After Agent 1 Updates}
\end{align*}

From an economic standpoint, it makes sense that agents would receive utility after each update. 
If agents only received utility after both agents updated, then the agent that updates last would decidedly have an advantage since they would see the other agent's strategy. 
As such, no rational agent would agree to take turns updating unless they receive utility every time they update. 
We remark that this notion of regret only captures agent 1's regret after  1 updates and is not sufficient on its own to guarantee time-average convergence. 
We discuss the implication of this definition at the end of this section.

\begin{theorem}\label{thm:regret}
	If agent 1 updates their strategies with Algorithm \ref{alg:2Agent} with an \underline{arbitrary} vector of fixed learning rates ${\eta}$, then agent 1's time-average regret with respect to an arbitrary fixed strategy $x$ after updating  in iteration $(T+1)$ is $O\left(1/T\right)$, regardless of how their opponent updates. 
	More specifically, agent 1's total regret is exactly
	\begin{align*}
		\left\langle x^0-2x, D^{-1}_{{\eta}}x^0\right\rangle + \left\langle 2x-x^{T+1},  	 D^{-1}_{{\eta}}x^{T+1} \right\rangle\leq \left\langle x^0-2x, D^{-1}_{{\eta}}x^0\right\rangle + \left\langle x,   D^{-1}_{{\eta}}x\right\rangle\in O(1).
	\end{align*}
\end{theorem}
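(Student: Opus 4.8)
The plan is to eliminate agent 2's strategy from the regret expression entirely by exploiting agent 1's update rule, thereby reducing the regret to a telescoping sum. First I would rearrange line \ref{line:agent1} of Algorithm \ref{alg:2Agent} to obtain $x^{t+1}-x^t = D_\eta A y^t$, i.e.\ $A y^t = D_\eta^{-1}(x^{t+1}-x^t)$, and substitute this into the \emph{Regret After Agent 1 Updates}. This produces
\[
	\sum_{t=0}^T \left\langle 2x - x^{t+1} - x^t,\; D_\eta^{-1}\!\left(x^{t+1}-x^t\right)\right\rangle,
\]
an expression in which $y^t$ no longer appears. This is exactly why the bound will hold regardless of how the opponent updates: agent 1's own update already encodes all the information about $Ay^t$ on which the regret depends.

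Next I would split each summand and show both halves telescope. Setting $\Phi(z) = \langle z, D_\eta^{-1} z\rangle$, the term $\langle x^{t+1}+x^t, D_\eta^{-1}(x^{t+1}-x^t)\rangle$ collapses to $\Phi(x^{t+1}) - \Phi(x^t)$ because $D_\eta^{-1}$ is symmetric, so the two mixed inner products cancel. The remaining term $\langle 2x, D_\eta^{-1}(x^{t+1}-x^t)\rangle$ telescopes directly. Summing over $t=0,\dots,T$ leaves only the boundary contributions from $x^0$ and $x^{T+1}$, and regrouping these yields precisely the claimed identity $\langle x^0-2x, D_\eta^{-1}x^0\rangle + \langle 2x-x^{T+1}, D_\eta^{-1}x^{T+1}\rangle$.

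To obtain the inequality I would compare the exact identity against the stated upper bound. The upper bound minus the identity simplifies to $\langle x - x^{T+1}, D_\eta^{-1}(x - x^{T+1})\rangle$, which is nonnegative since each $\eta_i>0$ makes $D_\eta^{-1}$ positive-definite; equality holds iff $x^{T+1}=x$. Finally, because the bound $\langle x^0-2x, D_\eta^{-1}x^0\rangle + \langle x, D_\eta^{-1}x\rangle$ depends only on $x^0$, the comparison strategy $x$, and the fixed learning rates --- not on $T$ --- the total regret is $O(1)$ and hence the time-average regret is $O(1/T)$.

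The main obstacle is conceptual rather than computational: the crucial move is to use the update rule to \emph{replace} $Ay^t$ rather than to track the $y^t$ directly, since this is precisely what decouples the regret from the adversary and exposes the telescoping structure. Once that substitution is in place, the remaining steps are routine manipulations of the symmetric positive-definite form $D_\eta^{-1}$, and positive-definiteness is the only structural fact needed to close the bound.
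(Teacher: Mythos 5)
Your proposal is correct and follows essentially the same route as the paper: substitute $Ay^t = D_\eta^{-1}(x^{t+1}-x^t)$ from the update rule to eliminate the opponent, telescope the resulting quadratic expressions, and bound the terminal term using positive-definiteness of $D_\eta^{-1}$. The only cosmetic difference is in the final inequality, where you complete the square via $\langle x-x^{T+1}, D_\eta^{-1}(x-x^{T+1})\rangle \geq 0$ while the paper maximizes $f(w)=\langle 2x-w, D_\eta^{-1}w\rangle$ at its critical point $w=x$; these are the same fact.
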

\begin{proof}
	The total regret for agent 1 after agent 1 updates in iteration $(T+1)$ is
	\begin{align*}
			\sum_{t=0}^T\left\langle 2x-x^{t+1}-x^t,  Ay^t\right\rangle
		&= 
			\sum_{t=0}^T\left\langle 2x-x^{t+1}-x^t,  	  D^{-1}_{{\eta}}(x^{t+1}-x^t)\right\rangle\\
		&= 
			\sum_{t=0}^T\left(\left\langle x^t-2x, D^{-1}_{{\eta}}x^t\right\rangle - \left\langle x^{t+1}-2x,  	 D^{-1}_{{\eta}}x^{t+1}\right\rangle\right)\\
		&= 
			\left\langle x^0-2x, D^{-1}_{{\eta}}x^0\right\rangle + \left\langle 2x-x^{T+1},  	 D^{-1}_{{\eta}}x^{T+1} \right\rangle\\
		&\leq  
			\left\langle x^0-2x, D^{-1}_{{\eta}}x^0\right\rangle + \left\langle x,   D^{-1}_{{\eta}}x\right\rangle\in O(1)
	\end{align*}
	where the first equality follows from line \ref{line:agent1} of Algorithm \ref{alg:2Agent}, the second equality follows since $D_\eta^{-1}$ is symmetric, the third equality follows by canceling out terms from the telescopic sum, and where the inequality follows since the function $f(w):= \left \langle 2x-w,  D^{-1}_{{\eta}}w \right\rangle$ has a critical point at $w=x$, which corresponds to a global maximum since $D_\eta^{-1}$ is positive-definite. 
	Dividing the above equations by $T$ yields that the time-average regret is in $O\left(1/T\right)$.
\end{proof}

Theorem \ref{thm:regret} implies that agent 1's regret does not grow at all.  
This suggests that agent strategies will quickly converge to optimality in zero-sum games; we formally show this in Section \ref{sec:ZeroSum}. 
Interestingly, this result implies that we can compute agent 1's regret using very small amount of information.  Specifically, we only need to know agent 1's first and last strategy (with no information about agent 2) to compute their total regret.

While this bound on regret is incredibly powerful -- it holds regardless of how the opponent updates and for any learning rate -- the guarantee does not necessarily hold if regret is computed after agent 2 updates. 
As demonstrated in the proof of Proposition \ref{prop:BadRegret}, agent 2 can make their final strategy arbitrarily large in order to make agent 1's regret arbitrarily large. 
However, in practice, we do not necessarily expect agent 2 to play large strategies; for instance in Section \ref{sec:ZeroSum}, we show that $y^{T}$ is bounded when both agents use alternating gradient descent in zero-sum games.  This implies that agent 1 has bounded regret even when regret is computed after agent 2 updates (Corollary \ref{cor:ZSRegret}).

\begin{proposition}\label{prop:BadRegret}
	Suppose $A$ is invertible. If agent 1's regret is computed after agent 2 updates, then agent 1's regret with respect to $x$ can be made arbitrarily large if $A^{-1}(x-x^{T+1})\neq \vec{0}$.
\end{proposition}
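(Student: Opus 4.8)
The plan is to write the regret after agent 2 updates in iteration $(T+1)$ as the regret after agent 1 updates — which Theorem \ref{thm:regret} shows is a bounded constant independent of agent 2's final move — plus a single additional term arising from the extra play $(x^{T+1},y^{T+1})$, and then show that agent 2 can make this one term blow up.

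First I would set up the bookkeeping of plays. After agent 2 updates in iteration $(T+1)$ the final played pair is $(x^{T+1},y^{T+1})$, whereas the regret after agent 1 updates stops at $(x^{T+1},y^T)$. Relative to the latter, agent 2's move introduces exactly one new comparison: against $y^{T+1}$ agent 1 earns $\langle x^{T+1},Ay^{T+1}\rangle$ while the fixed benchmark $x$ would earn $\langle x,Ay^{T+1}\rangle$. Hence the regret after agent 2 updates equals
\begin{align*}
\left\langle x^0-2x, D^{-1}_{{\eta}}x^0\right\rangle + \left\langle 2x-x^{T+1}, D^{-1}_{{\eta}}x^{T+1}\right\rangle + \left\langle x-x^{T+1}, Ay^{T+1}\right\rangle,
\end{align*}
where the first two terms are precisely the (bounded) total regret from Theorem \ref{thm:regret} and do not depend on $y^{T+1}$.

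Next, since the statement allows agent 2 to update arbitrarily, I would have agent 2 select $y^{T+1}=c\,A^{-1}(x-x^{T+1})$ for a scalar $c>0$; this is where invertibility of $A$ is used. Then $Ay^{T+1}=c(x-x^{T+1})$, so the additional term becomes $c\,\langle x-x^{T+1},\,x-x^{T+1}\rangle$. Because $A$ is invertible, the hypothesis $A^{-1}(x-x^{T+1})\neq\vec{0}$ is equivalent to $x\neq x^{T+1}$, so $\langle x-x^{T+1},\,x-x^{T+1}\rangle>0$; letting $c\to\infty$ drives the regret to $+\infty$ while the first two terms remain fixed.

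The only step requiring care is the accounting in the first paragraph — verifying that exactly one comparison is added relative to the after-agent-1-updates regret, and that $y^{T+1}$ enters the expression only through that single inner product. Once that is pinned down, the rest is immediate: the bounded piece is quoted directly from Theorem \ref{thm:regret}, and the adversarial choice of $y^{T+1}$ makes the remaining inner product unbounded exactly when $x\neq x^{T+1}$, i.e. $A^{-1}(x-x^{T+1})\neq\vec{0}$.
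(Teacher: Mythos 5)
Your proposal is correct and follows essentially the same route as the paper's proof: decompose the regret after agent 2 updates into the closed-form bounded quantity from Theorem \ref{thm:regret} plus the single extra term $\left\langle x-x^{T+1}, Ay^{T+1}\right\rangle$, then choose $y^{T+1}$ proportional to $A^{-1}(x-x^{T+1})$ and let the scalar go to infinity. The only cosmetic difference is that you make explicit the equivalence $A^{-1}(x-x^{T+1})\neq\vec{0} \iff x\neq x^{T+1}$, which the paper leaves implicit.
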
	

\begin{proof}
	After agent $2$ updates, agent 1's regret is given by 
	\begin{align*}
	&\sum_{t=0}^T\left\langle 2x-x^{t+1}-x^t,  Ay^t\right\rangle+\left\langle x-x^{T+1},  Ay^{T+1}\right\rangle
	\\=&\left\langle x^0-2x, D^{-1}_{{\eta}}x^0\right\rangle + \left\langle 2x-x^{T+1},  	 D^{-1}_{{\eta}}x^{T+1} \right\rangle+\left\langle x-x^{T+1},  Ay^{T+1}\right\rangle.
	\end{align*}
	Let $y^{T+1}=\lambda \cdot A^{-1}(x-x^{T+1})\neq \vec{0}$.  Then agent 1's regret after agent 2 updates approaches infinity as $\lambda\to \infty$. 
\end{proof}

\subsection{An Argument for Large Learning Rates}\label{sec:Large}

In most settings of online optimization, small learning rates are used to prove optimization guarantees. 
However, in this setting we actually show that a large learning rate yields stronger lower bounds on the utility gained. 

\begin{theorem}\label{thm:large1}
	Agent 1's total utility after updating in the $(T+1)th$ iteration is $\langle x^{T+1}, D_{\eta}^{-1}x^{T+1}\rangle-\langle x^{0}, D_{\eta}^{-1}x^{0}\rangle \geq -\langle x^{0}, D_{\eta}^{-1}x^{0}\rangle$.
\end{theorem}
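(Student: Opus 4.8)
The plan is to first pin down what ``agent 1's total utility after updating in iteration $(T+1)$'' means under alternating play, and then reduce the computation to the same telescoping identity that drove the proof of Theorem~\ref{thm:regret}. Following the accounting introduced for ``Regret After Agent 1 Updates'', agent 2's strategy $y^t$ is in play against both $x^t$ (after agent 2 updates in iteration $t$) and $x^{t+1}$ (after agent 1 updates in iteration $t+1$), so agent 1's cumulative utility through the update in iteration $(T+1)$ is exactly $\sum_{t=0}^{T}\langle x^{t+1}+x^t,\, Ay^t\rangle$. This is the expression whose closed form I want to establish.

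Next I would eliminate agent 2 entirely using line~\ref{line:agent1} of Algorithm~\ref{alg:2Agent}. Since $x^{t+1}=x^t+D_{\eta} A y^t$, we have $Ay^t = D_{\eta}^{-1}(x^{t+1}-x^t)$, so each summand becomes $\langle x^{t+1}+x^t,\, D_{\eta}^{-1}(x^{t+1}-x^t)\rangle$. Because $D_{\eta}^{-1}$ is symmetric, the cross terms cancel and this equals $\langle x^{t+1}, D_{\eta}^{-1}x^{t+1}\rangle - \langle x^t, D_{\eta}^{-1}x^t\rangle$. The sum therefore telescopes to $\langle x^{T+1}, D_{\eta}^{-1}x^{T+1}\rangle - \langle x^0, D_{\eta}^{-1}x^0\rangle$, which is precisely the claimed closed form for the total utility.

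For the inequality I would use only that $D_{\eta}$ is a diagonal matrix with strictly positive learning rates on its diagonal, hence $D_{\eta}^{-1}$ is positive-definite. This gives $\langle x^{T+1}, D_{\eta}^{-1}x^{T+1}\rangle \geq 0$, and subtracting $\langle x^0, D_{\eta}^{-1}x^0\rangle$ from both sides yields the stated bound $\geq -\langle x^0, D_{\eta}^{-1}x^0\rangle$.

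I do not anticipate a genuine obstacle here: once the two-plays-per-strategy bookkeeping is fixed, the argument is the same symmetric-telescope computation as in Theorem~\ref{thm:regret} with the fixed comparator $x$ set to $\vec{0}$. The only point requiring care is the correct indexing of the utility sum --- capturing both the $x^{t}$ and $x^{t+1}$ contributions paired with each $y^t$, over the range $t=0,\dots,T$ --- since an index shift would break the clean telescoping cancellation and the resulting closed form.
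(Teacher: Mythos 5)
Your proposal is correct and follows essentially the same route as the paper's proof: both convert $Ay^t$ to $D_{\eta}^{-1}(x^{t+1}-x^t)$ via line \ref{line:agent1}, exploit the symmetry of $D_{\eta}^{-1}$ to telescope $\sum_{t=0}^{T}\langle x^{t+1}+x^t, Ay^t\rangle$ down to $\langle x^{T+1}, D_{\eta}^{-1}x^{T+1}\rangle-\langle x^{0}, D_{\eta}^{-1}x^{0}\rangle$, and then invoke positive-definiteness of $D_{\eta}^{-1}$ for the lower bound. Your additional care about the two-plays-per-strategy indexing matches the paper's convention for utility under alternating play, so there is no gap.
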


\begin{proof}
Following identically to the proof of Theorem \ref{thm:regret},
	\begin{align*}
\sum_{t=0}^T\left\langle x^{t+1}+x^t,  Ay^t\right\rangle
&= 
\sum_{t=0}^T\left\langle x^{t+1}+x^t,  	  D^{-1}_{{\eta}}(x^{t+1}-x^t)\right\rangle\\
&= 
\sum_{t=0}^T\left(\left\langle x^{t+1},  	  D^{-1}_{{\eta}}x^{t+1}\right\rangle-\left\langle x^t,  	  D^{-1}_{{\eta}}x^t\right\rangle\right)\\
&= 
\langle x^{T+1}, D_{\eta}^{-1}x^{T+1}\rangle-\langle x^{0}, D_{\eta}^{-1}x^{0}\rangle \geq -\langle x^{0}, D_{\eta}^{-1}x^{0}\rangle.
\end{align*}
The lower bound follows since $D_{\eta}$ is positive-definite implying $\langle x^{T+1}, D_{\eta}^{-1}x^{T+1}\rangle\geq 0$.
\end{proof}

Recalling that $D_{\eta}$ is positive-definite, the bound $-\langle x^0, D_{\eta}^{-1} x^0\rangle <0$ and converges to $0$ as the learning rate grows large, i.e., {by using an arbitrarily large learning rate, an agent can guarantee that they lose arbitrarily little utility}. 
This is contrary to most online learning algorithms that suggest small, relatively unresponsive learning rates from agents. 
Admittedly, Theorem \ref{thm:large1} only provides a lower bound that depends on the learning rates and says little about the cumulative utility as a function of the learning rate $\eta$.  
However, in Theorem \ref{thm:large2}, we show that {an agent is better served with large learning rates when playing against an unresponsive agent}.

\begin{theorem}\label{thm:large2}
	If agent 1 is playing against an oblivious, non-equilibrating opponent -- i.e., if $\{y^t\}_{t=0}^\infty$ is independent of $\{x^t\}_{t=0}^\infty$ and $\sum_{t=0}^TAy^t\neq \vec{0}$ -- then agent 1 can make their utility arbitrarily high after updating in the $(T+1)th$ iteration by making $\eta$ arbitrarily high. 
\end{theorem}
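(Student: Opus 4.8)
The plan is to reuse the closed-form utility expression from Theorem \ref{thm:large1} and then exploit obliviousness to turn agent 1's final strategy into an explicit function of $\eta$. Recall that Theorem \ref{thm:large1} already establishes that agent 1's total utility after updating in iteration $(T+1)$ equals $\langle x^{T+1}, D_\eta^{-1} x^{T+1}\rangle - \langle x^0, D_\eta^{-1} x^0\rangle$. The key idea is that since the opponent is oblivious, the sequence $\{y^t\}$ — and hence each gradient $Ay^t$ — is fixed as we vary the learning rate. Telescoping line \ref{line:agent1} of Algorithm \ref{alg:2Agent} then gives $x^{T+1} = x^0 + D_\eta \sum_{t=0}^T Ay^t = x^0 + D_\eta S$, where $S := \sum_{t=0}^T Ay^t$ is a fixed vector independent of $\eta$, and $S \neq \vec{0}$ by the non-equilibrating hypothesis.

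First I would substitute $x^{T+1} = x^0 + D_\eta S$ into the utility expression. Expanding $\langle x^0 + D_\eta S,\, D_\eta^{-1}(x^0 + D_\eta S)\rangle$ and using that $D_\eta$ is symmetric (so that $\langle D_\eta S,\, D_\eta^{-1}x^0\rangle = \langle S, x^0\rangle$) collapses the utility to
\[
	2\langle x^0, S\rangle + \langle S, D_\eta S\rangle .
\]
The first term is a constant independent of $\eta$, while the second is the quadratic form of the positive-definite matrix $D_\eta$ evaluated at the fixed nonzero vector $S$. Writing $D_\eta = \mathrm{diag}(\eta)$ makes this completely explicit: $\langle S, D_\eta S\rangle = \sum_i \eta_i S_i^2$. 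Since $S \neq \vec{0}$, some coordinate satisfies $S_i \neq 0$, so driving the corresponding $\eta_i \to \infty$ (holding the other learning rates fixed) sends $\langle S, D_\eta S\rangle \to \infty$ and hence the utility to $\infty$, which is the claim.

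I do not expect a genuine obstacle here; the work is entirely in recognizing the right reduction. The one point to handle carefully is verifying that $S$ truly does not depend on $\eta$: this is exactly the content of the oblivious assumption ($\{y^t\}$ independent of $\{x^t\}$), which guarantees that changing agent 1's learning rate cannot feed back into the opponent's play and thereby alter $S$. Granting that, the conclusion follows immediately from the monotonicity of the quadratic form $\eta \mapsto \langle S, D_\eta S\rangle$ in the learning rates, and this reduction also clarifies what ``arbitrarily high $\eta$'' should mean — it suffices to increase any single learning rate aligned with a nonzero coordinate of the aggregate gradient $S$.
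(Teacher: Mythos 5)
Your proof is correct and takes essentially the same route as the paper: telescope line \ref{line:agent1} to get $x^{T+1} = x^0 + D_\eta \sum_{t=0}^T A y^t$, substitute into the utility expression from Theorem \ref{thm:large1}, and expand to $2\langle x^0, S\rangle + \langle S, D_\eta S\rangle$ with $S = \sum_{t=0}^T A y^t$, which diverges as the learning rates grow. Your added care — noting that obliviousness is exactly what makes $S$ independent of $\eta$, and that it suffices to send to infinity a single learning rate aligned with a nonzero coordinate of $S$ — only makes explicit what the paper leaves implicit.
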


\begin{proof}
	Agent 1's total utility is 
	\begin{align*}
		\langle x^{T+1}, D_{\eta}^{-1} x^{T+1} \rangle - \langle x^{0}, D_{\eta}^{-1} x^{0} \rangle 
		&=\left\langle x^0 + D_\eta\sum_{t=0}^T Ay^t, D_{\eta}^{-1} \left(x^0 + D_\eta\sum_{t=0}^T Ay^t\right) \right\rangle - \langle x^{0}, D_{\eta}^{-1} x^{0} \rangle \\
		&=2\left\langle x^0, \sum_{t=0}^T Ay^t\right \rangle+ \left\langle  D_\eta\sum_{t=0}^T Ay^t,\sum_{t=0}^T Ay^t\right\rangle\\
		&\to \infty \ as \ \eta\to \infty
	\end{align*}
	thereby completing the proof of the theorem. 
\end{proof}

\subsection{Self-Actualization}\label{sec:Actualization}

Next, we show that in order to maximize agent 1's regret for not playing $x\in {\cal X}$, Algorithm \ref{alg:2Agent} will actually force agent 1 to play $x$. 
We refer to this property as \textit{self-actualization}. 
Once agent 1 regrets not playing the strategy $x$ as much as possible, the agent will realize that strategy.

\begin{theorem}\label{thm:Actualization}
	Suppose agent 1 updates their strategies with Algorithm \ref{alg:2Agent}. 
	If the opponent's actions $\{y^0,...,y^T\}$ maximize agent 1's regret after agent 1 updates in the $(T+1)th$ iteration for not playing the fixed strategy $x$, then $x^{T+1}=x$. 
\end{theorem}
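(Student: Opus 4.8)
The plan is to reduce the whole statement to the elementary optimization already carried out inside the proof of Theorem~\ref{thm:regret}. That proof shows agent~1's total regret after updating in iteration $(T+1)$ equals
\[
\left\langle x^0 - 2x, D_\eta^{-1} x^0\right\rangle + \left\langle 2x - x^{T+1}, D_\eta^{-1} x^{T+1}\right\rangle,
\]
and the first summand does not depend on the opponent at all. Hence the opponent influences the regret only through the single vector $x^{T+1}$, and maximizing regret is exactly maximizing $f(w) := \langle 2x - w, D_\eta^{-1} w\rangle$ evaluated at $w = x^{T+1}$.

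The second step is to record the properties of $f$. Writing $f(w) = 2\langle x, D_\eta^{-1} w\rangle - \langle w, D_\eta^{-1} w\rangle$ and using that $D_\eta^{-1}$ is symmetric, its gradient is $\nabla f(w) = 2 D_\eta^{-1}(x - w)$ and its Hessian is $-2 D_\eta^{-1} \prec 0$. Thus $f$ is strictly concave with a unique critical point at $w = x$, which is therefore its unique global maximizer --- precisely the fact invoked for the inequality in Theorem~\ref{thm:regret}. Consequently, if the opponent can steer $x^{T+1}$ all the way to $x$, then every regret-maximizing choice must satisfy $x^{T+1} = x$, because any other value yields a strictly smaller $f$.

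The third step is a reachability check: I would confirm the opponent can indeed realize $x^{T+1} = x$. Line~\ref{line:agent1} of Algorithm~\ref{alg:2Agent} gives $x^{T+1} = x^0 + D_\eta \sum_{t=0}^T A y^t = x^0 + D_\eta A\left(\sum_{t=0}^T y^t\right)$, so $x^{T+1}$ ranges over the affine set $x^0 + D_\eta \cdot \mathrm{col}(A)$. Choosing $\{y^t\}$ with $\sum_{t=0}^T y^t = s$ for any $s$ solving $A s = D_\eta^{-1}(x - x^0)$ --- e.g. $y^0 = s$ and $y^1 = \cdots = y^T = \vec{0}$ --- forces $x^{T+1} = x$, and such an $s$ exists whenever $A$ is invertible (consistent with the invertibility assumption used in the neighboring Proposition~\ref{prop:BadRegret}). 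Combining the three steps, the regret-maximizing opponent attains the unique maximizer of $f$, namely $x^{T+1} = x$.

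The main obstacle is exactly this reachability point. When $A$ is not surjective, $x$ need not lie in $x^0 + D_\eta \cdot \mathrm{col}(A)$, and the constrained maximizer of the strictly concave $f$ over that affine subspace is the ($D_\eta^{-1}$-orthogonal) projection of $x$ onto it rather than $x$ itself; in that degenerate case the conclusion $x^{T+1} = x$ can fail, so the clean statement really wants $x$ to be reachable (guaranteed by invertibility of $A$). I would therefore either carry the invertibility hypothesis explicitly or weaken the conclusion to ``$x^{T+1}$ equals the reachable strategy closest to $x$,'' recovering the stated equality $x^{T+1} = x$ in the full-rank case.
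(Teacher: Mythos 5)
Your proposal follows the same skeleton as the paper's own (very short) proof: the paper argues that, by Theorem~\ref{thm:regret}, the total regret equals an opponent-independent term plus $f(x^{T+1})$ where $f(w)=\langle 2x-w, D_{\eta}^{-1}w\rangle$, and that $f$ has the unique global maximizer $w=x$; your first two steps reproduce exactly this. Where you genuinely depart from the paper is your third step, the reachability check, and it is an addition the paper actually needs rather than a redundancy: the paper implicitly maximizes $f$ over all of $\mathbb{R}^{S_1}$, but the opponent can only steer $x^{T+1}$ through the affine set $x^0 + D_{\eta}\,\mathrm{col}(A)$, since $x^{T+1}=x^0+D_{\eta}A\sum_{t=0}^{T}y^t$. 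Your analysis of the degenerate case is also correct: writing $f(w)=\lVert x\rVert^2_{D_{\eta}^{-1}}-\lVert w-x\rVert^2_{D_{\eta}^{-1}}$ shows that when $x$ is not reachable, regret-maximizing sequences still exist (the maximum of this coercive, strictly concave function over the closed affine reachable set is attained, at the $D_{\eta}^{-1}$-projection of $x$), yet they force $x^{T+1}\neq x$, so Theorem~\ref{thm:Actualization} as stated can fail without a surjectivity/reachability hypothesis on $A$ (e.g., invertibility, as assumed in the neighboring Proposition~\ref{prop:BadRegret}). In short: your proof is correct, its core is the paper's argument, and your reachability caveat exposes a genuine gap in the paper's one-line proof rather than a flaw in your own.
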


Ordinarily, we would have to be quite careful in making this claim and trying to prove it. 
Altering the sequence $\{y^t\}_{t=0}^T$ alters agent 1's sequence $\{x^t\}_{t=1}^T$ and there it seems difficult to explicitly give a sequence $\{y^t\}_{t=0}^T$ that maximize agent 1's regret. 
However, the proof of Theorem \ref{thm:regret} is quite strong -- the total regret  relies {only} on $x^0$ and $x^{T+1}$. 
The proof of Theorem \ref{thm:Actualization} follows immediately from Theorem \ref{thm:regret} since the upper bound was found using the unique optimizer $x^{T+1}=x$.

\subsection{Conservation of Volume in general-sum games}\label{sec:ConservationVolume}

In this section, we examine the volume expansion/contraction properties of Algorithm \ref{alg:2Agent}. 
Formally, let $V^0\subseteq {\cal X}\times {\cal Y}$ be a measurable set of initial conditions and let $V^t$ be the set obtained after updating every point in $V^{t-1}$ with Algorithm \ref{alg:2Agent} (see Figure \ref{fig:Volume}).  
Formally, $V^t=\bigcup_{\{x,y\}\in V^{t-1}}\left\{x+\eta A y, y +\gamma  B (x+\eta A y) \right\}$. 
We compare the volume of $V^0$ to $V^t$; specifically, we show that this volume is invariant.

\begin{figure}[H]
	\def\S{0.35}
	\centering
	\begin{subfigure}[b]{0.32\textwidth}
		\centering
		{\includegraphics[trim=0.9cm 0.5cm 1.75cm 2cm, clip=true,scale=\S]{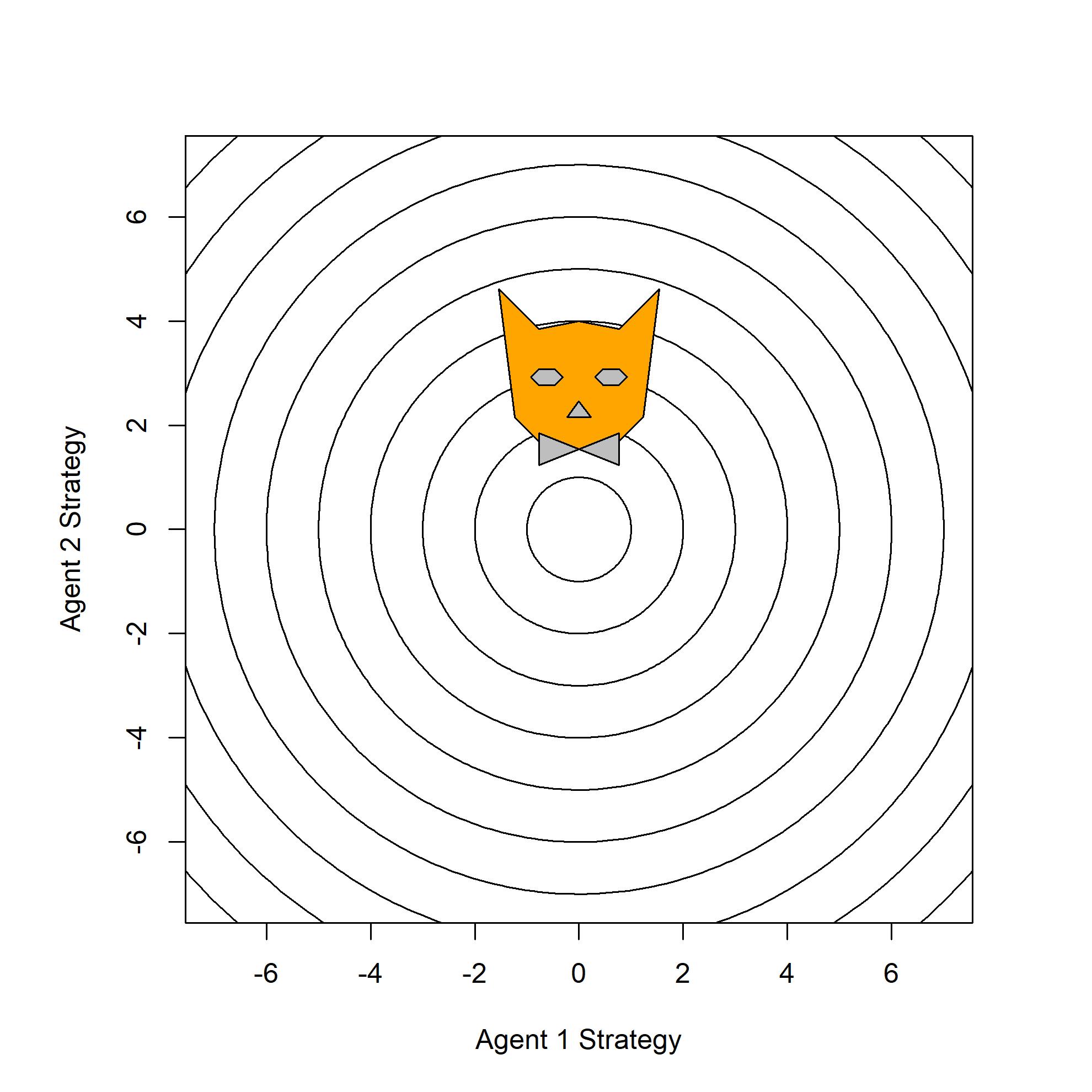}}
		\caption{After 1st Iteration: $V^1$}
	\end{subfigure}\hfill
	\begin{subfigure}[b]{0.33\textwidth}
		\centering	
		{\includegraphics[trim=0.9cm 0.5cm 1.75cm 2cm, clip=true,scale=\S]{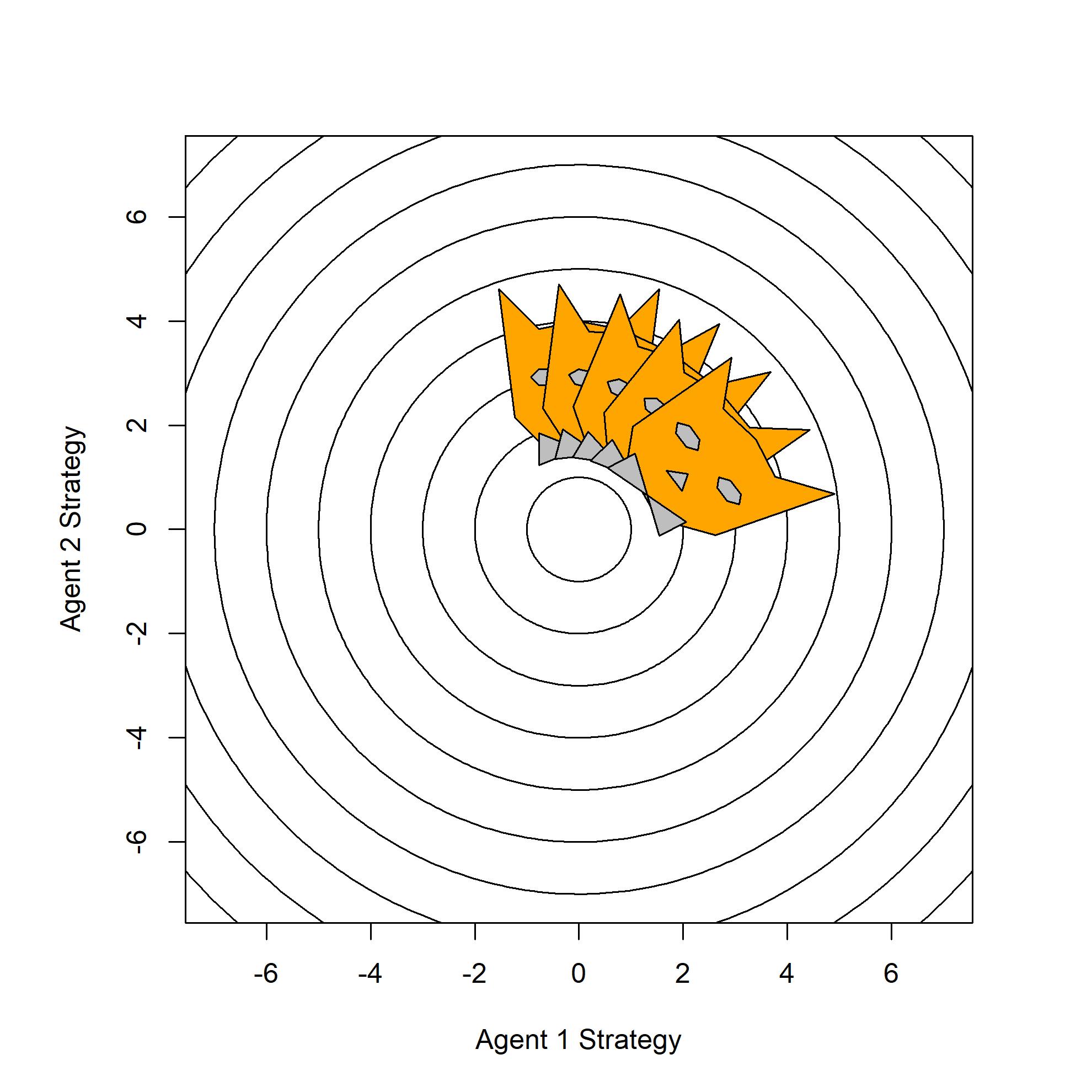}}
		\caption{After 5th Iteration: $V^1$ to $V^5$}
	\end{subfigure}\hfill
	\begin{subfigure}[b]{0.32\textwidth}
		\centering
		{\includegraphics[trim=0.9cm 0.5cm 1.75cm 2cm, clip=true,scale=\S]{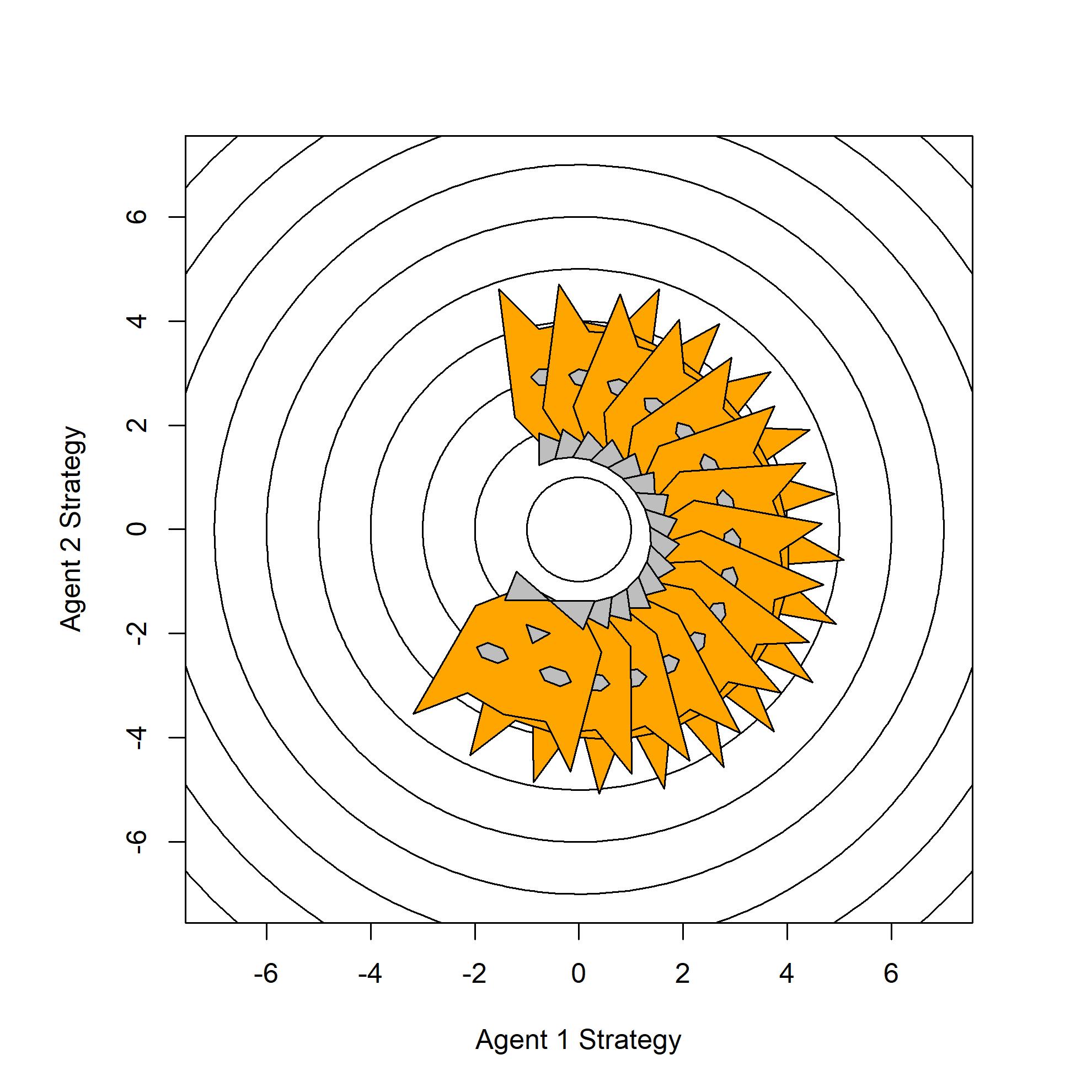}}
		\caption{After 15th Iteration: $V^1$ to $V^{15}$}
	\end{subfigure}
	\caption{Evolution of Algorithm \ref{alg:2Agent} on 4 sets of initial conditions (a cat, a pair of eyes, a mouth, and a bow-tie) in a zero-sum game with $A=-B=[1]$ and $\eta=\gamma=0.25$ after 1, 5, and 15 iterations respectively. In each iteration, every point in each set is updated according to alternating gradient descent and the position/shape changes but volume is preserved (Theorem \ref{thm:Volume}).}
	\label{fig:Volume}
\end{figure}

On its own, volume conservation is nice stability property due to its close connection with Lyapunov chaos. 
Lyapunov chaos refers to a phenomenon in dynamical systems where a small perturbation in initial conditions may result in arbitrarily different dynamical systems. 
Specifically, volume expansion implies that a small perturbation to the initial conditions can result in drastically different trajectories. 
Formally, let $V^0$ be a relatively small measurable set of initial conditions. 
If the volume of $V^t$ goes to infinity, then there exists an iteration $T$ and two points $(x^T,y^T)$ and $(\bar{x}^T,\bar{y}^T)$ that are arbitrarily far apart.  
However, by definition of $V^t$, $(x^T,y^T)$ and $(\bar{x}^T,\bar{y}^T)$ evolve from $(x^0,y^0)\in V^0$ and $(\bar{x}^0,\bar{y}^0)\in V^0$.
This implies the two points, despite being close together initially, will diverge from one another over time.  

We show that alternating gradient descent is volume preserving in general-sum 2-agent games. 

\begin{theorem}\label{thm:Volume}
	Algorithm \ref{alg:2Agent} is volume preserving for any measurable set of initial conditions. 
\end{theorem}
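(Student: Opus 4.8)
The plan is to show that a single iteration of Algorithm \ref{alg:2Agent} is a measure-preserving map, by verifying that its Jacobian has determinant of absolute value exactly $1$. The change-of-variables theorem then gives $\mathrm{vol}(V^t)=\mathrm{vol}(V^{t-1})$ for every measurable set $V^{t-1}$, and the full claim follows by induction on $t$. The crucial structural observation that makes this clean is that one iteration factors as a composition of two elementary \emph{shear} maps, each of which is obviously volume preserving.

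First I would write the iteration $(x^{t-1},y^{t-1})\mapsto(x^t,y^t)$ as $\Phi_2\circ\Phi_1$, where line \ref{line:agent1} gives the shear $\Phi_1(x,y)=(x+D_\eta A y,\,y)$ and line \ref{line:agent2} gives the shear $\Phi_2(x,y)=(x,\,y+D_\gamma B x)$. Because each map is (affine-)linear, its Jacobian is the constant block matrix
\[
D\Phi_1=\begin{pmatrix} I & D_\eta A \\ 0 & I \end{pmatrix},
\qquad
D\Phi_2=\begin{pmatrix} I & 0 \\ D_\gamma B & I \end{pmatrix}.
\]
Each Jacobian is block triangular with identity blocks on the diagonal, so $\det D\Phi_1=\det D\Phi_2=1$. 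By multiplicativity of the determinant, the Jacobian of the full iteration satisfies $\det\bigl(D\Phi_2\,D\Phi_1\bigr)=1$, and hence $\lvert\det\rvert=1$ everywhere.

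Finally I would invoke the change-of-variables formula, $\mathrm{vol}(V^t)=\int_{V^{t-1}}\lvert\det D(\Phi_2\circ\Phi_1)\rvert\,=\mathrm{vol}(V^{t-1})$, and conclude by induction. I expect there to be no genuinely hard step: the entire content is the shear decomposition, after which the determinant computation is forced. The one point worth emphasizing is that volume preservation is robust — it depends on neither the structure of $A,B$ (so it holds for general-sum games, not just zero-sum) nor on the learning rates, since the off-diagonal blocks $D_\eta A$ and $D_\gamma B$ never enter the determinant of a block-triangular matrix. I would also remark that the same argument carries over verbatim to the affine utilities $\langle x_i,\sum_{j\neq i}A^{(ij)}x_j-b_i\rangle$ of Section \ref{sec:bilinear}, because the constant shift $b_i$ does not affect the Jacobian.
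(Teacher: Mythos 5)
Your proposal is correct and follows essentially the same route as the paper's proof: both decompose one iteration into the two shear maps given by lines \ref{line:agent1} and \ref{line:agent2}, observe that each Jacobian is block triangular with identity diagonal blocks (hence has determinant $1$), and conclude via the change-of-variables theorem. Your added remarks on multiplicativity of the determinant, independence from $A$, $B$, and the learning rates, and the extension to affine payoffs are consistent with observations the paper makes elsewhere (e.g., its remark on time-varying learning rates and Section \ref{sec:bilinear}).
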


\begin{proof}
	Algorithm \ref{alg:2Agent} can be expressed as the two separate updates below.
	\begin{align*}
		\left[\begin{array}{c} x^{t+1}\\ y^t\end{array}\right] &\gets \left[\begin{array}{c} x^{t}+D_{\eta}Ay^t\\ y^t\end{array}\right] \tag{Line \ref{line:agent1} of Algorithm \ref{alg:2Agent}}\\
		\left[\begin{array}{c} x^{t+1}\\ y^{t+1}\end{array}\right] &\gets \left[\begin{array}{c} x^{t+1}\\ y^{t}+D_{\gamma}Bx^{t+1}\end{array}\right] \tag{Line \ref{line:agent2} of Algorithm \ref{alg:2Agent}}
	\end{align*}
	To show that the combined updates preserve volume, it suffices to show that each individual update preserves volume.  
	Thus, it suffices to show that the absolute value of the determinant of the Jacobian for each update is 1 \cite[Theorem 7.26]{rudin1987real}. 
	The Jacobians for the updates are
	\begin{align*}
	J_1=\left[\begin{array}{c c} I_{\cal X} & D_{\eta} A\\ 0 & I_{\cal Y}\end{array}\right]  \tag{Line \ref{line:agent1} Jacobian}\\
	J_2=\left[\begin{array}{c c} I_{\cal X} & 0\\ D_{\gamma}B & I_{\cal Y}\end{array}\right]  \tag{Line \ref{line:agent2} Jacobian}
	\end{align*}
	where $I_{\cal X}$ and $I_{\cal Y}$ are the identity matrices with the same dimension as ${\cal X}$ and ${\cal Y}$ respectively. 
	
	Since both Jacobians are block triangular with zeros on the subdiagonal and superdiagonal respectively, their corresponding determinants are $\det(J_i)=\det(I_{\cal X})\cdot \det(I_{\cal Y})=1$ and therefore Algorithm \ref{alg:2Agent} preserves volume when updating a measurable set of strategies thereby completing the proof.
\end{proof}

\begin{remark}
	Volume conservation holds even if agents' learning rates change overtime ($\{\eta_t\}_{t=0}^T\}$) since the determinant of the Jacobian is independent of $\eta$.  
\end{remark}

Thus, alternating gradient descent preserves volume.  
This is in contrast to the standard implementation of gradient descent (Algorithm \ref{alg:GradientMulti}) where volume expands in zero-sum games (see  \cite{Marco2020Chaos} and Figure \ref{fig:Volume2}).

\begin{figure}[H]
	\def\S{0.35}
	\centering
	\begin{subfigure}[b]{0.45\textwidth}
		\centering
		{\includegraphics[trim=0.9cm 0.5cm 1.75cm 2cm, clip=true,scale=\S]{AltGD15.jpg}}
		\caption{Volume Conservation of Algorithm \ref{alg:2Agent}.}
	\end{subfigure}\hfill
	\begin{subfigure}[b]{0.45\textwidth}
		\centering
		{\includegraphics[trim=0.9cm 0.5cm 1.75cm 2cm, clip=true,scale=\S]{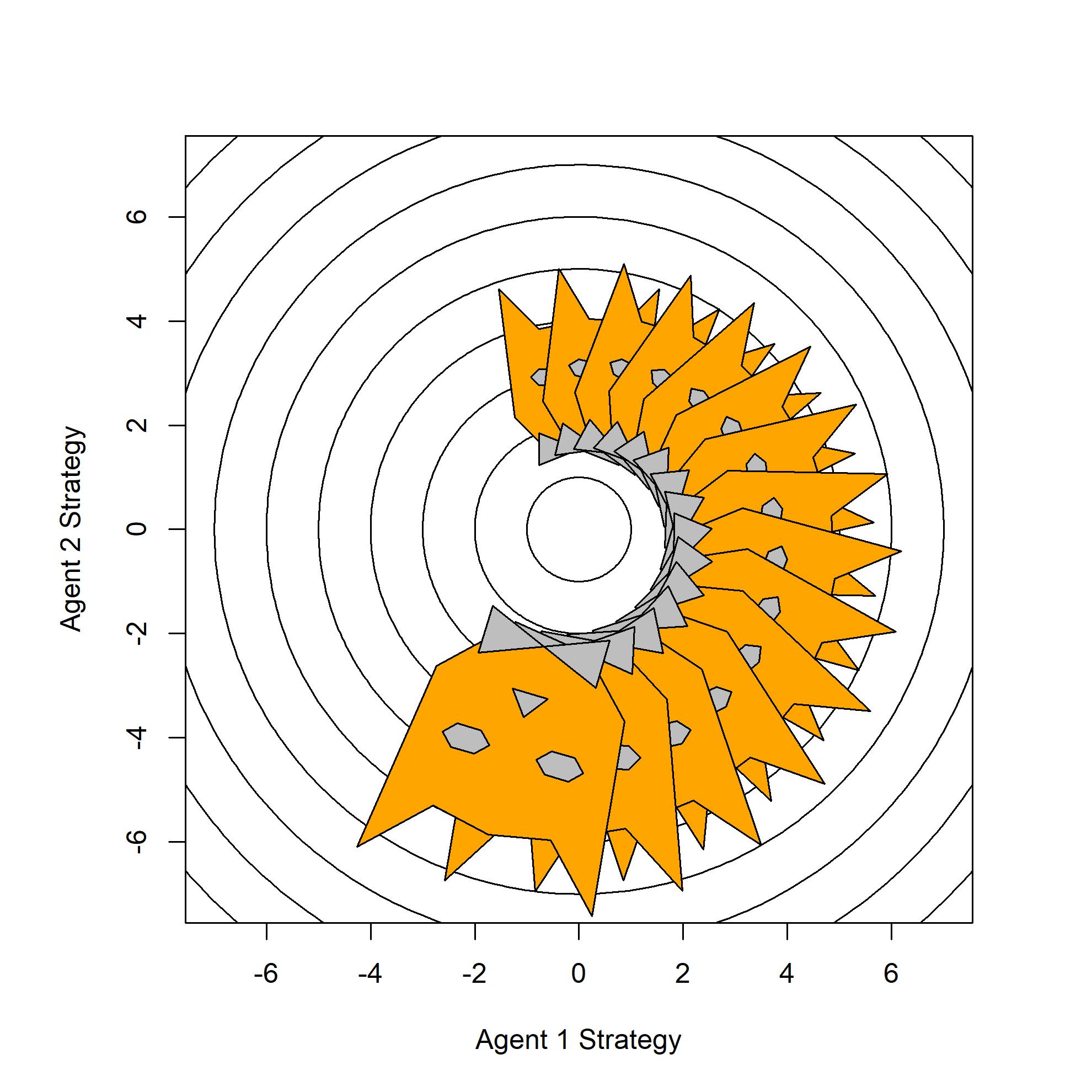}}
		\caption{Volume Expansion of Algorithm \ref{alg:GradientMulti}.}
	\end{subfigure}
	\caption{Evolution of Algorithms \ref{alg:2Agent} and \ref{alg:GradientMulti} respectively after 15 iterations.  Both algorithms start with the same set of initial strategies. Volume is preserved with when updating with Algorithm \ref{alg:2Agent} but expands when using Algorithm \ref{alg:GradientMulti}.}
	\label{fig:Volume2}
\end{figure}

Regrettably however, volume conservation is insufficient to avoid Lyapunov chaos; 
in Lemma \ref{prop:chaos2}, we show that two points can still move arbitrarily far apart in the setting of a coordination game as depicted in Figure \ref{fig:Volume3}. 

\begin{figure}[H]
	\def\S{0.35}
	\centering
	{\includegraphics[trim=0.9cm 0.5cm 1.75cm 2cm, clip=true,scale=\S]{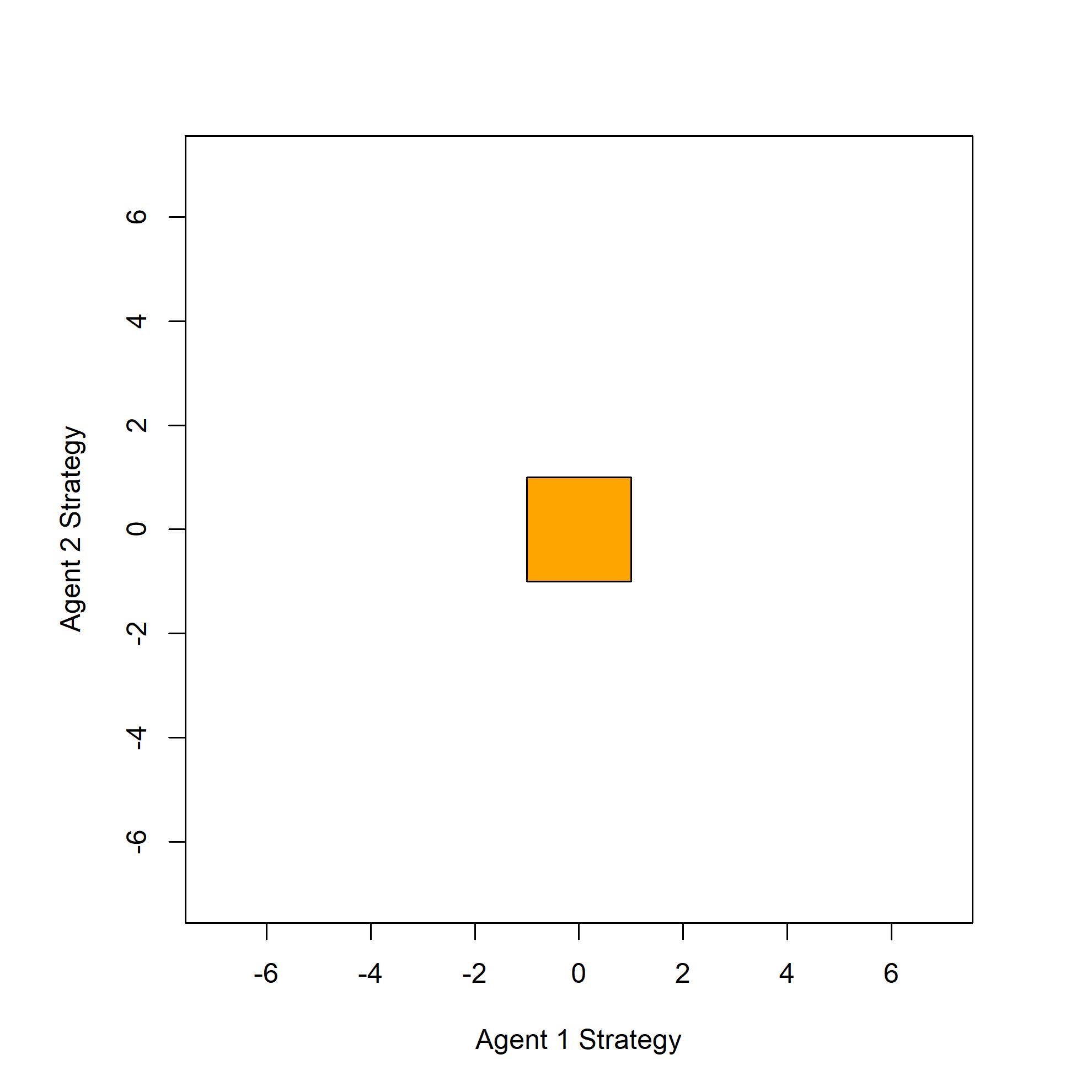}}\hfill
	{\includegraphics[trim=0.9cm 0.5cm 1.75cm 2cm, clip=true,scale=\S]{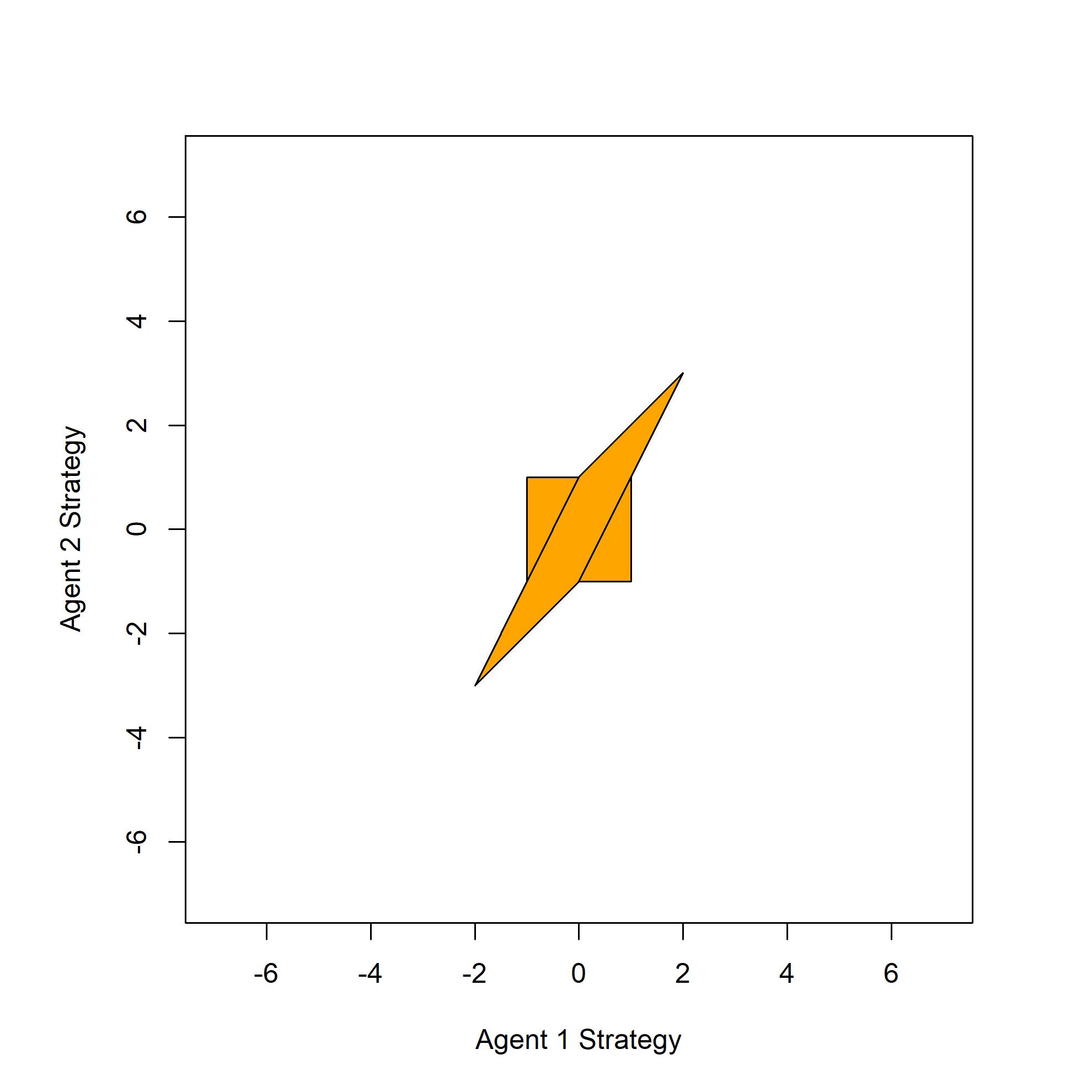}}\hfill
	{\includegraphics[trim=0.9cm 0.5cm 1.75cm 2cm, clip=true,scale=\S]{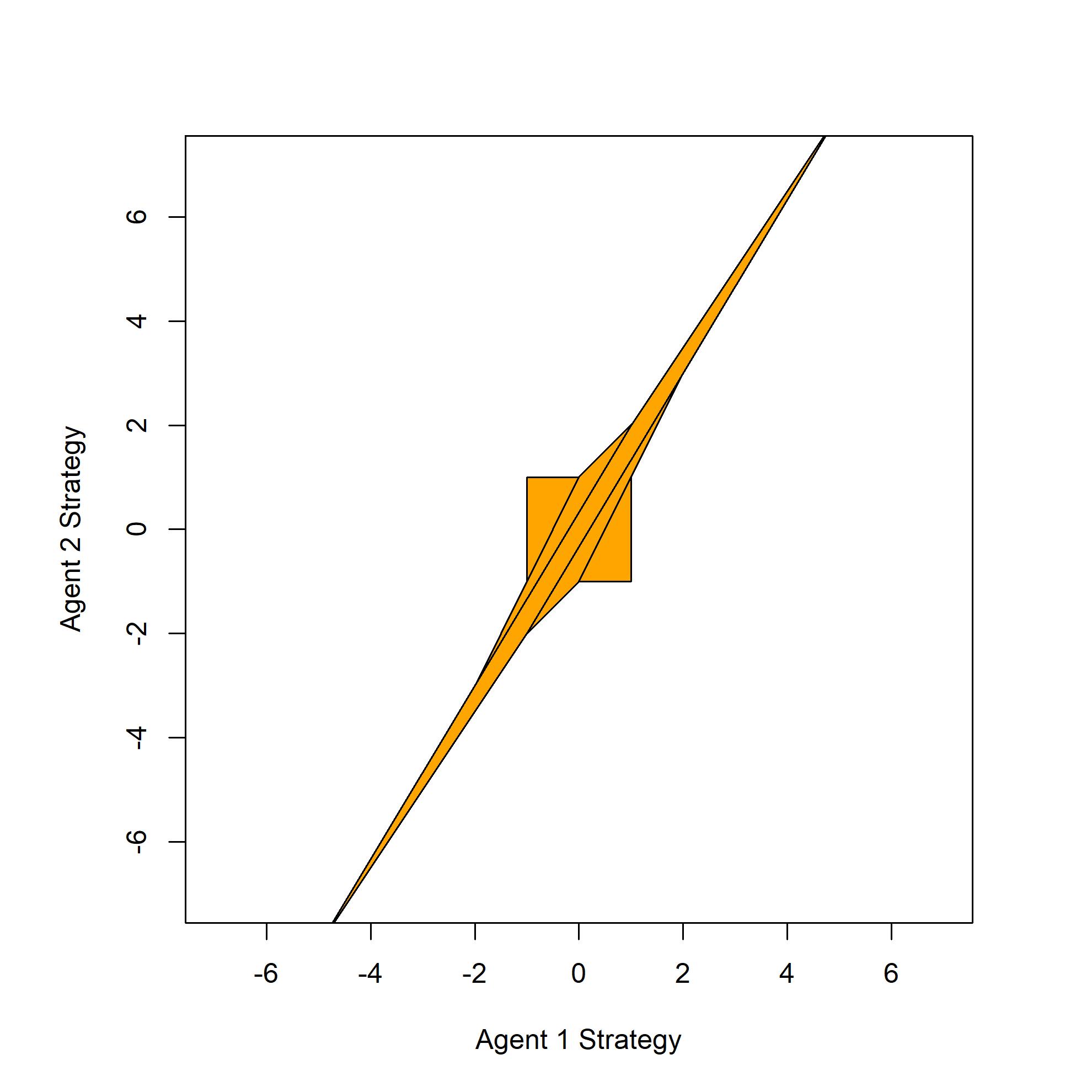}}
	\caption{Evolution of $[-1,1]\times [-1,1]$ using $A=B=[1]$ and $\eta=\gamma=1$ after 1, 2, and 3 iterations respectively.  Volume is preserved with Algorithm \ref{alg:2Agent} but the diameter grows exponentially.}
	\label{fig:Volume3}
\end{figure}

\begin{restatable}[]{lemma}{Chaos}\label{prop:chaos2}
	Let  $A=B=[1]$ and $V^0=[-1,1]\times [-1,1]$ with $\eta=\gamma =1$. The volume of $V^t$ is 4 while the diameter of $V^t$ is in $\Theta(\phi^{2t})$ where $\phi=(1+\sqrt{5})/2$ is the golden-ratio. 
\end{restatable}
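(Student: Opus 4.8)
The plan is to recognize that with the scalar learning rates $\eta=\gamma=1$ and $A=B=[1]$, one full iteration of Algorithm \ref{alg:2Agent} is a single \emph{linear} map on $\mathbb{R}^2$. Substituting $x^t = x^{t-1}+y^{t-1}$ (line \ref{line:agent1}) into $y^t = y^{t-1}+x^t$ (line \ref{line:agent2}) gives $y^t = x^{t-1}+2y^{t-1}$, so
\begin{align*}
\begin{pmatrix} x^t \\ y^t\end{pmatrix} = M \begin{pmatrix} x^{t-1}\\ y^{t-1}\end{pmatrix}, \qquad M = \begin{pmatrix} 1 & 1 \\ 1 & 2\end{pmatrix},
\end{align*}
and hence $V^t = M^t V^0$ with $V^0 = [-1,1]^2$. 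The volume claim is then immediate: $\det M = 2-1 = 1$ (this is exactly Theorem \ref{thm:Volume} specialized to this game), so each iteration preserves Lebesgue measure, and since $V^0$ has area $4$ every $V^t$ has area $4$.

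For the diameter the key observation is that $M$ is symmetric with characteristic polynomial $\lambda^2 - 3\lambda + 1$, whose roots are $(3\pm\sqrt5)/2$. Since $\phi^2 = (3+\sqrt5)/2$ and $\phi^{-2} = (3-\sqrt5)/2$, the eigenvalues of $M$ are exactly $\phi^{2}$ and $\phi^{-2}$ with orthogonal eigenvectors; thus $M^t$ is symmetric with eigenvalues $\phi^{2t},\phi^{-2t}$ and operator norm $\|M^t\| = \phi^{2t}$. The upper bound follows from $\mathrm{diam}(M^t V^0) \le \|M^t\|\,\mathrm{diam}(V^0) = 2\sqrt2\,\phi^{2t}$. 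For the matching lower bound I would track the expanding eigendirection: letting $u$ be a unit eigenvector for $\phi^2$, the box $V^0$ contains $\pm u$ (any unit vector lies in $[-1,1]^2$), and $M^t(\pm u) = \pm\phi^{2t}u$ are two points of $V^t$ at distance $2\phi^{2t}$, so $\mathrm{diam}(V^t)\ge 2\phi^{2t}$. Combining the two bounds gives $\mathrm{diam}(V^t)\in\Theta(\phi^{2t})$.

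The computations are all routine, so I do not anticipate a genuine obstacle; the only points requiring care are forming the one-step map correctly — remembering that the alternating update feeds the already-updated $x^t$ into line \ref{line:agent2}, which is precisely what produces the off-diagonal coupling and the dominant eigenvalue $\phi^2$ rather than $\phi$ — and establishing the lower bound by exhibiting an actual expanding segment inside $V^0$ rather than merely invoking the operator norm.
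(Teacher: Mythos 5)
Your proof is correct, and for the diameter claim it takes a genuinely different route from the paper. Both proofs begin identically: the iteration is the linear map $(x,y)\mapsto(x+y,\,x+2y)$, and the volume claim follows from the unit determinant (equivalently, Theorem \ref{thm:Volume}). But for the diameter the paper proceeds combinatorially: it proves by induction that the extreme points of $V^t$ are exactly $\bigl(\pm (F_{2t-2}, F_{2t-1}), \pm (F_{2t+1}, F_{2t+2})\bigr)$, where $F_k$ is the $k$th Fibonacci number, and then invokes the standard asymptotic $F_k\in\Theta(\phi^k)$. You instead diagonalize the symmetric one-step matrix $M=\left(\begin{smallmatrix}1&1\\1&2\end{smallmatrix}\right)$, identify its eigenvalues as $\phi^{\pm 2}$, and sandwich the diameter between $2\phi^{2t}$ (image of the expanding unit eigenvector $\pm u\in V^0$) and $2\sqrt{2}\,\phi^{2t}$ (operator-norm bound). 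Your spectral argument is shorter, avoids the induction entirely, makes the appearance of $\phi^2$ transparent rather than inherited from Fibonacci asymptotics, and yields clean explicit constants; your care in exhibiting an actual expanding segment for the lower bound (rather than citing $\lVert M^t\rVert$ alone) is exactly right, since the operator norm only gives an upper bound. What the paper's approach buys in exchange is sharper geometric information: it pins down the exact vertex set of $V^t$ at every iteration, not just the growth rate of its diameter, which is what Figure \ref{fig:Volume3} depicts.
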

We show that the extreme points of $V^t$ are $(\pm (F_{2t-2}, F_{2t-1}), \pm (F_{2t+1}, F_{2t+2})$ where $F_k$ is the $k$th Fibonacci number.  
The result regarding the diameter immediately follows since it is well-known that $F_k\in \Theta(\phi^k)$.
Finally, the volume of $V^0$ is 4, implying the volume of $V^t$ is also 4 since volume is invariant (Theorem \ref{thm:Volume}). 
The full proof appears in Appendix \ref{app:Chaos}.

\section{2-Agent Positive-Negative Definite (Zero-Sum) Games}\label{sec:ZeroSum}

In this section, we introduce a new class of games that includes all zero-sum games and show that Algorithm \ref{alg:2Agent} results in strategies that are bounded (Theorem \ref{thm:BoundedOrbits}), are Poincar\'{e} recurrent (Theorem \ref{thm:Recurrence}), and have $O\left( 1/T\right)$ time-average convergence to the set of Nash equilibria (Theorem \ref{thm:converge}). 
Specifically, we study a generalization of zero-sum games that allows each agent to multiply their payoff matrices by arbitrary positive definite matrices $P$ and $Q$ respectively, i.e., $P=P^\intercal$ and $\langle x,Px\rangle >0$ for all $x\in \mathbb{R}^{S_i}$. 

\begin{equation}
\begin{aligned}
\max_{x\in {\cal X}} \ &\langle x, P A y\rangle \\
\max_{y\in {\cal Y}} \ &\langle y, -Q A^\intercal x\rangle \\
\end{aligned}\tag{Positive-Negative Definite Game}\label{eqn:posneg}
\end{equation}

We remark that recurrence (Theorem \ref{thm:Recurrence}) and bounded orbits (Theorem \ref{thm:BoundedOrbits}) were shown for zero-sum games (without positive definite transformations) with a scalar learning rate in \cite{Bailey2019Regret}. 	
Unlike the results for regret in Theorem \ref{thm:regret}, arbitrary learning rates are not allowed -- to obtain $O\left( 1/T\right)$ time-average convergence, the learning rates must be sufficiently small. 
Importantly, we show that Algorithm \ref{alg:2Agent} allows four times larger learning rates than required for optimistic gradient descent.

\subsection{Importance of Positive-Negative Definite Games}

Zero-sum games are only a measure zero set of positive-negative definite games and therefore our results drastically expand the applications of learning algorithms.
This is particularly important for many economic settings where the underlying games are somewhat adversarial but not necessarily zero-sum. 
In such settings, it is currently unknown whether results for zero-sum games extend and thus, the best known for an algorithm in a similar setting is poly$(\log(T))$ time-average convergence to the set of coarse correlated equilibria \cite{daskalakis2021nearoptimal} -- a class of equilibria significantly less important than the set of Nash equilibria. 
We introduce techniques to show that Algorithm \ref{alg:2Agent} results in $O(1/T)$ time-average convergence to the set of Nash equilibria (Theorem \ref{thm:converge}) in this setting. 
We remark that the proof techniques we introduce can likely be used to extend many results for zero-sum games to positive-definite transformations of zero-sum games for other algorithms, e.g., optimistic gradient descent.

Unlike zero-sum games, in (\ref{eqn:posneg}) agent 1's utility function uncovers no information about agent 2's utility function.  In contrast, in a zero-sum game, agent 1 always has knowledge of agent 2's payout and can directly compute the set of Nash equilibria as a result.  
As shown in Proposition \ref{prop:NoNash}, it is impossible for agent 1 to independently determine a Nash equilibrium in a positive-negative definite game.

\begin{proposition}\label{prop:NoNash}
	Unlike zero-sum games, agent 1 cannot determine the set of Nash equilibria with access only to agent 1's payoff matrix in (\ref{eqn:posneg}). 
\end{proposition}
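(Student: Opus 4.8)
The plan is to first pin down the set of Nash equilibria exactly, then show that agent 1's payoff matrix fails to determine it. For the game (\ref{eqn:posneg}) the effective payoff matrices are $A^{(12)}=PA$ and $A^{(21)}=-QA^\intercal$, so by the characterization $\sum_{j\neq i}A^{(ij)}x^*_j=\vec 0$ stated in the preliminaries, a pair $(x^*,y^*)$ is a Nash equilibrium iff $PAy^*=\vec 0$ and $QA^\intercal x^*=\vec 0$. Since $P$ and $Q$ are positive-definite, hence invertible, these conditions collapse to $Ay^*=\vec 0$ and $A^\intercal x^*=\vec 0$, so the Nash set is exactly $\ker(A^\intercal)\times\ker(A)$.

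Next I would isolate what agent 1 can and cannot recover from its own payoff matrix $M:=PA$. Because $P$ is invertible, $\ker(M)=\ker(A)$, so agent 1 can always determine the admissible equilibrium strategies $y^*$ for agent 2. The admissible $x^*$, however, form $\ker(A^\intercal)$, whereas from $M$ agent 1 can only access $\ker(M^\intercal)=\ker(A^\intercal P)=P^{-1}\ker(A^\intercal)$, which differs from $\ker(A^\intercal)$ unless $P$ is known. This is the crux: $M$ determines $A$ only up to a positive-definite left factor, and that ambiguity moves $\ker(A^\intercal)$.

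To turn this observation into an impossibility result, I would exhibit two positive-negative definite games that share the same agent-1 payoff matrix but have different Nash sets, so that agent 1 cannot distinguish them from $M$ alone. A clean choice is $M=\mathrm{diag}(1,0)$ in dimension two: the first game takes $P_1=I$, $A_1=M$, giving $\ker(A_1^\intercal)=\mathrm{span}\{(0,1)\}$; the second takes the positive-definite matrix $P_2$ with diagonal $(2,1)$ and off-diagonal entry $1$, and $A_2=P_2^{-1}M$, for which $\ker(A_2^\intercal)=\mathrm{span}\{(1,1)\}$ has nonzero first coordinate and is therefore not contained in $\mathrm{span}\{(0,1)\}$. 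Setting $Q=I$ in both games keeps them positive-negative definite, and since $P_1A_1=P_2A_2=M$, agent 1 observes the identical payoff matrix in both games while their equilibrium sets $\ker(A_1^\intercal)\times\ker(A_1)$ and $\ker(A_2^\intercal)\times\ker(A_2)$ differ, completing the argument.

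The main obstacle is not the computation but making the indistinguishability argument airtight: I must confirm that every factorization $M=PA$ with $P$ positive-definite yields a genuinely admissible instance of (\ref{eqn:posneg}), so that the two constructed games are both legitimate and truly indistinguishable to agent 1, and that the induced subspace $\ker(A^\intercal)$ genuinely varies with the choice of $P$. Verifying positive-definiteness of $P_2$ and that the two null spaces are distinct are the only routine checks that remain.
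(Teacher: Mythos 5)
Your proof is correct and follows essentially the same route as the paper: both exhibit two positive-negative definite games sharing the same agent-1 payoff matrix ($P_1A_1 = P_2A_2$, with your $M=\mathrm{diag}(1,0)$ playing the role of the paper's $PA$) but having different sets of equilibrium strategies for agent 1. Your explicit characterization of the Nash set as $\ker(A^\intercal)\times\ker(A)$ is a clean way of organizing the same counterexample idea; the paper's construction differs only in the specific matrices chosen.
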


\begin{proof}
	To prove the proposition, we give two different games $\{P_1,Q_1,A_1\}$ and $\{P_2,Q_2,A_2\}$ where agent 1 has the same payoff matrix in both games ($P_1A_1=P_2A_2$) but where agent 1's set of Nash equilibria are different in each game. 
	
	\begin{align*}
	P=\left[\begin{array}{r r} 1 & 0 \\ 0 & 1 \end{array}\right], Q=\left[\begin{array}{r r} 1 & 0 \\ 0 & 1 \end{array}\right], A=\left[\begin{array}{r r} 1 & -1 \\ -1 & 1 \end{array}\right]\tag{Matrices for First Game} 	
	\end{align*}
	With respect to this game, $PA=\left[\begin{array}{r r} 1 & -1 \\ -1 & 1 \end{array}\right]$ implying agent 2's set of Nash equilibria is $\{y\in \mathbb{R}^2: y_1=y_2\}$. 
	Similarly, $-QA^\intercal =\left[\begin{array}{r r} -1 & 1 \\ 1 & -1 \end{array}\right]$ implying agent 1's set of Nash equilibria is $\{x\in \mathbb{R}^2: x_1=x_2\}$. 
	
	\begin{align*}
	P=\left[\begin{array}{r r} 2 & 0 \\ 0 & 1 \end{array}\right], Q=\left[\begin{array}{r r} 1 & 0 \\ 0 & 1 \end{array}\right],   A=\left[\begin{array}{r r} 1/2 & -1/2 \\ -1 & 1 \end{array}\right]\tag{Matrices for Second Game} 	
	\end{align*}
	With respect to this game, $PA=\left[\begin{array}{r r} 1 & -1 \\ -1 & 1 \end{array}\right]$ and agent 2's Nash equilibria are unchanged. 
	However, $-QA^\intercal =\left[\begin{array}{r r} -1/2 & 1 \\ 1/2 & -1 \end{array}\right]$ implying agent 1's set of Nash equilibria are $\{x\in \mathbb{R}^2: x_1=2x_2\}$. 
\end{proof}

\begin{remark}
	The game introduced in the proof of Proposition \ref{prop:NoNash} is necessarily degenerate; since $\vec{0}$ is always a Nash equilibrium, for two games to have a different set of Nash equilibria one game must have multiple Nash equilibria. 
	In Section \ref{sec:bilinear}, we extend our results to a generalization of bimatrix games that allows for an arbitrary unique Nash equilibrium.  
	It is then straightforward to extend Proposition \ref{prop:NoNash} using two non-degenerate games. 
\end{remark}

\subsection{Using the Correct Basis}

The adversarial nature of (\ref{eqn:posneg}) is better revealed when examining the game in the bases induced by the transformations $P$ and $Q$.  
As such, we introduce the notion of a weighted normal to simply our proofs.

\begin{definition}\label{def:Weighted}
	Let $W$ be a positive definite matrix ($W=W^\intercal$ and $x^\intercal Wx >0$ for all $x\neq 0$).  Then the weighted-norm of the vector $x$ with respect to $W$ is $\lVert x \rVert_W=\sqrt{\langle x, Wx\rangle}$.
\end{definition}
	
	Weighted norms are often used in physics and dynamical systems to understand movement with respect to a non-standard set of basic vectors. 
	While the euclidean norm, $||\cdot||$, is well-suited when understanding systems defined by the standard set of basic vectors -- the columns of an identity matrix -- the dynamics of (\ref{eqn:posneg}) are best understood in the vector spaces induced by $P^{-1}$ and $Q^{-1}$.
	In addition, it will be useful to relate the standard Euclidean norm to the weighted-norm via the following lemma.

	\begin{lemma}\label{lem:norm}
		Suppose $W$ is positive-definite.  Then $\lVert x \rVert \leq \left\lVert W^{\frac{1}{2}}\right\rVert\cdot\lVert x\rVert_{W^{-1}}$
	\end{lemma}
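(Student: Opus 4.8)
The plan is to reduce the claim to submultiplicativity of the operator norm by factoring $W$ through its symmetric square root. Since $W$ is symmetric positive-definite, the spectral theorem gives a unique symmetric positive-definite square root $W^{1/2}$, and its inverse $W^{-1/2}:=(W^{1/2})^{-1}$ is again symmetric positive-definite with $W^{-1/2}W^{-1/2}=W^{-1}$. The governing observation is that the weighted norm $\lVert\cdot\rVert_{W^{-1}}$ is nothing more than the Euclidean norm pulled back through the linear map $W^{-1/2}$. Concretely, for any $x$,
\[
\lVert x\rVert_{W^{-1}}^2=\langle x,W^{-1}x\rangle=\langle W^{-1/2}x,\,W^{-1/2}x\rangle=\lVert W^{-1/2}x\rVert^2,
\]
where the middle equality uses symmetry of $W^{-1/2}$. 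Hence $\lVert x\rVert_{W^{-1}}=\lVert W^{-1/2}x\rVert$.

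With this identity in hand, the second step is the substitution $z:=W^{-1/2}x$, so that $x=W^{1/2}z$. Then
\[
\lVert x\rVert=\lVert W^{1/2}z\rVert\leq \bigl\lVert W^{1/2}\bigr\rVert\cdot\lVert z\rVert=\bigl\lVert W^{1/2}\bigr\rVert\cdot\lVert x\rVert_{W^{-1}},
\]
where the inequality is the defining submultiplicative property of the operator norm $\lVert W^{1/2}\rVert=\sup_{v\neq 0}\lVert W^{1/2}v\rVert/\lVert v\rVert$, and the final equality is the identity from the first step applied to $\lVert z\rVert=\lVert W^{-1/2}x\rVert$. This is exactly the asserted bound.

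I do not expect a genuine obstacle here; the entire content is recognizing that $\lVert\cdot\rVert_{W^{-1}}$ is the Euclidean norm after the change of variables $W^{-1/2}$. The only points needing a word of justification are (i) the existence and symmetry of $W^{\pm 1/2}$, which follow from the spectral theorem for the positive-definite matrix $W$, and (ii) being explicit that $\lVert W^{1/2}\rVert$ denotes the spectral/operator norm, so that $\lVert W^{1/2}v\rVert\leq\lVert W^{1/2}\rVert\,\lVert v\rVert$ is available. As an alternative one could argue purely via eigenvalues: $\langle x,W^{-1}x\rangle\geq \lambda_{\max}(W)^{-1}\lVert x\rVert^2$ since $\lambda_{\min}(W^{-1})=\lambda_{\max}(W)^{-1}$, together with $\lVert W^{1/2}\rVert^2=\lambda_{\max}(W)$, yields the same inequality after rearranging. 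I would present the square-root substitution as the primary argument since it is coordinate-free and makes the role of the weighted norm transparent.
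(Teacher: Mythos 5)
Your proof is correct and follows essentially the same route as the paper's: both first identify $\lVert x\rVert_{W^{-1}}=\lVert W^{-1/2}x\rVert$ via symmetry of the square root, then write $x=W^{1/2}(W^{-1/2}x)$ and apply the operator-norm bound to that factorization. The only cosmetic difference is that you name the substitution $z=W^{-1/2}x$ explicitly, whereas the paper performs the same step inline.
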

\begin{proof}
	First, observe that $\left\lVert W^{-\frac{1}{2}}x \right\rVert=\sqrt{\left\langle W^{-\frac{1}{2}}x,W^{-\frac{1}{2}}x\right\rangle}=\sqrt{\left\langle x,W^{-1}x\right\rangle}=\lVert x\rVert_{W^{-1}}$ since $W=W^\intercal$.
	Therefore,
	\begin{align*}
		\lVert x \rVert 
		= \left\lVert W^{\frac{1}{2}}W^{-\frac{1}{2}}x \right\rVert
		\leq \left\lVert W^{\frac{1}{2}} \right\rVert\left\lVert W^{-\frac{1}{2}}x \right\rVert = \left\lVert W^{\frac{1}{2}}\right\rVert\cdot\lVert x\rVert_{W^{-1}}
	\end{align*}	
	where the inequality follows by definition of the matrix norm $\lVert W^{\frac{1}{2}} \rVert = \max_y \frac{\lVert W^{\frac{1}{2}}y\rVert}{\lVert y\rVert}\geq \frac{\lVert W^{\frac{1}{2}}z\rVert}{\lVert z\rVert}$ with $z=W^{-\frac{1}{2}}x$.
\end{proof}

\subsection{Conservation of Energy}\label{sec:ZSenergy}

\begin{figure}[H]
	\def\S{0.4}
	\centering
	\begin{subfigure}[b]{0.49\textwidth}
		\centering
		{\includegraphics[trim=0.9cm 0.5cm 1.75cm 2cm, clip=true,scale=\S]{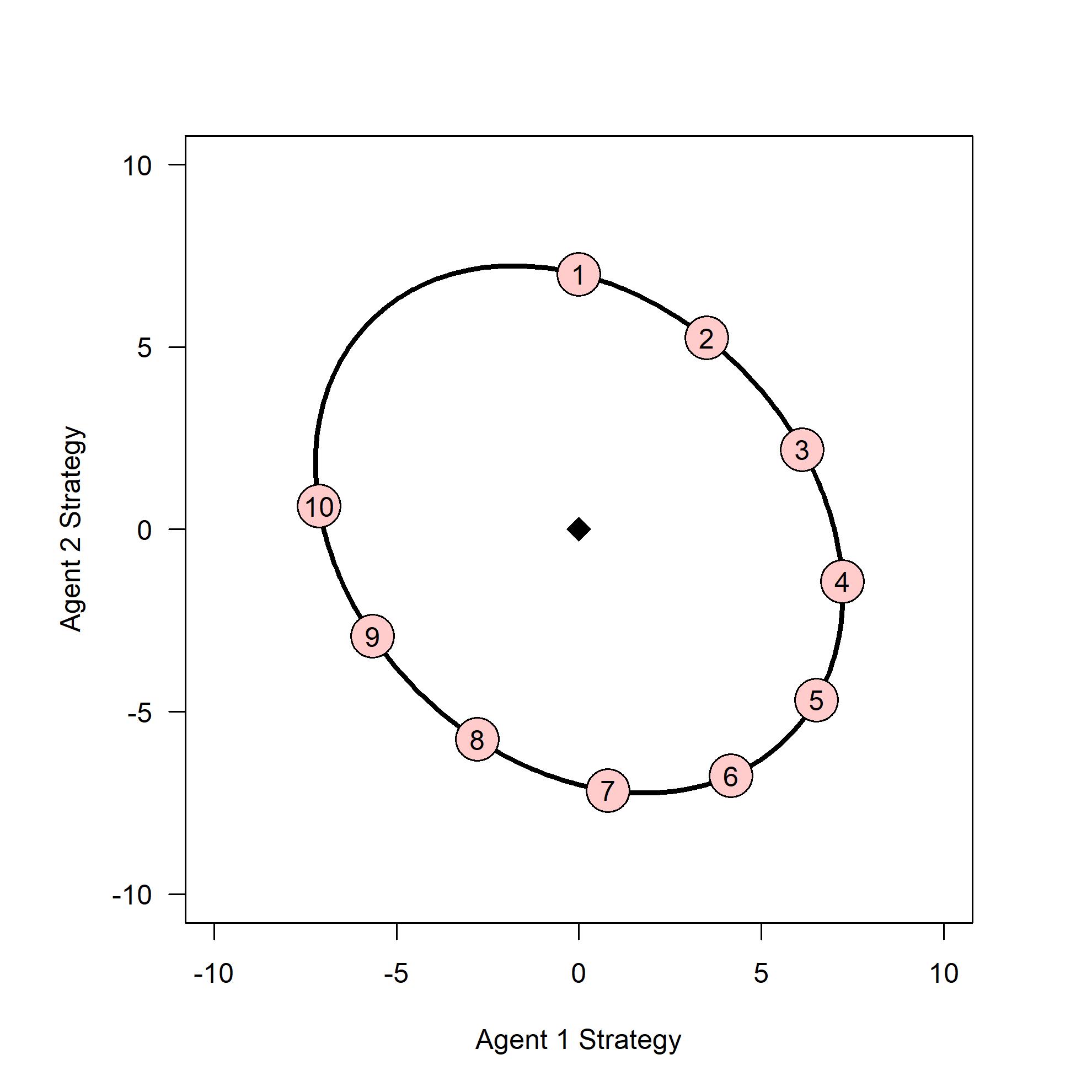}}
		\caption{Invariant Energy Function for the Zero-Sum Game $A=-B=[1]$.}
	\end{subfigure}\hfill
	\begin{subfigure}[b]{0.49\textwidth}
		\centering
		{\includegraphics[trim=0.9cm 0.5cm 1.75cm 2cm, clip=true,scale=\S]{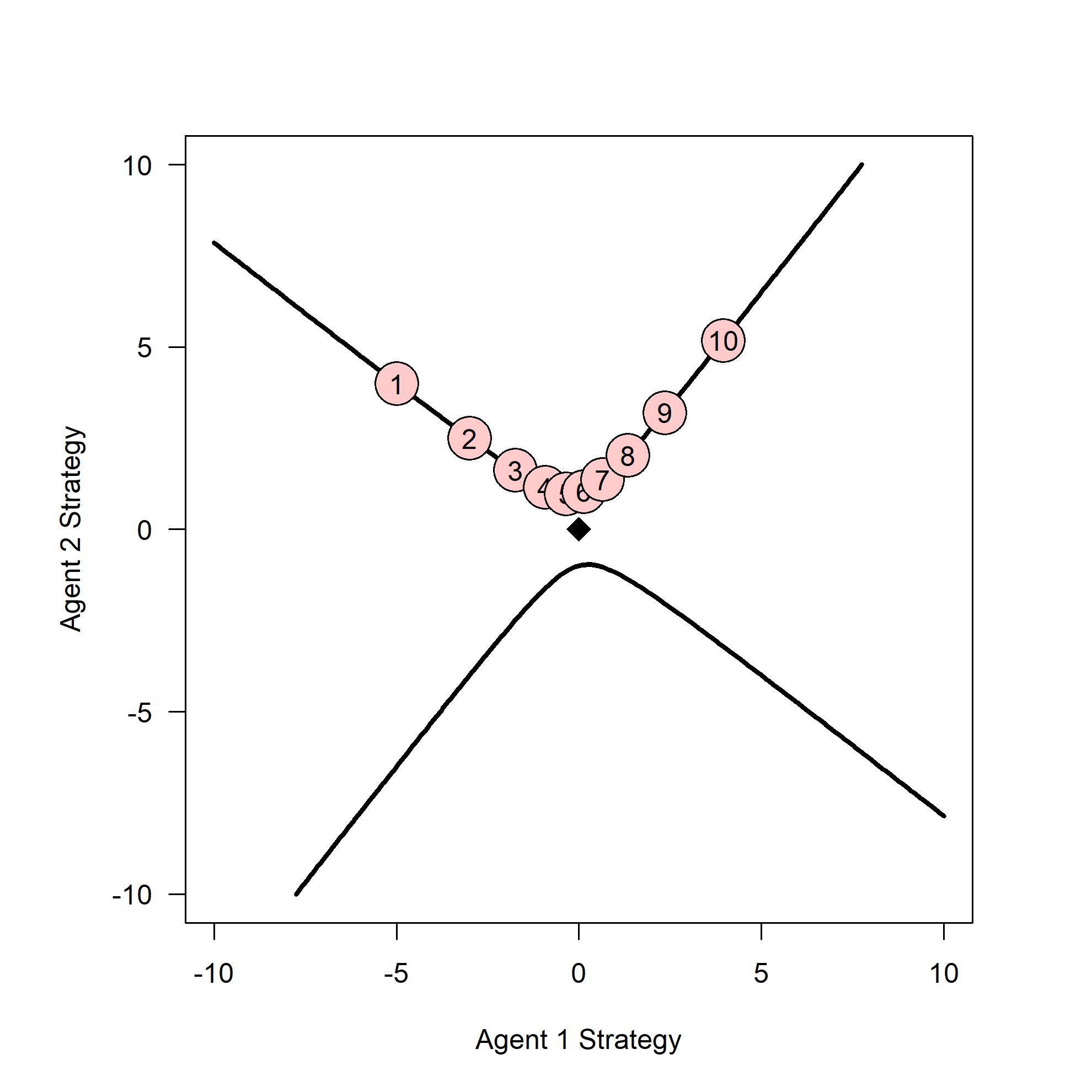}}
		\caption{Invariant Energy Function for the Coordination Game $A=B=[1]$.}
	\end{subfigure}	
	\caption{Evolution of Algorithm \ref{alg:2Agent} on the zero-sum game with $A=-B=[1]$ and the coordination game with $A=B=[1]$ respectively.  Both games use learning rates $\eta=\gamma=0.5$ and the combined strategies after $i$ iterations are marked by a red circle with the number $i$. The strategies move along the invariant energy functions given by Theorems \ref{thm:energy2} and \ref{thm:energy3} as marked by the black curve. }
	\label{fig:Energy}
\end{figure}

In this section, we show a strong stability condition of Algorithm \ref{alg:2Agent}; 
despite the algorithm being discrete, the updates all belong to a continuous, second-degree polynomial function -- an invariant ``energy function'' as depicted in Figure \ref{fig:Energy}. 
This energy function is a close perturbation of the energy found in \cite{Bailey19Hamiltonian} for zero-sum and coordination games in the continuous-time variant of gradient descent.

\begin{theorem}\label{thm:energy2}
	Suppose $P$ and $Q$ commute with $D_{\eta}$ and $D_{\gamma}$ respectively. 
	Then the perturbed energy $\left\lVert x^t\right\rVert^2_{P^{-1}D_{{\eta}}^{-1}} + \left\lVert y^t\right\rVert^2_{Q^{-1}D_{\gamma}^{-1}} + \langle x^t, Ay^t\rangle$ is invariant when agents play (\ref{eqn:posneg}) and update their strategies with Algorithm \ref{alg:2Agent}. 
\end{theorem}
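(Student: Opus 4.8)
The plan is to establish invariance across a single iteration by expanding the increment $\Delta E := E(x^t, y^t) - E(x^{t-1}, y^{t-1})$, where $E(x,y) = \langle x, P^{-1}D_{\eta}^{-1}x\rangle + \langle y, Q^{-1}D_{\gamma}^{-1}y\rangle + \langle x, Ay\rangle$, directly from the two update lines of Algorithm \ref{alg:2Agent} specialized to (\ref{eqn:posneg}): $x^t = x^{t-1} + u$ with $u := D_{\eta}PAy^{t-1}$, and $y^t = y^{t-1} + v$ with $v := -D_{\gamma}QA^\intercal x^t$. The first thing I would record is the role of the commuting hypothesis: since $P$ commutes with $D_{\eta}$ and both are symmetric, $P^{-1}D_{\eta}^{-1} = (D_{\eta}P)^{-1}$ is symmetric positive-definite, so the first summand of $E$ is a genuine weighted norm in the sense of Definition \ref{def:Weighted}; the same holds for $Q^{-1}D_{\gamma}^{-1}$. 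This symmetry is exactly what licenses moving the weight matrices across an inner product, e.g. $\langle u, P^{-1}D_{\eta}^{-1}x^{t-1}\rangle = \langle P^{-1}D_{\eta}^{-1}u, x^{t-1}\rangle$.

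The mechanism driving the cancellation is that the two increments collapse under their respective weights: $P^{-1}D_{\eta}^{-1}u = Ay^{t-1}$ and $Q^{-1}D_{\gamma}^{-1}v = -A^\intercal x^t$, since in each product the inner $D^{-1}D$ and then the outer $P^{-1}P$ (resp. $Q^{-1}Q$) cancel adjacently. Expanding each quadratic via $\langle a+b, W(a+b)\rangle = \langle a, Wa\rangle + 2\langle Wb, a\rangle + \langle b, Wb\rangle$ and substituting these identities, the $x$-bracket becomes $2\langle x^{t-1}, Ay^{t-1}\rangle + \langle u, Ay^{t-1}\rangle$ and the $y$-bracket becomes $-2\langle x^t, Ay^{t-1}\rangle + \langle v, -A^\intercal x^t\rangle$. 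Expanding the bilinear term gives $\langle x^t, Ay^t\rangle - \langle x^{t-1}, Ay^{t-1}\rangle = \langle x^{t-1}, Av\rangle + \langle u, Ay^{t-1}\rangle + \langle u, Av\rangle$.

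The bookkeeping then closes using only $x^t - x^{t-1} = u$ and the scalar identity $\langle v, A^\intercal x^t\rangle = \langle x^t, Av\rangle$. All terms of the form $\langle\,\cdot\,, Ay^{t-1}\rangle$ cancel: the two cross terms combine as $2\langle x^{t-1}-x^t, Ay^{t-1}\rangle = -2\langle u, Ay^{t-1}\rangle$, annihilating the two copies of $\langle u, Ay^{t-1}\rangle$. What remains is $\langle v, -A^\intercal x^t\rangle + \langle x^{t-1}, Av\rangle + \langle u, Av\rangle$; the last two collapse to $\langle x^{t-1}+u, Av\rangle = \langle x^t, Av\rangle$, while the first equals $-\langle x^t, Av\rangle$, so $\Delta E = 0$ and the energy is invariant.

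I do not expect a single hard identity here; the real work is the disciplined accounting of the several cross terms, and the one conceptual point worth flagging is where the scheme's asymmetry is essential. Because agent 2 updates against the already-updated $x^t$ rather than $x^{t-1}$, the factor $x^t$ appears in the $y$-bracket cross term and is precisely what lets the $\langle\,\cdot\,, Ay^{t-1}\rangle$ contributions telescope to $-2\langle u, Ay^{t-1}\rangle$ and cancel. Had both agents updated simultaneously this cancellation would fail, which is the structural reason alternating (Verlet) gradient descent conserves this energy while Algorithm \ref{alg:GradientMulti} does not. The commuting hypothesis enters only through the symmetry of the two weight matrices and is used nowhere else.
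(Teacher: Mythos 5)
Your proof is correct and is essentially the paper's argument: both verify invariance over a single iteration by combining the weight-collapse identities $P^{-1}D_{\eta}^{-1}(x^{t}-x^{t-1}) = Ay^{t-1}$ and $Q^{-1}D_{\gamma}^{-1}(y^{t}-y^{t-1}) = -A^\intercal x^{t}$ with the symmetry of the weight matrices supplied by the commuting hypothesis. The paper merely organizes the same cancellation more compactly, rewriting each agent's utility term as a difference of squared norms via $\langle a+b, W(a-b)\rangle = \lVert a\rVert_W^2 - \lVert b\rVert_W^2$ and then adding the two identities, rather than expanding the energy increment term by term as you do.
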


We remark that the condition that $P$ and $D_{\eta}$ is not restrictive; it is trivially satisfied in traditional setting of online optimization where an agent uses a single learning rate for all strategies implying $D_{\eta}$ is a multiple of the identity matrix. 

\begin{proof}[Proof of Theorem \ref{thm:energy2}.]
	By the update rule given by line \ref{line:agent1} in Algorithm \ref{alg:2Agent}, 
	\begin{align*}
			\left\langle x^{t+1}+x^t, A y^t \right \rangle
		&=  \left\langle x^{t+1}+x^t,P^{-1}D_{\eta}^{-1}\left( x^{t+1}-x^t\right) \right \rangle
		= \left\lVert x^{t+1}\right\rVert^2_{P^{-1}D_{\eta}^{-1}}-\left\lVert x^{t}\right\rVert^2_{P^{-1}D_{\eta}^{-1}}
	\end{align*}
	since $P^{-1}$ and $D_\eta^{-1}$ are both positive definite and commute.
	Similarly, by line \ref{line:agent2} of Algorithm \ref{alg:2Agent}, 
	\begin{align*}
		\left\langle y^{t+1}+y^t ,-A^\intercal x^{t+1}  \right \rangle
		&=  \left\langle y^{t+1}+y^t, Q^{-1}D_{\gamma}^{-1}\left(y^{t+1}-y^t \right) \right \rangle
		=  \left\lVert y^{t+1}\right\rVert^2_{Q^{-1}D_{\gamma}^{-1}}-\left\lVert y^t\right\rVert^2_{Q^{-1}D_{\gamma}^{-1}} 
	\end{align*}
	Adding together both equalities and re-arranging terms yields
	\begin{align*}
		\left\lVert x^{t+1}\right\rVert^2_{P^{-1}D_{\eta}^{-1}} +\left\lVert y^{t+1}\right\rVert^2_{Q^{-1}D_{\gamma}^{-1}} + \langle x^{t+1}, Ay^{t+1}\rangle=
		\left\lVert x^{t}\right\rVert^2_{P^{-1}D_{\eta}^{-1}} + \left\lVert y^t\right\rVert^2_{Q^{-1}D_{\gamma}^{-1}}  + \langle x^t, Ay^t\rangle
 	\end{align*}
	thereby completing the proof of the theorem. 
\end{proof}
	
	\subsubsection{Energy in Positive-Positive Definite (Coordination) Games}\label{sec:COORDenergy}
	
	For completeness, we also give the energy function for positive-definite transformations of coordination games ($B=A^\intercal$).

	\begin{equation}
	\begin{aligned}
	\max_{x\in {\cal X}} \ &\langle x, P A y\rangle \\
	\max_{y\in {\cal Y}} \ &\langle y, Q A^\intercal x\rangle \\
	\end{aligned}\tag{Positive-Positive Definite Game}\label{eqn:pospos}
	\end{equation}
	
	\begin{theorem}\label{thm:energy3}
		Suppose $P$ and $Q$ commute with $D_{\eta}$ and $D_{\gamma}$ respectively. 
		Then the perturbed energy $\left\lVert x^t\right\rVert^2_{P^{-1}D_{{\eta}}^{-1}} - \left\lVert y^t\right\rVert^2_{Q^{-1}D_{\gamma}^{-1}} + \langle x^t, Ay^t\rangle$ is invariant when agents play a \ref{eqn:pospos} and update their strategies with Algorithm \ref{alg:2Agent}. 
	\end{theorem}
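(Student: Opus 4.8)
The plan is to mirror the proof of Theorem \ref{thm:energy2} almost verbatim, tracking the single sign change that distinguishes (\ref{eqn:pospos}) from (\ref{eqn:posneg}). In the coordination game, agent 1's payoff matrix is $PA$ and agent 2's is $QA^\intercal$ (rather than $-QA^\intercal$), so running Algorithm \ref{alg:2Agent} yields the two update identities $Ay^t = P^{-1}D_\eta^{-1}(x^{t+1}-x^t)$ from line \ref{line:agent1} and $A^\intercal x^{t+1} = Q^{-1}D_\gamma^{-1}(y^{t+1}-y^t)$ from line \ref{line:agent2}. The commuting hypotheses guarantee that $P^{-1}D_\eta^{-1}$ and $Q^{-1}D_\gamma^{-1}$ are symmetric (a product of commuting symmetric matrices is symmetric), which is exactly what is needed to apply the polarization identity $\langle u+v, M(u-v)\rangle = \lVert u\rVert^2_M - \lVert v\rVert^2_M$ valid for symmetric $M$.

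First I would pair each update with the corresponding sum of consecutive iterates. Taking the inner product of the first identity with $x^{t+1}+x^t$ gives $\langle x^{t+1}+x^t, Ay^t\rangle = \lVert x^{t+1}\rVert^2_{P^{-1}D_\eta^{-1}} - \lVert x^t\rVert^2_{P^{-1}D_\eta^{-1}}$, exactly as in Theorem \ref{thm:energy2}. Taking the inner product of the second identity with $y^{t+1}+y^t$ and using $\langle y^{t+1}+y^t, A^\intercal x^{t+1}\rangle = \langle x^{t+1}, A(y^{t+1}+y^t)\rangle$ gives $\langle x^{t+1}, Ay^{t+1}\rangle + \langle x^{t+1}, Ay^t\rangle = \lVert y^{t+1}\rVert^2_{Q^{-1}D_\gamma^{-1}} - \lVert y^t\rVert^2_{Q^{-1}D_\gamma^{-1}}$; here the absence of the minus sign in front of $A^\intercal$ is precisely what produces the $-\lVert y\rVert^2$ term in the asserted energy, in contrast to the $+\lVert y\rVert^2$ term of Theorem \ref{thm:energy2}.

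Then I would form $E^{t+1}-E^t$ for $E^t := \lVert x^t\rVert^2_{P^{-1}D_\eta^{-1}} - \lVert y^t\rVert^2_{Q^{-1}D_\gamma^{-1}} + \langle x^t, Ay^t\rangle$ and substitute the two displayed equalities for the two norm differences. After substitution the bilinear cross terms $\langle x^{t+1},Ay^t\rangle$, $\langle x^t, Ay^t\rangle$, and $\langle x^{t+1}, Ay^{t+1}\rangle$ each appear once with a plus and once with a minus sign, so they cancel pairwise, leaving $E^{t+1}-E^t = 0$, i.e.\ the energy is invariant.

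The argument is entirely routine given Theorem \ref{thm:energy2}; the only genuine obstacle is sign bookkeeping. The single place where care is required is confirming that flipping the sign of agent 2's payoff matrix (coordination versus zero-sum) flips the sign of the $\lVert y\rVert^2_{Q^{-1}D_\gamma^{-1}}$ contribution while still leaving the $\langle x, Ay\rangle$ cross terms arranged so that they telescope to zero, which the cancellation described above verifies.
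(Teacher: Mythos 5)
Your proof is correct and is essentially the paper's own argument: the paper proves this theorem by noting it "follows identically to the proof of Theorem \ref{thm:energy2}" with the sign of agent 2's update flipped, which is exactly the computation you carry out — the two polarization identities from lines \ref{line:agent1} and \ref{line:agent2}, combined so the cross terms $\langle x^{t+1},Ay^t\rangle$, $\langle x^t,Ay^t\rangle$, $\langle x^{t+1},Ay^{t+1}\rangle$ cancel. You simply write out explicitly what the paper leaves as a one-line remark, including the correct observation that commutativity is what makes $P^{-1}D_\eta^{-1}$ and $Q^{-1}D_\gamma^{-1}$ symmetric.
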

	
	The proof follows identically to the proof of Theorem \ref{thm:energy2} after adding together $\langle x^{t+1}, Ay^t\rangle$ and $-\langle y^{t+1}+y^t, A^\intercal x^{t+1}\rangle$.

\subsection{Bounded Orbits and Recurrence}

As shown in Figure \ref{fig:Energy}, in (\ref{eqn:posneg}) the strategies appear like they will cycle -- or at least will come close to cycling.  
In dynamics, this property is captured by Poincar\'{e} recurrence.

\begin{theorem}[Poincar\'{e} recurrence]\label{thm:Recurrence}
	Suppose $P$ and $Q$ commute with $D_{\eta}$ and $D_{\gamma}$ respectively and $\left\lVert P^{\frac{1}{2}}D_{\eta}^{\frac{1}{2}}\right\rVert\cdot\left\lVert Q^{\frac{1}{2}}D_{\gamma}^{\frac{1}{2}}\right\rVert< \frac{2}{||A||}$ when updating both agents' strategies with Algorithm \ref{alg:2Agent} in (\ref{eqn:posneg}). 
	For almost every initial condition $(x^0,y^0)$, there exists an increasing sequence of iterations $t_n$ such that $(x^{t_n},y^{t_n})\to (x^0,y^0)$.
\end{theorem}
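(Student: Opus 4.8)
The plan is to recognize Theorem \ref{thm:Recurrence} as the discrete Poincar\'{e} recurrence theorem, which needs two ingredients: that the update map preserves a measure, and that every orbit is trapped inside a set of finite measure. The first ingredient is exactly Theorem \ref{thm:Volume}, which states that Algorithm \ref{alg:2Agent} preserves Lebesgue volume. The second I would extract from the invariant energy of Theorem \ref{thm:energy2}: since $E(x,y) := \lVert x\rVert^2_{P^{-1}D_{\eta}^{-1}} + \lVert y\rVert^2_{Q^{-1}D_{\gamma}^{-1}} + \langle x, Ay\rangle$ is constant along every orbit, each orbit lives inside a single sublevel set $\{E \le c\}$. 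The crux is therefore to show that the spectral hypothesis $\lVert P^{1/2}D_{\eta}^{1/2}\rVert \cdot \lVert Q^{1/2}D_{\gamma}^{1/2}\rVert < 2/\lVert A\rVert$ forces these sublevel sets to be compact.

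To establish compactness I would prove that $E$ is coercive under the hypothesis. Writing $u = \lVert x\rVert_{P^{-1}D_{\eta}^{-1}}$ and $v = \lVert y\rVert_{Q^{-1}D_{\gamma}^{-1}}$, the first two terms of $E$ contribute $u^2 + v^2$, so I only need to bound the indefinite cross term $\langle x, Ay\rangle$ from below. Cauchy--Schwarz together with the operator-norm bound gives $\langle x, Ay\rangle \ge -\lVert A\rVert\,\lVert x\rVert\,\lVert y\rVert$, and then Lemma \ref{lem:norm} (with $W = D_{\eta}P$ and $W = D_{\gamma}Q$, using that $P,Q$ commute with $D_{\eta},D_{\gamma}$, so that $W^{1/2}=P^{1/2}D_{\eta}^{1/2}$ and $Q^{1/2}D_{\gamma}^{1/2}$) converts the Euclidean norms into the weighted ones: $\lVert x\rVert \le \lVert P^{1/2}D_{\eta}^{1/2}\rVert\,u$ and $\lVert y\rVert \le \lVert Q^{1/2}D_{\gamma}^{1/2}\rVert\,v$. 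Setting $\rho := \lVert A\rVert\,\lVert P^{1/2}D_{\eta}^{1/2}\rVert\,\lVert Q^{1/2}D_{\gamma}^{1/2}\rVert$, which is strictly less than $2$ by hypothesis, and applying $uv \le (u^2+v^2)/2$, I obtain $E(x,y) \ge (1-\rho/2)(u^2+v^2)$ with $1-\rho/2 > 0$. Since the weighted norms are equivalent to the Euclidean norm, this lower bound shows every sublevel set of $E$ is bounded; it is closed by continuity of $E$, hence compact. This is precisely the content one would package separately as the bounded-orbits statement, Theorem \ref{thm:BoundedOrbits}.

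With compactness in hand I would finish as follows. Fix an initial condition $(x^0,y^0)$ and let $c = E(x^0,y^0)$; the invariance from Theorem \ref{thm:energy2} confines the whole orbit to the compact set $K = \{E \le c\}$, which has finite Lebesgue measure. The update map of Algorithm \ref{alg:2Agent} is a composition of two invertible shear maps, and its inverse simply reverses the two alternating steps, so it restricts to a volume-preserving bijection of $K$ onto itself. The classical Poincar\'{e} recurrence theorem applied to this measure-preserving self-bijection of a finite-measure space yields, for almost every starting point, a subsequence of iterates returning to any fixed neighborhood of the start; a standard exhaustion over a countable basis of shrinking neighborhoods then upgrades this to the existence of $t_n$ with $(x^{t_n},y^{t_n}) \to (x^0,y^0)$ for almost every $(x^0,y^0)$.

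The main obstacle is the coercivity step: everything hinges on showing that the threshold $\lVert P^{1/2}D_{\eta}^{1/2}\rVert\,\lVert Q^{1/2}D_{\gamma}^{1/2}\rVert < 2/\lVert A\rVert$ is exactly what renders the indefinite quadratic form $E$ positive-definite, so the factor of $2$ must be tracked carefully through the AM--GM estimate. A secondary point to verify is that $K$ is genuinely invariant under both the map and its inverse; this follows because the energy is preserved \emph{exactly}, but it should be checked explicitly so that Poincar\'{e} recurrence is applied to an honest self-bijection rather than to a merely forward-invariant set.
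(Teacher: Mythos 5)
Your proposal is correct and follows essentially the same route as the paper: volume preservation (Theorem \ref{thm:Volume}) plus boundedness of orbits derived from the invariant energy of Theorem \ref{thm:energy2} via Cauchy--Schwarz, Lemma \ref{lem:norm}, and the AM--GM step with the same factor-of-$2$ bookkeeping, followed by an appeal to the classical Poincar\'{e} recurrence theorem. Your coercivity formulation of the sublevel-set compactness is just a repackaging of the paper's Theorem \ref{thm:BoundedOrbits}, and your added detail on the self-bijection of $K$ and the countable-neighborhood-basis argument merely fills in what the paper delegates to its citations.
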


Once again, the condition that $D_{\eta}$ and $P$ commute is naturally satisfies in standard applications. 

Poincar\'{e} recurrence guarantees that a system will come arbitrarily close to its initial conditions infinitely often.  
Informally, we think of this as cycling -- if our learning algorithm ever returns exactly to its initial condition, then the subsequent iterations will follow the prior iterations. 
By \cite{Poincare1890, barreira}, to formally show recurrence, it suffices to show that the updates are bounded and that the update rule preserves volume (Theorem \ref{thm:Volume}).
Thus, to complete the proof of Theorem \ref{thm:Recurrence}, it remains to show that $\{x^t,y^t\}_{t=0}^\infty$ is bounded.

\begin{theorem}\label{thm:BoundedOrbits}
	Suppose $P$ and $Q$ commute with $D_{\eta}$ and $D_{\gamma}$ respectively and $\left\lVert P^{\frac{1}{2}}D_{\eta}^{\frac{1}{2}}\right\rVert\cdot\left\lVert Q^{\frac{1}{2}}D_{\gamma}^{\frac{1}{2}}\right\rVert< \frac{2}{||A||}$ when updating both agents' strategies with Algorithm \ref{alg:2Agent} in (\ref{eqn:posneg}). 
	Then the agent strategies $\{x^t,y^t\}_{t=0}^\infty$ are bounded. Specifically,  
		\begin{align}
	\left\lVert x^t\right\rVert^2_{P^{-1}D_{{\eta}}^{-1}} + \left\lVert y^t\right\rVert^2_{Q^{-1}D_{\gamma}^{-1}}
	&\leq \frac{\left\lVert x^0\right\rVert^2_{P^{-1}D_{{\eta}}^{-1}} + \left\lVert y^0\right\rVert^2_{Q^{-1}D_{\gamma}^{-1}}+\left\langle x^0, Ay^0\right\rangle}
	{1-\frac{\left\lVert A\right\rVert\cdot\left\lVert P^{\frac{1}{2}}D_{\eta}^{\frac{1}{2}}\right\rVert\cdot\left\lVert Q^{\frac{1}{2}}D_{\gamma}^{\frac{1}{2}}\right\rVert}{2}}.
	\end{align}
\end{theorem}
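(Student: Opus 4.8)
The plan is to bound the nonnegative quantity $\|x^t\|^2_{P^{-1}D_{\eta}^{-1}} + \|y^t\|^2_{Q^{-1}D_{\gamma}^{-1}}$ by combining the invariant energy from Theorem~\ref{thm:energy2} with an upper bound on the bilinear coupling term $\langle x^t, Ay^t\rangle$. Write $S^t := \|x^t\|^2_{P^{-1}D_{\eta}^{-1}} + \|y^t\|^2_{Q^{-1}D_{\gamma}^{-1}}$ and let $E := S^0 + \langle x^0, Ay^0\rangle$ be the constant energy. Theorem~\ref{thm:energy2} gives $S^t + \langle x^t, Ay^t\rangle = E$ for every $t$, so $S^t = E - \langle x^t, Ay^t\rangle \leq E + |\langle x^t, Ay^t\rangle|$. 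Everything therefore reduces to controlling $|\langle x^t, Ay^t\rangle|$ in terms of $S^t$ itself.

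For that term, first I would apply Cauchy--Schwarz together with the definition of the operator norm to obtain $|\langle x^t, Ay^t\rangle| \leq \|A\|\cdot\|x^t\|\cdot\|y^t\|$ in the Euclidean norm. Next I would convert these Euclidean norms into the weighted norms appearing in $S^t$ using Lemma~\ref{lem:norm}. Taking $W = D_{\eta}P$ (which is positive definite and whose inverse is exactly $P^{-1}D_{\eta}^{-1}$), the commuting hypothesis lets me write $W^{1/2} = P^{1/2}D_{\eta}^{1/2}$, so Lemma~\ref{lem:norm} yields $\|x^t\| \leq \|P^{1/2}D_{\eta}^{1/2}\|\cdot\|x^t\|_{P^{-1}D_{\eta}^{-1}}$, and symmetrically $\|y^t\| \leq \|Q^{1/2}D_{\gamma}^{1/2}\|\cdot\|y^t\|_{Q^{-1}D_{\gamma}^{-1}}$. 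Writing $c := \|A\|\cdot\|P^{1/2}D_{\eta}^{1/2}\|\cdot\|Q^{1/2}D_{\gamma}^{1/2}\|$ and then applying AM--GM to the product of the two weighted norms gives $|\langle x^t, Ay^t\rangle| \leq c\,\|x^t\|_{P^{-1}D_{\eta}^{-1}}\,\|y^t\|_{Q^{-1}D_{\gamma}^{-1}} \leq \tfrac{c}{2}S^t$.

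Combining the two estimates produces the self-referential inequality $S^t \leq E + \tfrac{c}{2}S^t$. The standing assumption $\|P^{1/2}D_{\eta}^{1/2}\|\cdot\|Q^{1/2}D_{\gamma}^{1/2}\| < 2/\|A\|$ is precisely $c < 2$, so $1 - c/2 > 0$ and I can rearrange to obtain $S^t \leq E/(1 - c/2)$, which is exactly the claimed bound. The same inequality evaluated at $t=0$ also shows $E \geq (1 - c/2)S^0 \geq 0$, confirming that the right-hand side is a genuine nonnegative bound.

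I expect the only delicate step to be the norm conversion: one must check that the commuting hypothesis is what legitimizes $W^{1/2} = P^{1/2}D_{\eta}^{1/2}$ and hence makes the operator-norm constant supplied by Lemma~\ref{lem:norm} coincide with the quantity $\|P^{1/2}D_{\eta}^{1/2}\|$ appearing in both the hypothesis and the conclusion. Once the constants are aligned, the remaining steps --- Cauchy--Schwarz, AM--GM, and solving a linear inequality in $S^t$ --- are routine, and the recurrence claim of Theorem~\ref{thm:Recurrence} then follows by feeding this boundedness into the volume-preservation result of Theorem~\ref{thm:Volume}.
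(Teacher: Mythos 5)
Your proposal is correct and follows essentially the same route as the paper's proof: invoke the invariant energy of Theorem \ref{thm:energy2}, bound the coupling term $\langle x^t, Ay^t\rangle$ via Cauchy--Schwarz, the operator norm of $A$, Lemma \ref{lem:norm}, and the inequality $ab \leq \tfrac{a^2+b^2}{2}$, then solve the resulting linear inequality in $S^t$ using $c<2$. Your treatment is in fact slightly more explicit than the paper's in two harmless ways: you spell out why the commuting hypothesis gives $(D_\eta P)^{1/2}=P^{1/2}D_\eta^{1/2}$ so that Lemma \ref{lem:norm} produces the stated constant, and you verify that the right-hand side is nonnegative.
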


\begin{proof}
	By Theorem \ref{thm:energy2}, energy is preserved and,
	\begin{align*}
	\left\lVert x^t\right\rVert^2_{P^{-1}D_{{\eta}}^{-1}} + \left\lVert y^t\right\rVert^2_{Q^{-1}D_{\gamma}^{-1}}=\left\lVert x^0\right\rVert^2_{P^{-1}D_{{\eta}}^{-1}} + \left\lVert y^0\right\rVert^2_{Q^{-1}D_{\gamma}^{-1}}+\langle x^0,Ay^0\rangle - \langle x^t,Ay^t\rangle.
	\end{align*}	
	
	Next, observe that
	\begin{align*}
		-\left\langle x^t, Ay^t\right\rangle 
		&\leq \left\lVert x^t\right\rVert\cdot\left\lVert Ay^t\right\rVert\\
		&\leq \left\lVert A\right\rVert\cdot\left\lVert x^t\right\rVert\cdot\left\lVert y^t\right\rVert\\
		&\leq \left\lVert A\right\rVert\cdot\left\lVert P^{\frac{1}{2}}D_{\eta}^{\frac{1}{2}}\right\rVert\cdot\left\lVert Q^{\frac{1}{2}}D_{\gamma}^{\frac{1}{2}}\right\rVert\cdot\left\lVert x^t\right\rVert_{P^{-1}D_{\eta}^{-1}}\cdot\left\lVert y^t\right\rVert_{Q^{-1}D_{\gamma}^{-1}} \\
		&\leq  \frac{\left\lVert A\right\rVert\cdot\left\lVert P^{\frac{1}{2}}D_{\eta}^{\frac{1}{2}}\right\rVert\cdot\left\lVert Q^{\frac{1}{2}}D_{\gamma}^{\frac{1}{2}}\right\rVert}{2}
		\left(  \left\lVert x^t\right\rVert_{P^{-1}D_{\eta}^{-1}}^2+\left\lVert y^t\right\rVert_{Q^{-1}D_{\gamma}^{-1}}^2\right)
	\end{align*}
	where the first inequality is the Cauchy-Swartz inequality, the second inequality follows by definition of the matrix norm $\lVert A \rVert = \max_w \frac{\lVert Aw\rVert}{\lVert w\rVert}$, the third inequality follows by Lemma \ref{lem:norm}, and the final equality follows since  $ab=\frac{a^2+b^2-(a-b)^2}{2}\leq \frac{a^2+b^2}{2}$. 
	
	Combining the two expressions and re-arranging terms yields
	\begin{align*}
	\left\lVert x^t\right\rVert^2_{P^{-1}D_{{\eta}}^{-1}} + \left\lVert y^t\right\rVert^2_{Q^{-1}D_{\gamma}^{-1}}
	&\leq \frac{\left\lVert x^0\right\rVert^2_{P^{-1}D_{{\eta}}^{-1}} + \left\lVert y^0\right\rVert^2_{Q^{-1}D_{\gamma}^{-1}}+\left\langle x^0, Ay^0\right\rangle}
	{1-\frac{\left\lVert A\right\rVert\cdot\left\lVert P^{\frac{1}{2}}D_{\eta}^{\frac{1}{2}}\right\rVert\cdot\left\lVert Q^{\frac{1}{2}}D_{\gamma}^{\frac{1}{2}}\right\rVert}{2}}.
	\end{align*}
	Note, that the denominator is positive since $\left\lVert P^{\frac{1}{2}}D_{\eta}^{\frac{1}{2}}\right\rVert\cdot\left\lVert Q^{\frac{1}{2}}D_{\gamma}^{\frac{1}{2}}\right\rVert< \frac{2}{||A||}$ and the direction of the inequality was maintained while rearranging terms.
	Thus, the updates are bounded.
	We remark it is also straightforward to bound $||x^t||$ in the standard euclidean space since, by Lemma \ref{lem:norm}, $\lVert x^t\rVert \leq \left\lVert P^{\frac{1}{2}}D_{\eta}^{\frac{1}{2}}\right\rVert \cdot \left\lVert x^t\right\rVert_{P^{-1}D_{{\eta}}^{-1}}\leq \left\lVert P^{\frac{1}{2}}D_{\eta}^{\frac{1}{2}}\right\rVert \cdot \sqrt{\left\lVert x^t\right\rVert^2_{P^{-1}D_{{\eta}}^{-1}} + \left\lVert y^t\right\rVert^2_{Q^{-1}D_{\gamma}^{-1}}}$.
	\end{proof}

In addition to being necessary for the proof of recurrence, Theorem \ref{thm:BoundedOrbits} also allows us to refine our results related to regret from Section \ref{sec:2AgentRegret}. 
Recall that the statement of Theorem \ref{thm:regret} only claims that agent 1's regret is bounded after agent 1 updates and that Proposition \ref{prop:BadRegret} shows that is possible for agent 1 to have large regret after agent 2 updates. 
With Theorem \ref{thm:BoundedOrbits}, we can show that agent 1 will always have bounded regret, regardless of which agent updates last. 

\begin{corollary}\label{cor:ZSRegret}
	Suppose $P$ and $Q$ commute with $D_{\eta}$ and $D_{\gamma}$ respectively and $\left\lVert P^{\frac{1}{2}}D_{\eta}^{\frac{1}{2}}\right\rVert\cdot\left\lVert Q^{\frac{1}{2}}D_{\gamma}^{\frac{1}{2}}\right\rVert< \frac{2}{||A||}$ when updating both agents' strategies with Algorithm \ref{alg:2Agent} in (\ref{eqn:posneg}). 
	Agent 1's regret is bounded when regret is computed after agent 2 updates. 
\end{corollary}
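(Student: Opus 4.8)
The plan is to start from an explicit closed form for agent~1's regret after agent~2 updates and then neutralize the single offending term using the boundedness of the orbits guaranteed by Theorem~\ref{thm:BoundedOrbits}.

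First I would re-derive the regret identity of Proposition~\ref{prop:BadRegret} in the present setting, where agent~1's payoff matrix is $PA$ rather than $A$. The telescoping argument behind Theorem~\ref{thm:regret} uses only the update rule, which here reads $x^{t+1}-x^t=D_\eta PA y^t$, equivalently $PA y^t = D_\eta^{-1}(x^{t+1}-x^t)$. The identical computation then shows that agent~1's regret after agent~2 updates in iteration $(T+1)$ equals
\begin{align*}
\langle x^0 - 2x, D_\eta^{-1} x^0\rangle + \langle 2x - x^{T+1}, D_\eta^{-1} x^{T+1}\rangle + \langle x - x^{T+1}, PA y^{T+1}\rangle.
\end{align*}
The first two summands together constitute agent~1's regret after agent~1 updates and are therefore $O(1)$ by Theorem~\ref{thm:regret}; the entire difficulty is concentrated in the final term $\langle x - x^{T+1}, PA y^{T+1}\rangle$, which is precisely the term that Proposition~\ref{prop:BadRegret} exploited to make the regret diverge.

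Next I would invoke Theorem~\ref{thm:BoundedOrbits}. Under the stated hypotheses — the commutation of $P,Q$ with $D_\eta,D_\gamma$ and the norm bound $\|P^{1/2}D_\eta^{1/2}\|\cdot\|Q^{1/2}D_\gamma^{1/2}\| < 2/\|A\|$ — that theorem furnishes a bound on the weighted norms of $x^t$ and $y^t$ that is uniform in $t$, which I would convert to a Euclidean bound through Lemma~\ref{lem:norm}. In particular $x^{T+1}$ and $y^{T+1}$ lie in a fixed bounded set for every $T$. Applying Cauchy--Schwarz,
\begin{align*}
|\langle x - x^{T+1}, PA y^{T+1}\rangle| \leq \|x - x^{T+1}\|\cdot\|PA\|\cdot\|y^{T+1}\|,
\end{align*}
and since $\|x^{T+1}\|$ and $\|y^{T+1}\|$ are bounded independently of $T$, this term is bounded. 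The same estimate controls $\langle 2x - x^{T+1}, D_\eta^{-1} x^{T+1}\rangle$ from both sides, so summing the three bounded contributions shows the total regret is bounded, as claimed.

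The main obstacle — and essentially the whole content of the corollary — is controlling the extra term $\langle x - x^{T+1}, PA y^{T+1}\rangle$. In the general-sum setting of Proposition~\ref{prop:BadRegret} the adversary could send $y^{T+1}$ to infinity, but here the positive--negative definite structure together with the learning-rate restriction forces the joint orbit $\{(x^t,y^t)\}$ to stay in a bounded region, which is exactly what defangs the term. I expect no genuine subtlety beyond translating the weighted-norm bound of Theorem~\ref{thm:BoundedOrbits} into the Euclidean norm appearing in the Cauchy--Schwarz estimate, a step handled directly by Lemma~\ref{lem:norm}.
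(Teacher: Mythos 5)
Your proposal is correct and follows essentially the same route as the paper's own proof: decompose the regret after agent 2 updates (via Proposition \ref{prop:BadRegret}) into the telescoping part controlled by Theorem \ref{thm:regret} plus the single extra term $\langle x - x^{T+1}, PA y^{T+1}\rangle$, then kill that term using the boundedness of the orbit from Theorem \ref{thm:BoundedOrbits}. If anything, you are more careful than the paper, which writes the payoff as $Ay^t$ rather than $PAy^t$ and does not spell out the Cauchy--Schwarz step or the weighted-to-Euclidean norm conversion via Lemma \ref{lem:norm}.
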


\begin{proof}
	From Proposition \ref{prop:BadRegret}, agent 1's regret is $\sum_{t=0}^T\left\langle 2x-x^{t+1}-x^t,  Ay^t\right\rangle+\left\langle x-x^{T+1},  Ay^{T+1}\right\rangle$.
	The first term is bounded by Theorem \ref{thm:regret}.  
	Moreover, the second term is also bounded since $x^{T+1}$ and $y^{T+1}$ are bounded (Theorem \ref{thm:BoundedOrbits}).  
	Thus, agent 1's regret is bounded -- even when agent 2 updates last. 
\end{proof}

\subsection{The Bound $\left\lVert P^{\frac{1}{2}}D_{\eta}^{\frac{1}{2}}\right\rVert\cdot\left\lVert Q^{\frac{1}{2}}D_{\gamma}^{\frac{1}{2}}\right\rVert< \frac{2}{||A||}$ is Tight}

All three main results in this section require learning rates to be sufficiently small.  
In the following proposition, we show that that the bound of $2/||A||$ on learning rates is tight. 

\begin{proposition}\label{prop:unbounded}
	If the learning rates are too large when both agents use Algorithm \ref{alg:2Agent} in (\ref{eqn:posneg}), then the strategies may diverge -- even if $\left\lVert P^{\frac{1}{2}}D_\eta^{\frac{1}{2}}\right\rVert \cdot \left\lVert Q^{\frac{1}{2}}D_\gamma^{\frac{1}{2}}\right\rVert= \frac{2}{\lVert A \rVert}$. 
\end{proposition}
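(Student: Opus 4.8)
The plan is to exhibit a single explicit game together with a boundary choice of learning rates for which the orbit of Algorithm \ref{alg:2Agent} is unbounded; this suffices, since the proposition only asserts that divergence \emph{may} occur at equality. I would take the scalar zero-sum instance of (\ref{eqn:posneg}) given by $A=[1]$ and $P=Q=[1]$ (so that agent 2's payoff matrix is $-QA^\intercal=[-1]$), together with the scalar learning rates $\eta=\gamma=2$. Here $\lVert A\rVert=1$ and $\lVert P^{\frac{1}{2}}D_\eta^{\frac{1}{2}}\rVert\cdot\lVert Q^{\frac{1}{2}}D_\gamma^{\frac{1}{2}}\rVert=\sqrt{\eta}\cdot\sqrt{\gamma}=2=\tfrac{2}{\lVert A\rVert}$, so this instance sits exactly on the boundary excluded by Theorems \ref{thm:BoundedOrbits} and \ref{thm:Recurrence}.

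Next I would write the two updates of Algorithm \ref{alg:2Agent} as a single linear map on $(x,y)$. Substituting $y^t=y^{t-1}-\gamma x^t$ into $x^t=x^{t-1}+\eta y^{t-1}$ yields $(x^t,y^t)^\intercal = M\,(x^{t-1},y^{t-1})^\intercal$ with
\[
M=\begin{pmatrix} 1 & \eta \\ -\gamma & 1-\eta\gamma\end{pmatrix}=\begin{pmatrix} 1 & 2 \\ -2 & -3\end{pmatrix}.
\]
A direct check gives $\det M = 1$ (consistent with the volume preservation of Theorem \ref{thm:Volume}) and $\operatorname{tr} M = 2-\eta\gamma = -2$, so the characteristic polynomial is $(\lambda+1)^2$ and $M$ has the single repeated eigenvalue $-1$.

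The crux of the argument is that at the boundary the eigenvalues lie exactly on the unit circle, yet $M$ fails to be diagonalizable: one verifies $M+I=\begin{pmatrix}2&2\\-2&-2\end{pmatrix}\neq 0$ while $(M+I)^2=0$. Writing $M=-I+N$ with $N:=M+I$ nilpotent, the binomial expansion collapses to $M^t=(-1)^t\big(I-tN\big)$, so $\lVert M^t\rVert$ grows linearly in $t$. Hence for every initial condition outside the one-dimensional eigenspace $\ker N=\{(x,-x)\}$ (in particular for almost every $(x^0,y^0)$), the iterates $(x^t,y^t)$ are unbounded, which establishes divergence on the boundary and therefore the tightness of the bound $\lVert P^{\frac{1}{2}}D_\eta^{\frac{1}{2}}\rVert\cdot\lVert Q^{\frac{1}{2}}D_\gamma^{\frac{1}{2}}\rVert<\tfrac{2}{\lVert A\rVert}$.

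The only delicate point I would flag is the degeneracy of the eigenstructure exactly at equality: one might naively expect unit-modulus eigenvalues to force bounded orbits, and it is precisely the Jordan block (non-semisimplicity) appearing at the boundary that produces the linear polynomial growth. For any strictly smaller product the two eigenvalues become a genuine complex-conjugate pair strictly on the unit circle, $M$ is diagonalizable, and the orbits remain bounded, matching Theorem \ref{thm:BoundedOrbits}; this contrast is what makes the chosen instance sharp rather than incidental.
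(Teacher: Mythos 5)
Your proof is correct, and it reaches the paper's conclusion by a genuinely different route. The paper uses the same boundary instance ($A=[1]$, $P=Q=[1]$, $\eta=\gamma=2$) but fixes the single initial condition $(x^0,y^0)=(0,-2)$ and verifies by induction the closed form $(x^t,y^t)=\left((-1)^t\cdot 4t,\,(-1)^{t+1}\cdot(4t+2)\right)$, which diverges. You instead analyze the one-step iteration matrix $M=\left(\begin{smallmatrix}1&\eta\\-\gamma&1-\eta\gamma\end{smallmatrix}\right)$ and show that exactly at equality its repeated eigenvalue $-1$ carries a nontrivial Jordan block, so $M^t=(-1)^t(I-tN)$ with $N=M+I$ nilpotent, giving linear growth for every initial condition off the line $\ker N=\{(x,-x)\}$. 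The two arguments are consistent -- your formula applied to $(0,-2)$ reproduces the paper's orbit exactly -- but yours buys more: it establishes divergence for almost every initial condition rather than a single one, and it isolates the mechanism (unit-modulus eigenvalues that fail to be semisimple precisely at the boundary), which also explains the dichotomy with the strict-inequality regime, where the eigenvalues form a genuine complex-conjugate pair, $M$ is diagonalizable, and orbits stay bounded in agreement with Theorem \ref{thm:BoundedOrbits}. What the paper's induction buys in exchange is elementarity: it needs no spectral theory at all. One cosmetic remark: your phrase ``substituting $y^t=y^{t-1}-\gamma x^t$ into $x^t=x^{t-1}+\eta y^{t-1}$'' has the substitution direction reversed (one substitutes the $x$-update into the $y$-update to eliminate $x^t$); the matrix $M$ you obtain is nevertheless the correct one.
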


\begin{figure}[H]
	\centering
	\includegraphics[scale=0.4]{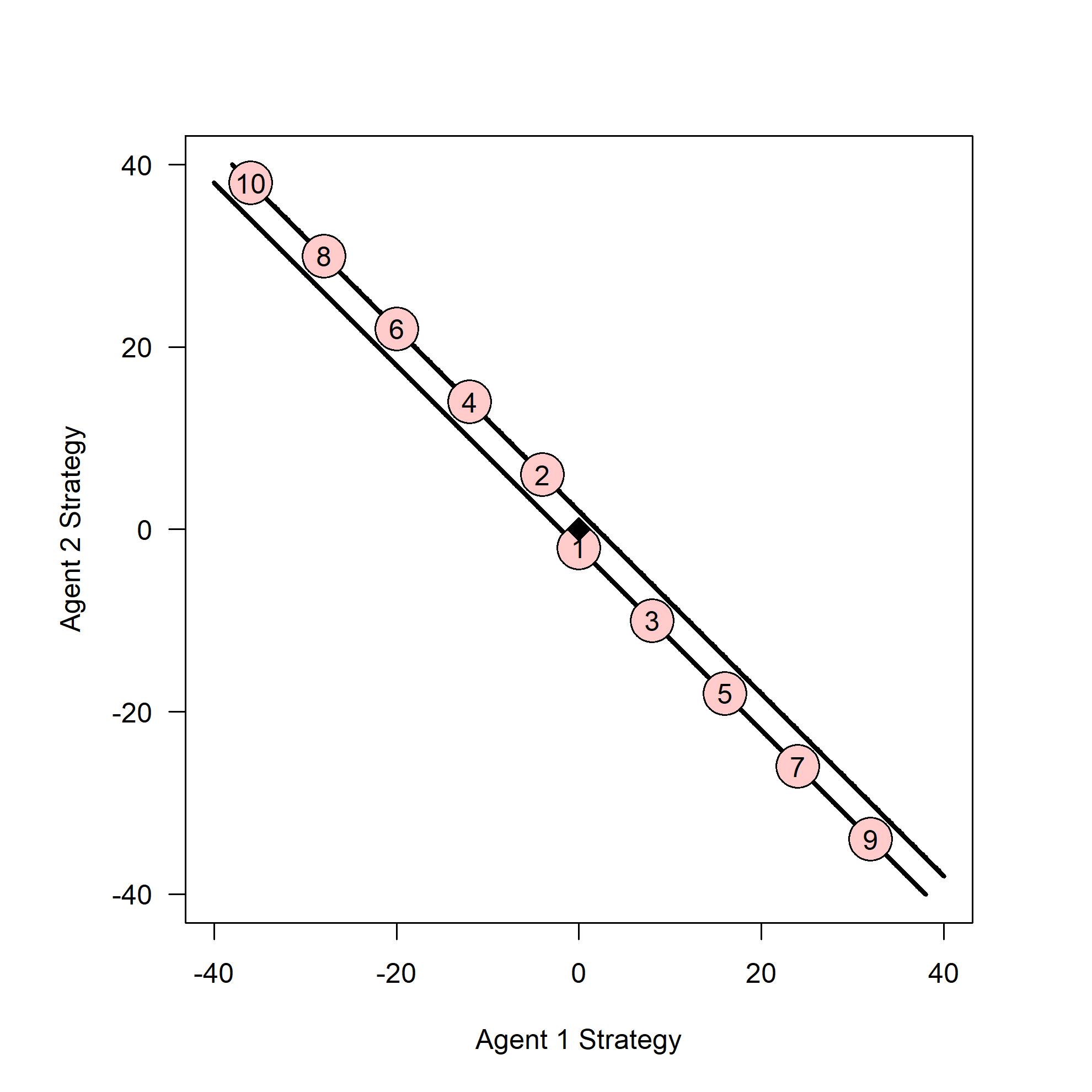}
	\caption{Algorithm \ref{alg:2Agent} applied to $A=[1]$, $P=Q=[1]$, $(x^0,y^0)=(0,-2)$ and $\eta=\gamma=2$.  Since $\left\lVert P^{\frac{1}{2}}D_\eta^{\frac{1}{2}}\right\rVert \cdot \left\lVert Q^{\frac{1}{2}}D_\gamma^{\frac{1}{2}}\right\rVert=  \frac{2}{\lVert A \rVert}$ the level sets of the energy function from Theorem \ref{thm:energy2} are not compact and the strategies diverge. }\label{fig:Counter}
\end{figure}

\begin{proof}
	Let $A=[1]$, $P=Q=[1]$, $(x^0,y^0)=(0,-2)$ and $\eta=\gamma=2$.
	Since $2=\sqrt{\eta\cdot\gamma}=\left\lVert P^{\frac{1}{2}}D_\eta^{\frac{1}{2}}\right\rVert\cdot\left\lVert Q^{\frac{1}{2}}D_\gamma^{\frac{1}{2}}\right\rVert= \frac{2}{\lVert A \rVert}= 2$, Theorem \ref{thm:converge} does not imply and we cannot immediately claim the strategies will remain bounded.
	Using induction, we will show $(x^t,y^t)=\left((-1)^{t}\cdot 4t, (-1)^{t+1}\cdot (4t+2)\right)$.  
	The result trivially holds for $t=0$.

	By the inductive hypothesis, $(x^{t-1},y^{t-1})=\left((-1)^{t-1}\cdot 4(t-1), (-1)^{t}\cdot (4t-2)\right)$. 
	Therefore
	\begin{align*}
	x^t = x^{t-1} + 2 \cdot A y^{t-1} &= (-1)^{t-1} (4t-4) + 2 (-1)^t(4t-2)\\
		&= (-1)^{t}\left(4-4t+8t-4  \right)=(-1)^{t}\cdot 4t.
	\end{align*}
	Similarly for agent 2, 
	\begin{align*}
	y^t = y^{t-1} - 2 \cdot A x^{t} &= (-1)^{t}\cdot (4t-2) - 2(-1)^{t}\cdot 4t\\
	&= (-1)^{t+1}\left(2-4t+8t  \right)=(-1)^{t+1}\cdot (4t+2).
	\end{align*}
	Thus, agent 1's strategy over time is the diverging sequence $\{(-1)^t\cdot 4t\}_{t=0}^\infty$. 
\end{proof}

\subsection{$O\left(1/T\right)$ Time-Average Convergence to Nash in Positive-Negative Definite Games} \label{sec:converge}

\begin{figure}[H]
	\centering
	\includegraphics[scale=0.4]{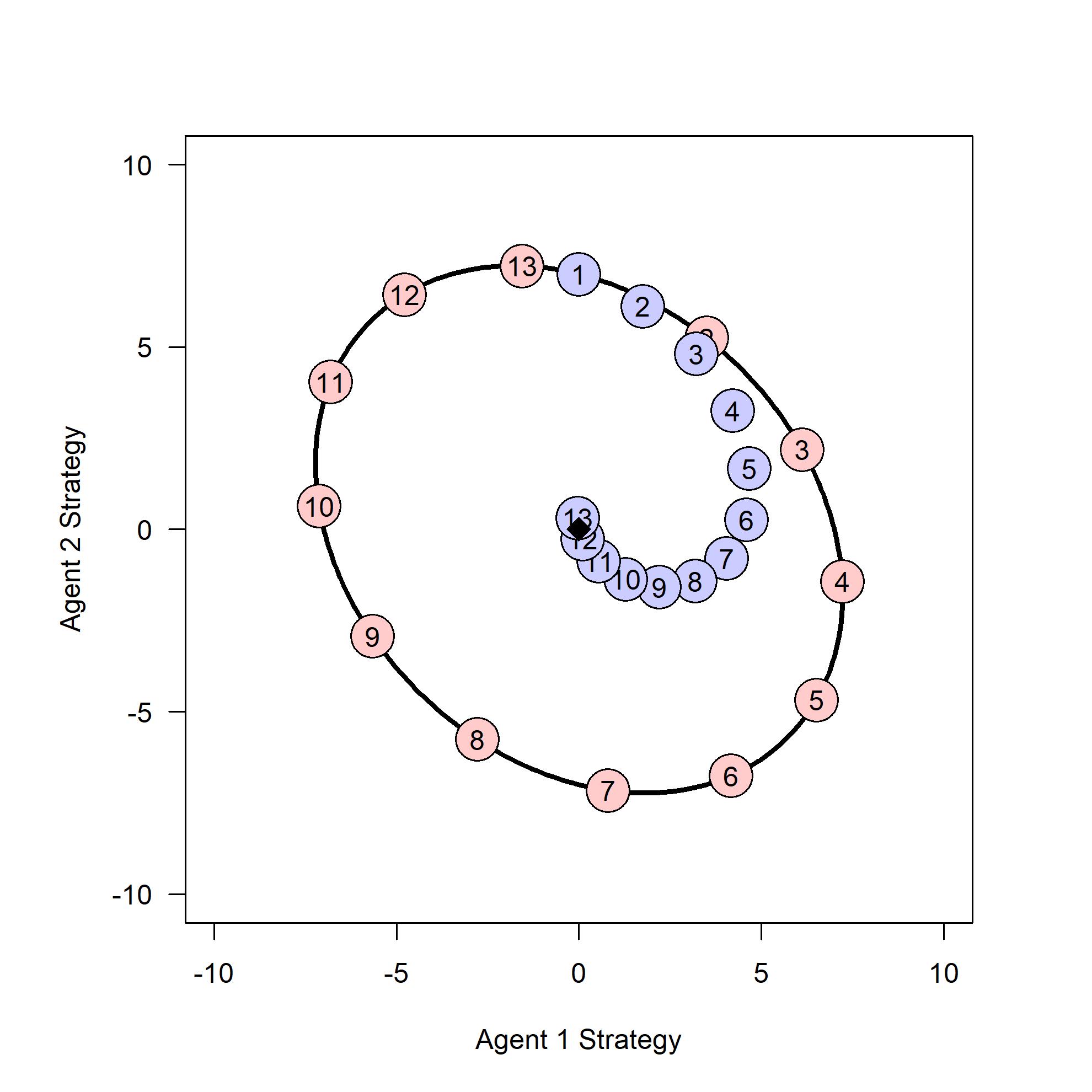}
	\caption{ The time-average strategies (blue) converge to the Nash equilibrium as the strategies (red) cycle around the Nash equilibrium.  }\label{fig:TimeAverage}
\end{figure}

In this section, we show that the time-average of the strategies converge to the set of Nash equilibria at rate $O(1/T)$ as depicted in Figure \ref{fig:TimeAverage}.  
We measure the distance to the set of Nash equilibria by $\left\lVert\sum_{t=0}^{T-1} \frac{PAy^t}{T}\right\rVert$.  
This is a standard measure since $y^*$ is a Nash equilibrium if and only if $PAy^*=\vec{0}$. 

\begin{theorem}\label{thm:converge}
	Suppose $P$ and $Q$ commute with $D_{\eta}$ and $D_{\gamma}$ respectively and $\left\lVert P^{\frac{1}{2}}D_{\eta}^{\frac{1}{2}}\right\rVert\cdot\left\lVert Q^{\frac{1}{2}}D_{\gamma}^{\frac{1}{2}}\right\rVert< \frac{2}{||A||}$ when updating both agents' strategies with Algorithm \ref{alg:2Agent} in (\ref{eqn:posneg}).
	Then agent 2's strategy has $O\left(1/T\right)$ time-average convergence to the set of Nash equilibria.  
	Formally, there exists a constant $c$ such that for all $i$, $\left\lVert\sum_{t=0}^{T-1} \frac{PAy^t}{T}\right\rVert \leq c/T $.
	Symmetrically, agent 1's strategy also has $O\left(1/T\right)$ time-average convergence to the set of Nash equilibria. 
\end{theorem}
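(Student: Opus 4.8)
The plan is to exploit a telescoping structure hidden in the update rule, together with the boundedness established in Theorem~\ref{thm:BoundedOrbits}. First I would rewrite agent~1's update (line~\ref{line:agent1} of Algorithm~\ref{alg:2Agent}, with effective payoff matrix $PA$ in (\ref{eqn:posneg})) as $x^{t+1}-x^t = D_\eta P A y^t$, and then isolate the gradient to obtain $P A y^t = D_\eta^{-1}(x^{t+1}-x^t)$. The key observation is that summing this identity over $t$ telescopes, giving $\sum_{t=0}^{T-1} P A y^t = D_\eta^{-1}(x^T-x^0)$. In other words, the entire cumulative gradient that measures distance to Nash collapses to a single difference of endpoints, so no matter how the strategies cycle in between, only $x^0$ and $x^T$ survive.

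From here the argument is short. Applying the triangle inequality and the operator-norm bound gives
\begin{align*}
\left\lVert \sum_{t=0}^{T-1}\frac{PAy^t}{T}\right\rVert = \frac{1}{T}\left\lVert D_\eta^{-1}(x^T-x^0)\right\rVert \leq \frac{\left\lVert D_\eta^{-1}\right\rVert}{T}\left(\left\lVert x^T\right\rVert + \left\lVert x^0\right\rVert\right).
\end{align*}
The whole theorem then reduces to showing that $\lVert x^T\rVert$ is bounded uniformly in $T$, which is exactly the content of Theorem~\ref{thm:BoundedOrbits} (and its closing remark bounding $\lVert x^t\rVert$ in the Euclidean norm). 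Setting $c := \lVert D_\eta^{-1}\rVert\,(M + \lVert x^0\rVert)$, where $M$ is the uniform bound on $\lVert x^T\rVert$, yields $\lVert \sum_{t=0}^{T-1} PAy^t/T\rVert \leq c/T$, as claimed.

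The symmetric claim for agent~1 follows the same recipe applied to line~\ref{line:agent2}: since $y^{t+1}-y^t = -D_\gamma Q A^\intercal x^{t+1}$, telescoping gives $\sum_{t=0}^{T-1} QA^\intercal x^{t+1} = -D_\gamma^{-1}(y^T-y^0)$, and the boundedness of $\lVert y^T\rVert$ from Theorem~\ref{thm:BoundedOrbits} again delivers the $O(1/T)$ rate.

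I expect essentially no obstacle in the convergence argument itself — it is a one-line telescoping followed by a norm estimate. All the genuine difficulty has already been absorbed into Theorem~\ref{thm:BoundedOrbits}, where the learning-rate hypothesis $\lVert P^{\frac{1}{2}}D_\eta^{\frac{1}{2}}\rVert\cdot\lVert Q^{\frac{1}{2}}D_\gamma^{\frac{1}{2}}\rVert < 2/\lVert A\rVert$ is precisely what forces the energy level sets to be compact and the orbits to stay bounded. The only point requiring a little care is to confirm that the telescoping identity needs no commutativity assumption (the factor $D_\eta$ cancels directly from $x^{t+1}-x^t=D_\eta PAy^t$), so the hypotheses of the theorem are inherited cleanly from the boundedness result rather than imposed anew.
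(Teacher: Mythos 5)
Your proposal is correct and follows essentially the same route as the paper: telescoping agent 1's update rule so that $\sum_{t=0}^{T-1} PAy^t$ collapses to a difference of endpoint strategies, then invoking the boundedness of orbits from Theorem~\ref{thm:BoundedOrbits} (the same is done symmetrically for agent 2 via line~\ref{line:agent2}). If anything, you are slightly more careful than the paper's own write-up, which drops the $D_\eta^{-1}$ factor from the telescoped identity; your version, which carries $\left\lVert D_\eta^{-1}\right\rVert$ into the constant $c$, is the fully precise form of the same argument.
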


Perhaps surprisingly, we do not use the regret property to prove time-average convergence.  
Rather, time-average convergence follows immediately from the compact level sets of the energy function (Theorem \ref{thm:BoundedOrbits}). 

\begin{proof}[Proof of Theorem \ref{thm:converge}]
	By Theorem \ref{thm:BoundedOrbits}, $\{x^t\}_{t=0}^\infty$ belongs to a compact set and there exists a $c$ such that $\lVert x^t-x^0\rVert\leq \lVert x^t\rVert +\lVert x^0\rVert \leq c$ for each iteration $t$. 
	Recall that $x^T=x^{T-1}+PAy^{T-1}=x^0+\sum_{t=0}^{T-1}PAy^{t}$. 
	Thus, $\left\lVert\sum_{t=0}^{T-1}PAy^{t}\right\rVert =  \lVert x^T-x^0\rVert \leq c$ completing the claim for agent 2. 
	
	The result for agent 1 follows identically using $y^T-y^0=\sum_{t=1}^T-QA^\intercal x^t$. 
\end{proof}

We remark that the constant $c$ can be computed directly using the bound in Theorem \ref{thm:BoundedOrbits}.
Once again, the bound on the learning rates is tight.

\begin{proposition}\label{prop:noconverge}
If the learning rates are too large when both agents use Algorithm \ref{alg:2Agent} in (\ref{eqn:posneg}), then time-average of the strategies may fail to converge -- even if $\left\lVert P^{\frac{1}{2}}D_\eta^{\frac{1}{2}}\right\rVert \cdot \left\lVert Q^{\frac{1}{2}}D_\gamma^{\frac{1}{2}}\right\rVert= \frac{2}{\lVert A \rVert}$. 
\end{proposition}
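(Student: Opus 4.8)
The plan is to reuse the divergent trajectory already constructed in the proof of Proposition \ref{prop:unbounded}, where the learning-rate bound holds with equality. I would take $A=[1]$, $P=Q=[1]$, $(x^0,y^0)=(0,-2)$ and $\eta=\gamma=2$, so that $\left\lVert P^{\frac{1}{2}}D_\eta^{\frac{1}{2}}\right\rVert\cdot\left\lVert Q^{\frac{1}{2}}D_\gamma^{\frac{1}{2}}\right\rVert=\sqrt{\eta\gamma}=2=2/\lVert A\rVert$, exactly the boundary case. From the proof of Proposition \ref{prop:unbounded} we already have the closed form $(x^t,y^t)=\left((-1)^t\cdot 4t,\ (-1)^{t+1}\cdot(4t+2)\right)$, so no new analysis of the dynamics is required.

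Next I would measure time-average convergence with the same quantity used in Theorem \ref{thm:converge}, namely $\left\lVert \sum_{t=0}^{T-1} PAy^t/T\right\rVert$, which is the natural distance-to-Nash since $PAy^*=\vec 0$ characterizes equilibria. Because $PA=[1]$ in this instance, this reduces to $\bigl|\sum_{t=0}^{T-1} y^t\bigr|/T$, so the entire argument collapses to evaluating the alternating partial sum $\sum_{t=0}^{T-1}(-1)^{t+1}(4t+2)$ exactly.

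The key computation is to establish, by a one-line induction (or by pairing consecutive terms in the telescoping alternating sum), that $\sum_{t=0}^{T-1} y^t=(-1)^T\cdot 2T$. Granting this, the time-averaged gradient satisfies $\left\lVert \sum_{t=0}^{T-1} PAy^t/T\right\rVert = 2T/T = 2$ for every $T$, so it stays pinned at distance $2$ from the Nash set and never approaches $0$. Equivalently, the time-averaged strategy $\frac{1}{T}\sum_{t=0}^{T-1} y^t=(-1)^T\cdot 2$ oscillates between $-2$ and $2$ and therefore fails to converge, which proves the proposition.

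The one subtle point worth flagging—and the crux of the argument—is that even though the iterates themselves diverge in magnitude (growing like $4t$), the alternating signs force enough cancellation that the partial sums grow only linearly in $T$ rather than quadratically. This is precisely what keeps the time-average bounded yet non-convergent: the average remains a fixed positive distance from equilibrium forever. For this reason I would compute the partial sum in closed form rather than merely bounding it, since a crude bound would wrongly suggest the average could blow up, whereas the real obstruction to convergence is its persistent oscillation at a bounded amplitude.
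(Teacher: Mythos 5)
Your proposal is correct and takes essentially the same approach as the paper: both reuse the divergent trajectory $(x^t,y^t)=\left((-1)^t\cdot 4t,\ (-1)^{t+1}(4t+2)\right)$ from Proposition \ref{prop:unbounded} and show that the resulting time-average oscillates between $\pm 2$, hence fails to converge. The only cosmetic difference is that you track agent 2's iterates $y^t$ through the distance measure of Theorem \ref{thm:converge} and obtain the exact value $(-1)^T\cdot 2$, whereas the paper computes the time-average of agent 1's iterates $x^t$, which alternates between $2$ and a sequence tending to $-2$.
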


\begin{proof}
	In Proposition \ref{prop:unbounded}, we showed that for $A=[1]$, $P=Q=1$, $(x^0,y^0)=(0,-2)$ and $\eta=\gamma=2$ that $x^t=(-1)^{t}\cdot 4t$. 
	Therefore, agent 1's time-average strategy alternates between $\sum_{t=0}^{2T} \frac{(-1)^t4\cdot t}{2T}=2$ and $\sum_{t=0}^{2T+1} \frac{(-1)^t4\cdot t}{2T+1}=\frac{-4T}{2T+1}\to-2$ on even and odd iterations thereby completing the proof. 
\end{proof}

\section{An Algorithm for Multiagent Systems}
\label{sec:Multi}

In this section, we extend our previous results to the multiagent system. 

\begin{align}
	\max_{x_i\in{\cal X}_i} \left\langle x_i, \sum_{j\neq i} A^{(ij)}x_j\right\rangle \ for \ all \ i \tag{Network Game} \label{eqn:MultiAgentGame}
\end{align}

Perhaps the most natural way to extend alternating gradient descent is to have agents iteratively take turns in a round-robin, i.e., agent 1 updates, then agent 2, and so on. 
However, in Section \ref{sec:Round}, we show this idea fails miserably -- regret can grow linearly. 
The secret to the success of alternating gradient descent (Algorithm \ref{alg:2Agent}), doesn't actually have anything to do with the perceived fairness of having agents take turns.

Instead, in Section \ref{sec:physics} we extend the results for alternating gradient descent by understanding it as an approximation of a Hamiltonian system, a well-understood physical system. 
Specifically, alternating gradient descent naturally arises when approximating this continuous-time system using a symplectic integrator -- specifically Verlet integration.
In Section \ref{sec:Reduce}, we reduce the multiagent game to a 2-agent game through the use of two meta-agents and in Section \ref{sec:MultiRegret}, we extend the regret and conservation guarantees of Sections \ref{sec:2AgentRegret}-\ref{sec:ConservationVolume} to the multiagent case. 
Finally, in Section \ref{sec:ZeroMulti}, we provide time-average convergence guarantees for positive-negative definite multiagent games. 

\subsection{Gradient Descent in a Round Robin}\label{sec:Round}

First, we consider a ``fair'' implementation of gradient descent where agents take turns updating and show that the algorithm can have linear regret. 

\begin{varalgorithm}{RoundGD}
	\caption{Multiagent Gradient Descent with Agents Taking Turns}\label{alg:Round}
	\label{alg:euclid}
	\begin{algorithmic}[1]
		\Procedure{RoundGD}{$A,x^0,\eta$}\Comment{Payoff Matrices, Initial Strategies and Learning Rates}
		\For{\texttt{$t=1,...,T$}}
			\For{\texttt{$i=1,...,N$}}
				\State $x_i^t:= x_i^{t-1} + D_{\eta_i} \sum_{j< i} A^{(ij)}x_j^{t} + D_{\eta_i} \sum_{j> i} A^{(ij)}x_j^{t-1}$ \Comment{Agent Updates Strategies \underline{After} Seeing Updated Strategies of All Agents Updating Prior in the Round Robin}\label{line:Round}
			\EndFor
		\EndFor
		\EndProcedure
	\end{algorithmic}
\end{varalgorithm}

%\begin{algorithm}[H]
%	\SetAlgoLined
%	%	\KwResult{Write here the result }
%	\textbf{Input:} \\
%	\hspace*{1.2em} Agent $i's$ payoff matrix against agent $j$: $A^{(ij)}$ for $i=1,...,n$ and $j\neq i$\\
%	\hspace*{1.2em} Agent $i$ initial strategy: $x_i^0\in {\cal X}_i$ for $i=1,...,n$\\
%	\hspace*{1.2em} Agent $i$ learning rate: ${\eta}_i>0$ for $i=1,...,n$\\
%
%	
%	\For{$t=1,...,T$}{
%		\For{$i=1,...,N$}{
%		$x_i^t:= x_i^{t-1} + D_{\eta_i} \sum_{j< i} A^{(ij)}x_j^{t} + D_{\eta_i} \sum_{j> i} A^{(ij)}x_j^{t-1}$ \tcp*{Agent Updates Strategies \underline{After} Seeing Updated Strategies of All Agents Updating Prior in the Round Robin}\label{line:Round}
%		}
%	}
%	\caption{Gradient Descent Applied in a Round Robin.}\label{alg:Round}
%\end{algorithm}

\begin{remark}
	If line \ref{line:Round} of Algorithm \ref{alg:Round} is replaced with $x_i^t:= x_i^{t-1} + D_{\eta_i} \sum_{j\neq i} A^{(ij)}x_j^{t-1}$ then Algorithm \ref{alg:Round} reduces to the standard implementation of gradient descent with simultaneous updates. 
\end{remark}

\begin{proposition}
	If agents take turns using gradient descent in a multiagent setting (Algorithm \ref{alg:Round}), then an agent's regret can grow linearly. \label{prop:robin}
\end{proposition}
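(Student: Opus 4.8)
The plan is to exhibit an explicit small game on which Algorithm \ref{alg:Round} produces unbounded iterates and then read off non-sublinear regret for a single agent. The first thing I would note is that the phenomenon is genuinely multiagent: for $N=2$ the round-robin update reduces exactly to Algorithm \ref{alg:2Agent} (agent $1$ sees only old strategies, agent $2$ sees agent $1$'s fresh update), so Theorem \ref{thm:regret} already gives bounded regret there. Hence I would take $N=3$ with scalar strategies, the cyclic zero-sum payoffs $A^{(12)}=A^{(23)}=A^{(31)}=1$ and $A^{(21)}=A^{(32)}=A^{(13)}=-1$, and unit learning rates.

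Next I would compose the three sequential updates of line \ref{line:Round} into a single linear map $x^t = Mx^{t-1}$ on $\mathbb{R}^3$; carrying out the substitutions (agent $2$ uses the freshly updated $x_1^t$, agent $3$ uses both $x_1^t$ and $x_2^t$) yields
\[ M = \begin{pmatrix} 1 & 1 & -1 \\ -1 & 0 & 2 \\ 2 & 1 & -2 \end{pmatrix}. \]
Each individual update is a shear, so $\det M = 1$ and the map is volume preserving, exactly as in Theorem \ref{thm:Volume}; this rules out a naive contraction argument but does \emph{not} prevent directional blow-up. The crucial computation is that the characteristic polynomial factors as $(\lambda-1)(\lambda+1)^2$, and that the eigenvalue $\lambda=-1$ has a deficient eigenspace: algebraic multiplicity two but only the single eigenvector $w=(1,-1,1)$. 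The resulting nontrivial Jordan block forces $x^t = (-1)^t\big(t\,w + \text{lower order}\big)$ whenever $x^0$ excites the generalized eigenvector, so the strategies grow linearly in $t$.

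Finally I would convert divergence into regret by choosing the comparator to be the fixed strategy $\vec 0$ (always a valid point when ${\cal X}_i=\mathbb{R}^{S_i}$), so that agent $i$'s regret is simply $-\sum_{t}\langle x_i^t, \sum_{j\neq i}A^{(ij)}x_j^t\rangle$. Because the game is zero-sum, the identity $\sum_i \langle x_i^t, \sum_{j\neq i}A^{(ij)}x_j^t\rangle = 0$ holds at every iteration, so I only need one agent whose cumulative utility tends to $-\infty$. Substituting the leading term $x^t\approx (-1)^t t\,w$ gives $\langle x_i^t, \sum_{j\neq i}A^{(ij)}x_j^t\rangle \approx t^2\, w_i\sum_{j\neq i}A^{(ij)}w_j$, and a direct evaluation produces coefficients $-2,\,0,\,+2$ for agents $1,2,3$ (summing to zero, as the zero-sum identity requires). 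Hence agent $1$'s realized cumulative utility is $\approx -\tfrac{2}{3}T^3$ and its regret against $\vec 0$ is $\Theta(T^3)$, which in particular is not $o(T)$; the regret fails to be sublinear.

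The main obstacle, and the real content beyond the routine matrix algebra, is the middle step: recognizing that round-robin play, unlike two-agent alternation, composes into a volume-preserving map whose repeated eigenvalue has a deficient eigenspace, so that volume preservation coexists with genuine unboundedness. Pinning down the \emph{sign} of the regret is the remaining delicacy, and it is handled cleanly by the $\vec 0$ comparator together with the zero-sum cancellation identity, which guarantees some agent must be losing at the $t^2$ rate and therefore accrue regret growing at least linearly.
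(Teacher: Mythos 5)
Your example and its linear algebra are correct, and the mechanism is genuinely different from the paper's: you obtain unbounded iterates from a defective eigenvalue, whereas the paper's construction (a $2$-agent zero-sum cycle augmented with $k$ dummy agents whose payoff matrices are zero) keeps every strategy bounded and extracts $\Theta(kT)$ regret purely from the accounting of when utility is received. I verified your matrix $M$, the factorization of its characteristic polynomial as $-(\lambda-1)(\lambda+1)^2$, the one-dimensional $\lambda=-1$ eigenspace spanned by $w=(1,-1,1)$, and the resulting $\Theta(t)$ growth of $x^t$ along $\pm w$. The gap is in your final step, converting divergence into regret. In this paper's framework for sequential updates, utility -- and hence regret -- is accrued after \emph{every} update, not once per round: that is the convention of Section \ref{sec:2AgentRegret}, it is the convention the paper's own proof of Proposition \ref{prop:robin} exploits (each dummy update re-counts the configuration unfavorable to agent 1), and it is the convention you yourself rely on when you invoke Theorem \ref{thm:regret} (and implicitly Corollary \ref{cor:ZSRegret}) to dismiss $N=2$. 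Your regret computation instead evaluates agent 1 only at the end-of-round profiles $x^t$.

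Under the paper's accounting this breaks your quantitative argument. Within round $t$, agent 1 also receives utility at the profile $(x_1^t,x_2^{t-1},x_3^{t-1})$ produced by its own update, where its payoff is $x_1^t(x_2^{t-1}-x_3^{t-1})\approx +2c^2t^2$ (writing $x^t\approx(-1)^t c\,t\,w$); this cancels the $-2c^2t^2$ you compute at the end-of-round profile. The cancellation is not accidental: each full opponent profile $(x_2^t,x_3^t)$ is played against exactly two of agent 1's strategies, $x_1^t$ and $x_1^{t+1}$, and since $x_1^{t+1}-x_1^t=x_2^t-x_3^t$ these paired terms telescope to $(x_1^T)^2-(x_1^1)^2$, exactly as in the proofs of Theorems \ref{thm:regret} and \ref{thm:large1}. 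What survives is the unpaired ``mixed'' profile $(x_1^t,x_2^t,x_3^{t-1})$, whose contribution is only $\Theta(t)$ per round, plus a boundary term; a careful computation then gives agent 1 cumulative utility $\approx -\tfrac{4}{3}c^2T^2$ at the end of round $T$, hence regret $\Theta(T^2)$ rather than $\Theta(T^3)$, and the same quantity measured immediately after agent 1's own update is actually \emph{positive} of order $T^2$, so no agent is ``losing at the $t^2$ rate'' under the paper's accounting. Your construction is salvageable -- full accounting at end-of-round still yields super-linear $\Theta(T^2)$ regret, which proves the proposition -- but the proof as written establishes growth only for a notion of regret that the paper explicitly rejects for sequential play, and the sign analysis you defer to the zero-sum identity genuinely reopens once the intermediate configurations are counted.
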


\begin{proof}
	Consider the simple 2-agent zero-sum game with $A^{(12)}=[1]$ and $A^{(21)}=[-1]$ with initial strategies $x^0_1=x^0_2=1$. Suppose both agents update according to Algorithm \ref{alg:2Agent} with learning rate $\eta_1=\eta_2=1$. 
	The agents strategies will cycle every 6 iterations (12 updates) as shown in Figure \ref{fig:cycle}.
	As such, will gain 0 utility from any 6 consecutive iterations -- $+6$ from when agent 1 updates and $-6$ from when agent 2 updates. 
	
	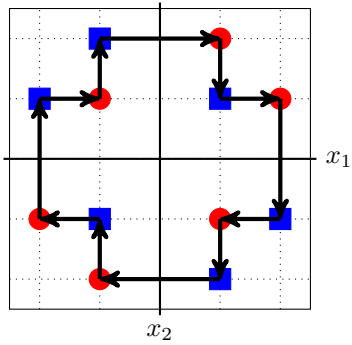
\begin{figure}[H]
		\centering
	\begin{tikzpicture}[scale=.8]
	\draw[draw=black] (-2.5,-2.5) rectangle (2.5,2.5);
	\foreach \i in {-2,-1,0,1,2}
	{
		\draw[dotted] (-2.5,\i)--(2.5,\i);	
		\draw[dotted] (\i,-2.5)--(\i,2.5);
	}
	\foreach \i in {0}
	{
		\draw[thick] (-2.6,\i)--(2.6,\i);	
		\draw[thick] (\i,-2.6)--(\i,2.6);
	}
	\node[right] at (2.6,0) {$x_1$};
	\node[below] at (0,-2.6) {$x_2$};
	\coordinate (A1) at (2,1);
	\coordinate (B1) at (2,-1);
	\coordinate (A2) at (1,-1);
	\coordinate (B2) at (1,-2);
	\coordinate (A3) at (-1,-2);
	\coordinate (B3) at (-1,-1);
	\coordinate (A4) at (-2,-1);
	\coordinate (B4) at (-2,1);
	\coordinate (A5) at (-1,1);
	\coordinate (B5) at (-1,2);
	\coordinate (A6) at (1,2);
	\coordinate (B6) at (1,1);
	
	\node[mark size=4pt,color=red] at (A1) {\pgfuseplotmark{*}};
	\node[mark size=4pt,color=blue] at (B1){\pgfuseplotmark{square*}};
	\node[mark size=4pt,color=red] at (A2) {\pgfuseplotmark{*}};
	\node[mark size=4pt,color=blue] at (B2){\pgfuseplotmark{square*}};
	\node[mark size=4pt,color=red] at (A3) {\pgfuseplotmark{*}};
	\node[mark size=4pt,color=blue] at (B3){\pgfuseplotmark{square*}};
	\node[mark size=4pt,color=red] at (A4) {\pgfuseplotmark{*}};
	\node[mark size=4pt,color=blue] at (B4){\pgfuseplotmark{square*}};
	\node[mark size=4pt,color=red] at (A5) {\pgfuseplotmark{*}};
	\node[mark size=4pt,color=blue] at (B5){\pgfuseplotmark{square*}};
	\node[mark size=4pt,color=red] at (A6) {\pgfuseplotmark{*}};
	\node[mark size=4pt,color=blue] at (B6){\pgfuseplotmark{square*}};

	\draw[,>=stealth',->,ultra thick] (A1)->(B1);
	\draw[,>=stealth',->,ultra thick] (B1)->(A2);
	\draw[,>=stealth',->,ultra thick] (A2)->(B2);
	\draw[,>=stealth',->,ultra thick] (B2)->(A3);
	\draw[,>=stealth',->,ultra thick] (A3)->(B3);
	\draw[,>=stealth',->,ultra thick] (B3)->(A4);
	\draw[,>=stealth',->,ultra thick] (A4)->(B4);
	\draw[,>=stealth',->,ultra thick] (B4)->(A5);
	\draw[,>=stealth',->,ultra thick] (A5)->(B5);
	\draw[,>=stealth',->,ultra thick] (B5)->(A6);
	\draw[,>=stealth',->,ultra thick] (A6)->(B6);
	\draw[,>=stealth',->,ultra thick] (B6)->(A1);
	\end{tikzpicture}
	\caption{The red circles correspond to agents' strategies after agent 1 updates and the blue squares correspond to the strategies after agent 2 updates. Agent 1's total utility from the red circles is $(1\cdot 2) + (2\cdot 1) +(1\cdot -1)+(-1\cdot-2)+(-2\cdot-1)+(-1\cdot 1)=6$ and agent 1's total utility from the blue squares is $-6$. }\label{fig:cycle}
	\end{figure}

	Now consider the addition of $k$-dummy agents where $A^{(ij)}=[0]$ for all $i=3,...,k+2$ and $j=1,...,k+2$.  
	Further suppose that all agents update according to Algorithm \ref{alg:Round}, i.e., agent 1 updates, then agent 2, and so on. 
	When agent $j>2$ updates, no agents will change their strategies since their payoff matrices are all zero. 
	As a result, in the space of the first two agents, for each update spent at a red circle in Figure \ref{fig:cycle} there will be $k+1$ updates at the following blue square.  
	Therefore every 6 round-robins (every cycle) will contribute $-6k$ to agent 1's cumulative utility implying agent 1's regret with respect to $x_1=0$ is $\Theta(k\cdot T)$. 
\end{proof}

\subsection{Designing a Multiagent Algorithm Based on Physics}\label{sec:physics}

As demonstrated by Proposition \ref{prop:robin}, the reason Algorithm \ref{alg:2Agent} works isn't because it makes agents take turns in a seemingly fair fashion.
Rather, Algorithm \ref{alg:2Agent} works because it is the result of a deep understanding of the physical system that drives gradient descent. 
By \cite{Bailey19Hamiltonian}, the continuous-time version of 2-agent gradient descent is a Hamiltonian system (e.g., Earth-moon system) where agent 1 corresponds to ``position'' and agent 2 corresponds to ``momentum''.
The continuous-time variant has nice optimality and stability guarantees: $O(1/T)$ time-average regret and recurrence in zero-sum games \cite{Mertikopoulos2018CyclesAdverserial}.
Algorithm \ref{alg:2Agent} is obtained by applying Verlet integration \cite{Bailey2019Regret}, an integration technique well-suited for approximating Hamiltonian dynamics \cite{Hairer2006EnergyConserve}, to the underlying Hamiltonian system.

Verlet integration corresponds to simply alternating between updating ``position'' and ``momentum'' in the underlying system.
However, in the multiagent case, it is unclear which agents correspond to position and momentum respectively. 
Like \cite{Bailey19Hamiltonian}, which shows the continuous-time system is Hamiltonian, we resolve this issue by allowing agents to be both position and momentum.
However, instead of double-counting each agent as in \cite{Bailey19Hamiltonian}, we instead duplicate each agent and build a game between the original and duplicated agents. Specifically, we allow the original agent $i$ control the strategy $x_i\in {\cal X}_i$ while their doppelganger controls strategy $y_i\in {\cal X}_i$ resulting in the following game. 
\begin{equation}\label{eqn:NetworkGame}
\begin{aligned}
\max_{x_i\in {\cal X}_i} \left \langle x_i, \sum_{j\neq i} A^{(ij)} y_j\right \rangle \ for \ all \ i\\
\max_{y_i\in {\cal X}_i} \left \langle y_i, \sum_{j\neq i} A^{(ij)} x_j \right \rangle\ for \ all \ i\\
\end{aligned}\tag{Network Game with Duplicated Agents}
\end{equation}

\begin{varalgorithm}{AltGD}
	\caption{Verlet Integration of Continuous Gradient Descent with Duplicated Agents (Alternating Gradient Descent for Multiagent Systems)}\label{alg:Verlet}
	\label{alg:euclid}
	\begin{algorithmic}[1]
		\Procedure{AltGD}{$A,x^0,y^0,\eta,\gamma$}\Comment{Payoff Matrices, Initial Strategies and Learning Rates}
%		\State \Comment{Duplicated Agents May Use Different Initial Strategies and Learning Rates.}
%		\State \Comment{The original agent $i$ has the objective $\max_{x_i\in {\cal X}_i} \left\langle x_i, \sum_{j\neq i} A^{(ij)}y_j\right\rangle$.}
%		\State \Comment{The duplicated agent $i$ has the objective $\max_{y_i\in {\cal X}_i} \left\langle y_i, \sum_{j\neq i} A^{(ij)}x_j\right\rangle$.}
		\For{\texttt{$t=1,...,T$}}
			\For{\texttt{$i=1,...,N$}}
				\State $x_i^t:= x_i^{t-1} + D_{\eta_i} \sum_{j\neq i} A^{(ij)}y_j^{t-1}$ \Comment{Update Based on Previous Iteration}\label{line:original}
			\EndFor
			\For{\texttt{$i=1,...,N$}}
				\State $y_i^t:= y_i^{t-1} + D_{\gamma_i} \sum_{j\neq i} A^{(ij)}x_j^{{{t}}}$ \Comment{Update Based on Current Iteration.}\label{line:duplicate}
			\EndFor
		\EndFor
		\EndProcedure
	\end{algorithmic}
\end{varalgorithm}

%\begin{algorithm}[H]
%	\SetAlgoLined
%%	\KwResult{Write here the result }
%	\textbf{Input:} \\
%	\hspace*{1.2em} Agent $i's$ payoff matrix against agent $j$: $A^{(ij)}$ for $i=1,...,n$ and $j\neq i$\\
%	\hspace*{1.2em} Agent $i$ initial strategy: $x_i^0\in {\cal X}_i$ for $i=1,...,n$\\
%	\hspace*{1.2em} Agent $i$ learning rate: ${\eta}_i>0$ for $i=1,...,n$\\
%	\hspace*{1.2em} Duplicate Agent $i$ initial strategy: $y_i^0\in {\cal X}_i$ for $i=1,...,n$\\
%	\hspace*{1.2em} Duplicate Agent $i$ learning rate: ${\gamma}_i>0$ for $i=1,...,n$\\
%	
%	\tcc{The original agent $i$ has the objective $\max_{x_i\in {\cal X}_i} \left\langle x_i, \sum_{j\neq i} A^{(ij)}y_j\right\rangle$}
%	\tcc{The duplicated agent $i$ has the objective $\max_{y_i\in {\cal X}_i} \left\langle y_i, \sum_{j\neq i} A^{(ij)}x_j\right\rangle$}
%	
%	\For{$t=1,...,T$}{
%		$x_i^t:= x_i^{t-1} + D_{\eta_i} \sum_{j\neq i} A^{(ij)}y_j^{t-1}$ \tcp*{Original Agent Updates Strategies via Gradient Descent}\label{line:original}
%		
%		$y_i^t:= y_i^{t-1} + D_{\gamma_i} \sum_{j\neq i} A^{(ij)}x_j^{{{t}}}$ \tcp*{Duplicate Agent Updates Strategies via Gradient Descent \underline{After} Seeing Original Agents' Updated Strategies.}\label{line:duplicate}
%	}
%	\caption{Verlet Integration of Continuous Gradient Descent with Duplicated Agents (Alternating Gradient Descent for Multiagent Systems)}\label{alg:Verlet}
%\end{algorithm}

\begin{remark}
	If the strategies and learning rates are initialized so that $x_i^0=y_i^0$ and ${\eta}_i={\gamma}_i$, and if line \ref{line:duplicate} of Algorithm \ref{alg:Verlet} is replaced with $y_i^t:= y_i^{t-1}+ {\eta}_i \sum_{j\neq i} A^{(ij)} x_j^{t-1}$ (the original and duplicate agents update simultaneously) then $x_i^t=y_i^t$ in each iteration and the algorithm reduces to the standard version of gradient descent where the original agents simultaneously update with respect to the original agents' previously played strategies (Algorithm \ref{alg:GradientMulti}). 
\end{remark}

Since no duplicate agent actually exists in many economic settings, Algorithm \ref{alg:Verlet} should primarily be used when interested in aggregate behavior, i.e., $\bar{x}_i^T=\sum_{t=0}^T x_i^t/T$. 
Such applications are fairly standard in GANs and in other simulated environments such as bargaining and resource allocation problems that seeks a Nash equilibrium without agents directly sharing their payoff matrices, e.g., \cite{Shahrampour20OnlineAllocation}. 

\subsection{Reducing the Multiagent System to a 2-Agent Game}\label{sec:Reduce}

	By introducing two meta-agents to control the original and duplicated agents, it is possible to express (\ref{eqn:MultiAgentGame}) and Algorithm \ref{alg:Verlet} as  \ref{eqn:2AgentGame} and Algorithm \ref{alg:2Agent} respectively. 
	Formally, we consider the following meta-game:

		\begin{equation}\label{eqn:MetaGame}
	\begin{aligned}
	\max_{\bar{x}\in \times_i{\cal X}_i} \left \langle \bar{x}, \bar{A}\bar{y}\right \rangle \\
	\max_{\bar{y}\in \times_i{\cal X}_i} \left \langle \bar{y}, \bar{A}\bar{x}\right \rangle
	\end{aligned}\tag{Meta-Game}
	\end{equation}
	where $\bar{x}=[x_1,x_2,...,x_n]$, $\bar{y}=[y_1,y_2,...,y_n]$ and 
	\begin{align*}
		\bar{A}=\left[\begin{array}{ c c c c c}
			A^{(11)}=0 & A^{(12)} & A^{(13)} & \cdots & A^{(1n)}\\
			A^{(21)} & A^{(22)}=0 & A^{(23)} & \cdots & A^{(2n)}\\
			A^{(31)} & A^{(32)} & A^{(33)}=0 & \cdots & A^{(3n)}\\
			\vdots & \vdots & \vdots & \ddots & \vdots\\
			A^{(n1)} & A^{(n2)} & A^{(n3)} & \cdots & A^{(nn)}=0\\
		\end{array}\right]
	\end{align*}
	
	In Theorem \ref{thm:equiv}, we show the sets of Nash equilibria for (\ref{eqn:MetaGame}) and (\ref{eqn:MultiAgentGame}) are equivalent. 
	Moreover, in Theorem \ref{thm:same}, we show that the Algorithms \ref{alg:2Agent} and \ref{alg:Verlet} result in the same updates for their respective games. 
	As such, most of our results for 2-agent systems readily extend to the multiagent setting.

	This reduction emphasizes the importance of allowing Algorithm \ref{alg:2Agent} to run with an arbitrary vector of learning rates.  
	In the multiagent system, agents individually select their learning rates and therefore do not necessarily use the same learning rates.  
	When applying Algorithm \ref{alg:2Agent} to (\ref{eqn:MetaGame}), the meta-agents will have different learning rates associated with each agent. 
	Notably, our generalization still allows an individual to use different learning rates for different strategies, even in the multiagent setting. 
	However, just like  the 2-agent setting, we see no algorithmic benefit for a single agent to use a vector of learning rates.
	
	\begin{theorem}\label{thm:equiv}
		The strategies $\left(\bar{x}^*, \bar{y}^*\right)$ are a Nash equilibrium for (\ref{eqn:MetaGame})
		if and only if $\bar{x}^*=\left[ \bar{x}_1^*,\cdots \bar{x}_n^*\right]$ and $\bar{y}^*=\left[ \bar{y}_1^*,\cdots \bar{y}_n^*\right]$ are both Nash equilibria (possibly the same) for (\ref{eqn:MultiAgentGame}). 
	\end{theorem}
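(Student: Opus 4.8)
The plan is to collapse both sides of the claimed equivalence onto their first-order (gradient-vanishing) characterizations and then observe that the block structure of $\bar{A}$ makes the two conditions coincide. Since every strategy space in sight is affine and full-dimensional ($\times_i {\cal X}_i = \mathbb{R}^{\sum_i S_i}$), I would invoke the characterization established in the Preliminaries: a point is a Nash equilibrium exactly when each agent's payoff gradient vanishes. This turns the entire argument into linear algebra on $\bar{A}$.

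First, I would apply this characterization to (\ref{eqn:MetaGame}). Read as an instance of (\ref{eqn:2AgentGame}) with both meta-agents' payoff matrices equal to $\bar{A}$, the pair $(\bar{x}^*,\bar{y}^*)$ is a Nash equilibrium if and only if the gradient of meta-agent $\bar{x}$'s payoff, namely $\bar{A}\bar{y}^*$, equals $\vec{0}$ and, symmetrically, $\bar{A}\bar{x}^* = \vec{0}$. Second, I would unpack the block equation $\bar{A}\bar{y}^* = \vec{0}$. Writing $\bar{y}^* = [\,y_1^*,\dots,y_n^*\,]$, the $i$-th block of $\bar{A}\bar{y}^*$ is $\sum_{j} A^{(ij)} y_j^* = \sum_{j\neq i} A^{(ij)} y_j^*$, where the last equality uses that the diagonal blocks $A^{(ii)}$ of $\bar{A}$ are zero. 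Hence $\bar{A}\bar{y}^* = \vec{0}$ is equivalent to $\sum_{j\neq i} A^{(ij)} y_j^* = \vec{0}$ for every $i$, which is precisely the gradient-vanishing condition identifying $\bar{y}^*$ as a Nash equilibrium of (\ref{eqn:MultiAgentGame}). The identical argument applied to $\bar{A}\bar{x}^* = \vec{0}$ shows $\bar{x}^*$ is a Nash equilibrium of (\ref{eqn:MultiAgentGame}).

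Finally, I would assemble both directions of the iff at once: the conjunction $\bar{A}\bar{y}^* = \vec{0}$ and $\bar{A}\bar{x}^* = \vec{0}$ characterizing the Meta-Game equilibrium decouples exactly into the two independent statements that $\bar{x}^*$ and $\bar{y}^*$ are each (possibly distinct) Nash equilibria of (\ref{eqn:MultiAgentGame}). Because the two meta-conditions are separate and symmetric, no compatibility constraint links $\bar{x}^*$ and $\bar{y}^*$, which is exactly why the theorem allows them to be the same or different.

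I do not expect a genuine obstacle here; the content is a routine matching of indices. The only two points meriting care are the justification that the Nash condition reduces to a vanishing gradient—resting on full-dimensionality of each ${\cal X}_i$—and the bookkeeping that the zero diagonal blocks of $\bar{A}$ turn $\sum_j$ into $\sum_{j\neq i}$, so that the block rows of $\bar{A}\bar{y}^*$ and $\bar{A}\bar{x}^*$ reproduce the original per-agent payoff gradients verbatim. Everything else follows directly from the characterizations already in hand.
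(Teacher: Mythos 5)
Your proposal is correct and matches the paper's proof in substance: the paper likewise reduces both Nash conditions to vanishing gradients (re-deriving the first-order characterization inline via the ``move in the direction $\bar{A}\bar{y}^*$ to gain $\lVert \bar{A}\bar{y}^*\rVert^2$'' argument rather than citing the Preliminaries) and then matches the block rows of $\bar{A}$ to the per-agent conditions $\sum_{j\neq i} A^{(ij)}y_j^* = \vec{0}$. The only difference is presentational, not mathematical.
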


	\begin{proof}
		First, let $\left(\bar{x}^*, \bar{y}^*\right)$ be a Nash equilibrium of (\ref{eqn:MetaGame}). 
		Then $\bar{A}\bar{y}^*=\vec{0}$;  otherwise, meta-agent 1 could increase their utility by $\lVert \bar{A}\bar{y}^* \rVert^2>0$ by updating their strategy to $\bar{x}^*+ \bar{A}\bar{y}^*$. 
		Therefore, by definition of $\bar{A}$ and $\bar{y}$, $\sum_{j\neq i} A^{(ij)}{\bar{y}_j^*}=\vec{0}$ for each $i$. 
		It then holds that $\bar{y}_i^*$ is a best response to $\bar{y}^*_{-i}$ (all strategies but agent $i$) in (\ref{eqn:MultiAgentGame}) since 
		$y_i\sum_{j\neq i} A^{(ij)}{\bar{y}_j^*}=\vec{0}$ for all $y_i\in {\cal X}_i$.  
		This holds for each agent $i$ and therefore $y^*$ i a Nash equilibrium for (\ref{eqn:MultiAgentGame}).
		The argument for $\bar{x}^*$ follow identically. 
		
		Next, let $y^*$ be a Nash equilibrium of (\ref{eqn:MultiAgentGame}).  Then $\sum_{j\neq i} A^{(ij)}{y_j^*}=\vec{0}$ for each $i$ since otherwise agent $i$ could increase their utility by $||\sum_{j\neq i} A^{(ij)}{y_j^*}||^2$ with the strategy $y_i^*+ \sum_{j\neq i} A^{(ij)}{y_j^*}$.  As such, $\bar{A}y^*=\vec{0}$ and $y^*$ is a Nash equilibrium of (\ref{eqn:MetaGame}). 
		The argument holds identically for ${x}^*$. 
	\end{proof}

\begin{theorem}\label{thm:same}
	Suppose $\{x^t,y^t\}_{t=0}^T$ is obtained by updating (\ref{eqn:MultiAgentGame}) with Algorithm \ref{alg:Verlet} using learning rates $\eta=(\eta_1,\eta_2,...,\eta_N)$ and $\gamma= (\gamma_1, \gamma_2, ..., \gamma_N)$ and initial strategies $(x^0,y^0)$. 
	Further, suppose $\{\bar{x}^t, \bar{y}^t\}_{t=0}^T$ is obtained by updating   (\ref{eqn:MetaGame}) with Algorithm \ref{alg:2Agent} with learning rates $\eta$ and $\gamma$ and initial strategy $(\bar{x}^0, \bar{y}^0)=(x^0, y^0)$. 
	Then $(\bar{x}^t, \bar{y}^t)=(x^t,y^t)$ for all $t=0,...,T$. 
\end{theorem}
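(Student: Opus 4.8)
The plan is to proceed by induction on $t$, showing that the block structure of $\bar{A}$ together with the block-diagonal structure of the meta learning-rate matrices causes the two vector updates of Algorithm \ref{alg:2Agent} to decompose, block by block, into exactly the per-agent updates of Algorithm \ref{alg:Verlet}.

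First I would fix the dictionary between the two settings. In the meta-game (\ref{eqn:MetaGame}) both payoff matrices equal $\bar{A}$, so Algorithm \ref{alg:2Agent} reads $\bar{x}^t = \bar{x}^{t-1} + D_\eta \bar{A} \bar{y}^{t-1}$ (line \ref{line:agent1}) followed by $\bar{y}^t = \bar{y}^{t-1} + D_\gamma \bar{A} \bar{x}^t$ (line \ref{line:agent2}). Here the meta learning-rate matrix $D_\eta$ is the block-diagonal matrix $\mathrm{diag}(D_{\eta_1}, \dots, D_{\eta_N})$ induced by stacking the agents' individual learning-rate vectors, and likewise for $D_\gamma$. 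The key structural observations are that (i) the $i$-th block of $\bar{A}\bar{y}$ equals $\sum_{j} A^{(ij)} \bar{y}_j = \sum_{j \neq i} A^{(ij)} \bar{y}_j$, since the diagonal blocks $A^{(ii)}$ vanish, and (ii) because $D_\eta$ is block-diagonal, the $i$-th block of $D_\eta \bar{A}\bar{y}$ is simply $D_{\eta_i} \sum_{j \neq i} A^{(ij)} \bar{y}_j$.

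Then I would run the induction. The base case $t = 0$ is immediate from the hypothesis $(\bar{x}^0, \bar{y}^0) = (x^0, y^0)$. For the inductive step, assuming $(\bar{x}^{t-1}, \bar{y}^{t-1}) = (x^{t-1}, y^{t-1})$, I read off the $i$-th block of line \ref{line:agent1} using observations (i)--(ii); it becomes $\bar{x}_i^t = \bar{x}_i^{t-1} + D_{\eta_i} \sum_{j \neq i} A^{(ij)} \bar{y}_j^{t-1}$, which is precisely line \ref{line:original} of Algorithm \ref{alg:Verlet}, giving $\bar{x}^t = x^t$. Substituting this into line \ref{line:agent2} and again extracting the $i$-th block yields $\bar{y}_i^t = \bar{y}_i^{t-1} + D_{\gamma_i} \sum_{j \neq i} A^{(ij)} \bar{x}_j^t$, which matches line \ref{line:duplicate}, giving $\bar{y}^t = y^t$ and closing the induction.

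The step that needs the most care is confirming that the update ordering is genuinely irrelevant within each of the two inner loops of Algorithm \ref{alg:Verlet}, so that the simultaneous vector operations of Algorithm \ref{alg:2Agent} faithfully reproduce them. Within the first loop every $x_i^t$ depends only on the previous-iteration values $y_j^{t-1}$, so the loop is a fully parallel update with no cross-dependence among the $x_j^t$; this is exactly what the single meta-update $\bar{x}^t = \bar{x}^{t-1} + D_\eta \bar{A}\bar{y}^{t-1}$ computes. In the second loop every $y_i^t$ depends only on the $x_j^t$ already produced by the first loop, so it too is parallel and is reproduced by $\bar{y}^t = \bar{y}^{t-1} + D_\gamma \bar{A}\bar{x}^t$. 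Beyond this bookkeeping the argument is entirely mechanical, relying only on the vanishing diagonal blocks of $\bar{A}$ and the block-diagonal form of $D_\eta$ and $D_\gamma$.
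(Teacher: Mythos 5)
Your proof is correct and follows exactly the route the paper takes: the paper disposes of this theorem in one line, stating it ``holds trivially by induction since the optimization problem is separable with respect to each agent,'' and your argument is precisely that induction with the separability made explicit through the vanishing diagonal blocks of $\bar{A}$ and the block-diagonal structure of $D_\eta$ and $D_\gamma$. Your added care about the parallel (order-independent) nature of the inner loops of Algorithm \ref{alg:Verlet} is a worthwhile detail the paper leaves implicit, but it does not constitute a different approach.
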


Theorem \ref{thm:same} holds trivially by induction since the optimization problem is separable with respect to each agent.

\subsection{Regret and Conservation in Multiagent Games}\label{sec:MultiRegret}

We show that Theorems \ref{thm:regret} (regret) and \ref{thm:Volume} (volume conservation) both extend to this setting. 
We remark that Theorem \ref{thm:Actualization} (self-actualization) also extends, however, as discussed in Section \ref{sec:physics}, Algorithm \ref{alg:Verlet} is best used in settings where only aggregate information ($\sum_{t=1}^T x^t/T$) is of interest. 

\begin{theorem}[$1/T$ Time-Average Regret]\label{thm:MultiRegret}
	If agent $i$ updates their strategies with Algorithm \ref{alg:Verlet} in  (\ref{eqn:NetworkGame}) with an \underline{arbitrary} vector of fixed learning rates ${\eta_i}$, then their time-average regret with respect to an arbitrary fixed strategy $x_i$ in iteration $T$ is $O\left(1/T\right)$, regardless of how their opponents update. 
\end{theorem}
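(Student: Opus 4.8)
The plan is to reduce the multiagent regret computation to the single-agent telescoping identity that drives Theorem \ref{thm:regret}, exploiting the fact that the duplicated-agent game (\ref{eqn:NetworkGame}) is separable across agents. By Theorem \ref{thm:same}, running Algorithm \ref{alg:Verlet} on (\ref{eqn:NetworkGame}) produces exactly the trajectory that Algorithm \ref{alg:2Agent} produces on the meta-game (\ref{eqn:MetaGame}) with the block-diagonal learning-rate matrix $D_\eta = \mathrm{diag}(D_{\eta_1}, \ldots, D_{\eta_N})$. The meta-agent controlling $\bar{x}$ plays the role of ``agent 1,'' and its total regret from Theorem \ref{thm:regret} splits, block by block, into a sum of the individual agents' regrets because both the bilinear payoff $\langle \bar x, \bar A \bar y\rangle$ and the quadratic form $\langle \bar x, D_\eta^{-1}\bar x\rangle$ decompose as $\sum_i \langle x_i, \sum_{j\neq i}A^{(ij)}y_j\rangle$ and $\sum_i\langle x_i, D_{\eta_i}^{-1}x_i\rangle$ respectively. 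Hence it suffices to carry out the argument of Theorem \ref{thm:regret} for a single block.

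First I would write agent $i$'s regret after it updates in iteration $(T+1)$ as $\sum_{t=0}^T\langle 2x_i - x_i^{t+1} - x_i^t, \sum_{j\neq i}A^{(ij)}y_j^t\rangle$, mirroring the ``Regret After Agent 1 Updates'' notion, since the original agents update before their doppelgangers in Algorithm \ref{alg:Verlet}. The pivotal observation is that line \ref{line:original} gives $x_i^{t+1} - x_i^t = D_{\eta_i}\sum_{j\neq i}A^{(ij)}y_j^t$, so the observed payoff satisfies $\sum_{j\neq i}A^{(ij)}y_j^t = D_{\eta_i}^{-1}(x_i^{t+1}-x_i^t)$ --- the exact substitution used in Theorem \ref{thm:regret}. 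Plugging this in, using symmetry of $D_{\eta_i}^{-1}$, and telescoping collapses the sum to $\langle x_i^0 - 2x_i, D_{\eta_i}^{-1}x_i^0\rangle + \langle 2x_i - x_i^{T+1}, D_{\eta_i}^{-1}x_i^{T+1}\rangle$. Since $D_{\eta_i}^{-1}$ is positive-definite, the map $w\mapsto\langle 2x_i - w, D_{\eta_i}^{-1}w\rangle$ is maximized at $w=x_i$, so the total is bounded above by $\langle x_i^0 - 2x_i, D_{\eta_i}^{-1}x_i^0\rangle + \langle x_i, D_{\eta_i}^{-1}x_i\rangle \in O(1)$, independent of $T$; dividing by $T$ yields the claimed $O(1/T)$ time-average regret.

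The one point requiring care --- though it is not a genuine obstacle --- is justifying that the bound holds ``regardless of how their opponents update.'' This is immediate from the structure: the collapsed expression involves only agent $i$'s own strategy sequence $\{x_i^t\}$ and the fixed comparator $x_i$, with every appearance of the opponents' strategies $y_j^t$ absorbed into the telescoping via the update identity. Consequently the adversarial freedom of the other $y_j$ sequences --- and indeed of all other agents --- never enters the final bound. I would therefore expect the proof to be a short, essentially mechanical transcription of Theorem \ref{thm:regret} to a single separable block, the only modeling decision being the correct individual-agent regret notion inherited from the alternating-update convention.
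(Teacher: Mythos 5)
Your proof is correct and follows essentially the same route as the paper: both hinge on the observation that line \ref{line:original} lets agent $i$'s observed gradient be rewritten as $D_{\eta_i}^{-1}(x_i^{t+1}-x_i^t)$, after which the telescoping of Theorem \ref{thm:regret} collapses the regret to a bound involving only $x_i^0$, $x_i^{T+1}$, and the comparator, so the opponents' behavior never enters. The paper packages this as the two-agent meta-game (\ref{eqn:MetaGame3}) between $x_i$ and a meta-opponent replaying the observed trajectory and then invokes Theorem \ref{thm:regret} as a black box, whereas you inline the same computation; your opening appeal to Theorem \ref{thm:same} is unnecessary (and on its own would not cover arbitrary opponents, since it presumes all agents run Algorithm \ref{alg:Verlet}), but your direct per-block argument renders that framing harmless.
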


%\begin{theorem}[Invariant Energy]\label{thm:MultiEnergy}
%	Suppose that $\bar{A}$ is invertible and commutes with $D_\eta$ and $D_\gamma$. 
%	Then the perturbed energy $\langle \bar{x}^t_i, \bar{y}^t\rangle+ \langle {\bar x}^t, \bar{A}^{-1}D_{\bar{\eta}}^{-1}\bar{x}^t\rangle + \langle \bar{y}^t, \bar{A}^{-1}D_{\bar{\gamma}}^{-1}\bar{y}^t\rangle$ is invariant when agents update their strategies with Algorithm \ref{alg:Verlet}.  
%\end{theorem}

%\begin{remark}
%In \cite{Bailey20Uniqueness}, it was shown that almost every general-sum game has a unique Nash equilibrium and that ${\bar{A}}^{-1}$ exists so long as no agent controls more than half the strategies -- notably this is equivalent to a square payoff matrix in 2-agent games.  The same result also holds for coordination games.  However, zero-sum games additionally require that there are an even number of strategies.  We provide a different energy function for zero-sum and coordination games in Section \ref{sec:ZeroMulti} that places no restrictions on $\bar{A}$.
%\end{remark}

\begin{theorem}[Volume Conservation]\label{thm:MultiVolume}
		Algorithm \ref{alg:2Agent} in (\ref{eqn:NetworkGame}) is volume preserving for any measurable set of initial conditions. 
\end{theorem}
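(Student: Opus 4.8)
The plan is to obtain this as an immediate consequence of the reduction in Theorem~\ref{thm:same} together with the 2-agent volume result of Theorem~\ref{thm:Volume}, so that essentially no new computation is required. By Theorem~\ref{thm:same}, applying Algorithm~\ref{alg:Verlet} to (\ref{eqn:NetworkGame}) produces exactly the iterates obtained by applying Algorithm~\ref{alg:2Agent} to the meta-game (\ref{eqn:MetaGame}), under the stacked variables $\bar{x}=[x_1,\dots,x_N]$ and $\bar{y}=[y_1,\dots,y_N]$ with payoff matrix $\bar{A}$. First I would observe that the identification of the multiagent state $(x_1,\dots,x_N,y_1,\dots,y_N)$ with the meta-state $(\bar{x},\bar{y})$ is merely a reordering of coordinates, hence a linear bijection of determinant $\pm 1$, so it preserves the measure of every measurable set. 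Thus the volume question for Algorithm~\ref{alg:Verlet} is transported, without distortion, to the volume question for Algorithm~\ref{alg:2Agent} on the meta-game.

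Next I would note that Theorem~\ref{thm:Volume} is proved for \emph{arbitrary} payoff matrices $A$ and $B$ in (\ref{eqn:2AgentGame}); its argument uses only that each half-step has a block-triangular Jacobian with identity diagonal blocks and therefore unit determinant, a property independent of the specific matrices. Consequently it applies verbatim with $A=B=\bar{A}$, which is precisely the instance of Algorithm~\ref{alg:2Agent} arising from the meta-game. This shows the meta-game update preserves volume, and by the coordinate identification above, so does Algorithm~\ref{alg:Verlet} on the multiagent system, completing the argument.

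Alternatively, I would give a direct proof mirroring Theorem~\ref{thm:Volume}. Writing one iteration of Algorithm~\ref{alg:Verlet} as the composition of two half-updates -- first every $x_i$ is updated from the $y_j^{t-1}$ (line~\ref{line:original}), then every $y_i$ from the freshly computed $x_j^{t}$ (line~\ref{line:duplicate}) -- the Jacobian of the first half-update is block upper-triangular with identity diagonal blocks (the $x$-blocks depend on the $y$-blocks through $D_{\eta_i}A^{(ij)}$, while the $y$-blocks are unchanged), and the Jacobian of the second is block lower-triangular with identity diagonal blocks. Each has determinant $1$, so their product does too, and volume preservation follows from the change-of-variables formula~\cite[Theorem 7.26]{rudin1987real}.

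I expect essentially no obstacle, since the substantive content is already contained in Theorem~\ref{thm:Volume}; the only things to verify are that the reduction of Theorem~\ref{thm:same} is measure-preserving (it is, being a coordinate permutation) and that the two-stage triangular structure survives the lift to $2N$ blocks. If there is any bookkeeping subtlety, it lies in confirming that within each half-step no agent's update depends on another agent updated in that same half-step, so that the off-diagonal blocks sit entirely above (respectively below) the diagonal; this holds because in line~\ref{line:original} all $x_i$ use the previous iteration's $y$, and in line~\ref{line:duplicate} all $y_i$ use the already-completed $x^{t}$.
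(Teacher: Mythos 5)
Your proposal is correct and follows essentially the same route as the paper, which proves Theorem~\ref{thm:MultiVolume} by reducing (\ref{eqn:MultiAgentGame}) to (\ref{eqn:MetaGame}) via Theorem~\ref{thm:same} and then invoking Theorem~\ref{thm:Volume} with $A=B=\bar{A}$. Your additional checks (that the coordinate identification is measure-preserving, and that the block-triangular Jacobian structure lifts to $2N$ blocks) are exactly the details the paper leaves implicit.
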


Theorem \ref{thm:MultiVolume} follow immediately by Theorem \ref{thm:Volume} after reducing (\ref{eqn:MultiAgentGame}) to (\ref{eqn:MetaGame}). 
Theorem \ref{thm:MultiRegret} almost follows similarly; certainly if all agents use Algorithm \ref{alg:Verlet}, then the corresponding meta-agent has $O(1/T)$ time-average regret. 
Moreover, since $\langle \bar{x}, \bar{A}\bar{y}\rangle$ is separable with respect to each individual agent's strategy $x_i$, each agent also obtains $O(1/T)$ time-average regret. 
However, Theorem \ref{thm:MultiRegret} only requires that agent 1 uses the update rule in Algorithm \ref{alg:Verlet}. 
To see that agent 1 still obtains $O(1/T)$ time-average regret regardless of other agents, we consider the meta-game played between $x_i$ and the meta-agent $\bar{y}$ where the meta-agent is using the same updates as in the original (\ref{eqn:MultiAgentGame}).

\begin{proof}[Proof of Theorem \ref{thm:MultiRegret}]
	Consider the following two-agent game:
	
	\begin{equation}\label{eqn:MetaGame3}
	\begin{aligned}
	\max_{x_i\in {\cal X}_i} \left \langle {x}_i, \bar{A}_{i\cdot}\bar{y}\right \rangle \\
	\max_{\bar{y}\in \times_i{\cal X}_i} \left \langle \bar{y}, \bar{A}_{\cdot i}{x}_i\right \rangle
	\end{aligned}\tag{Meta-Game for Agent $i$}
	\end{equation}
	where $\bar{A}_{i\cdot}= \left[ \begin{array}{c c c c}A^{(i1)} & A^{(i2)} & \cdots & A^{(in)}\end{array}\right]$ is the rows of $\bar{A}$ corresponding to agent $i$'s payoff matrices against other agents and where $\bar{A}_{\cdot i}$ is the columns of $\bar{A}$ corresponding to other agents' payoffs against agent $i$. 
	
	Let $\{\hat{x}^t,\hat{y}\}_{t=0}^T$ be the updates obtained in (\ref{eqn:MultiAgentGame}) where the original agent $i$ uses alternating gradient descent and let $\bar{y}^t=\hat{y}^t$ for all $t=0,...,T$.  This selection implies $x_i^t=\hat{x}_i^t$ since $x_i^t$ and $\hat{x}_i^t$ are updated with gradient descent with the same history of opponent play. Thus agent $i$'s utility and regret are the same in both (\ref{eqn:MultiAgentGame}) and (\ref{eqn:MetaGame3}).  
	By Theorem \ref{thm:MultiRegret}, agent $i$ has $O(1/T)$ time-average regret in (\ref{eqn:MetaGame3}) and therefore also has $O(1/T)$ time-average regret in (\ref{eqn:MultiAgentGame}). 
\end{proof}

	\subsection{Multiagent Positive-Negative Definite  Games}\label{sec:ZeroMulti}

	Similar to Section \ref{sec:ZeroSum}, we introduce a \ref{eqn:MultiPosNeg} and show that Algorithm \ref{alg:Verlet} conserves energy and achieves O(1/T) time-average convergence to the set of Nash equilibria. 

	\begin{equation}\label{eqn:MultiPosNeg}
	\begin{aligned}
	\max_{x_i\in {\cal X}_i} \left \langle x_i, P_i\sum_{j\neq i} A^{(ij)} x_j\right \rangle \ for \ all \ i
	\end{aligned}\tag{Network Positive-Negative Definite Game}
	\end{equation}
	where $A^{(ji)}=-[A^{(ij)}]^\intercal$.

Similarly, a \ref{eqn:MultiPosPos} is
\begin{equation}\label{eqn:MultiPosPos}
\begin{aligned}
\max_{x_i\in {\cal X}_i} \left \langle x_i, P_i\sum_{j\neq i} A^{(ij)} x_j\right \rangle \ for \ all \ i
\end{aligned}\tag{Network Positive-Positive Definite Game}
\end{equation}
where $A^{(ji)}=[A^{(ij)}]^\intercal$.  

%\begin{remark}
%	Ideally, we could work with a more general set of positive definite matrices.  
%	Specifically, it would be preferable if our results held if agent $i$'s payoff matrix against agent $j$ was $P^{(ij)}A^{(ij)}$.  
%	Unfortunately, there is not a clear way to express $\sum_{j\neq i} A^{(ij)}y_j^t$ in terms of $x^{t+1}_i$, $x^{t}_i$, $D_{\eta_i}$, and $P^{(ij)}$ as in the proof of Theorem \ref{thm:energy}.  
%\end{remark}
	Let
\begin{align*}
\bar{P}=\bar{Q}=\left[\begin{array}{c c c c} P_1 & 0 &\cdots & 0 \\0 & P_2 & \cdots &0\\0&0&\ddots&0 \\0 & 0&\cdots&P_n \end{array} \right].
\end{align*}
Then the multi-agent network positive-negative definite game can be reduced to a two-agent positive negative definite game with payoff matrices $\bar{P} \bar{A}$ and $-\bar{Q} \bar{A}^\intercal$ ($\bar{Q}\bar{A}^\intercal$ for the positive-positive definite game). 
Thus the invariant energy functions from Section \ref{sec:ZeroSum} immediately extend to the network setting.

\begin{theorem}[Invariant Energy for (\ref{eqn:MultiPosNeg})]\label{thm:MultiEnergy2}
	Suppose $\bar{P}$ and $\bar{Q}$ commute with $D_{\bar{\eta}}$ and $D_{\bar{\gamma}}$ respectively. 
	Then the perturbed energy $\left\lVert {x}^t\right\rVert^2_{\bar{P}^{-1}D_{\bar{\eta}}^{-1}} + \left\lVert {y}^t\right\rVert^2_{\bar{Q}^{-1}D_{\bar{\gamma}}^{-1}} + \langle {x}^t, \bar{A}{y}^t\rangle$ is invariant when agents play (\ref{eqn:MultiPosNeg}) and update their strategies with Algorithm \ref{alg:Verlet}. 
\end{theorem}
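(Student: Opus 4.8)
The plan is to obtain Theorem \ref{thm:MultiEnergy2} as a direct consequence of the two-agent energy identity (Theorem \ref{thm:energy2}) together with the reduction of (\ref{eqn:MultiPosNeg}) to a two-agent game developed in this section. First I would record that the stacked payoff matrix $\bar{A}$ is antisymmetric: since $A^{(ji)}=-[A^{(ij)}]^\intercal$, transposing $\bar{A}$ both transposes each block and swaps block position $(i,j)$ with $(j,i)$, so the $(i,j)$ block of $\bar{A}^\intercal$ is $[A^{(ji)}]^\intercal=-A^{(ij)}$, giving $\bar{A}^\intercal=-\bar{A}$. Consequently the meta-game with payoff matrices $\bar{P}\bar{A}$ for meta-agent $1$ and $-\bar{Q}\bar{A}^\intercal=\bar{Q}\bar{A}$ for meta-agent $2$ is a genuine instance of (\ref{eqn:posneg}) under the substitution $A\mapsto\bar{A}$, $P\mapsto\bar{P}$, $Q\mapsto\bar{Q}$.

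Next I would check that the hypotheses of Theorem \ref{thm:energy2} transfer. Because $\bar{P}$ and $D_{\bar{\eta}}$ are both block-diagonal with blocks $P_i$ and $D_{\eta_i}$ (and likewise $\bar{Q}=\bar{P}$ with $D_{\bar{\gamma}}$ having blocks $D_{\gamma_i}$), the commutation assumption of Theorem \ref{thm:MultiEnergy2} is exactly the blockwise statement that each $P_i$ commutes with $D_{\eta_i}$ and with $D_{\gamma_i}$, which is precisely the commutation hypothesis required to invoke Theorem \ref{thm:energy2} on the meta-game. Applying that theorem to the reduced game yields that $\lVert \bar{x}^t\rVert^2_{\bar{P}^{-1}D_{\bar{\eta}}^{-1}}+\lVert \bar{y}^t\rVert^2_{\bar{Q}^{-1}D_{\bar{\gamma}}^{-1}}+\langle \bar{x}^t,\bar{A}\bar{y}^t\rangle$ is invariant along the Algorithm \ref{alg:2Agent} trajectory. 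I would then translate back to Algorithm \ref{alg:Verlet} using Theorem \ref{thm:same}, so that $(\bar{x}^t,\bar{y}^t)=(x^t,y^t)$ for every $t$; substituting this identity into the invariant produces precisely the claimed conserved quantity $\lVert x^t\rVert^2_{\bar{P}^{-1}D_{\bar{\eta}}^{-1}}+\lVert y^t\rVert^2_{\bar{Q}^{-1}D_{\bar{\gamma}}^{-1}}+\langle x^t,\bar{A}y^t\rangle$.

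The only point requiring care, and the place I expect the main (if mild) obstacle, is justifying that Theorem \ref{thm:same} remains valid once the positive-definite transforms $P_i$ are present, since that theorem was stated for the untransformed network game. I would resolve this by noting that the reduction is purely a statement about separability of the update rule: meta-agent $1$'s block-$i$ update under Algorithm \ref{alg:2Agent} with payoff $\bar{P}\bar{A}$ is $x_i^{t}=x_i^{t-1}+D_{\eta_i}P_i\sum_{j\neq i}A^{(ij)}y_j^{t-1}$, which is exactly agent $i$'s update under Algorithm \ref{alg:Verlet} in (\ref{eqn:MultiPosNeg}), and symmetrically for the $y$-blocks. Since the block-diagonal structure of $\bar{P},\bar{Q},D_{\bar{\eta}},D_{\bar{\gamma}}$ decouples the meta-update into the per-agent updates, the induction underlying Theorem \ref{thm:same} goes through verbatim.

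As an alternative that sidesteps the scope question entirely, I could reprove the identity directly by repeating the telescoping computation of Theorem \ref{thm:energy2} at the meta level. The $x$-update gives $\langle \bar{x}^{t+1}+\bar{x}^t,\bar{A}\bar{y}^t\rangle=\lVert \bar{x}^{t+1}\rVert^2_{\bar{P}^{-1}D_{\bar{\eta}}^{-1}}-\lVert \bar{x}^t\rVert^2_{\bar{P}^{-1}D_{\bar{\eta}}^{-1}}$ (using that $\bar{P}^{-1}$ and $D_{\bar{\eta}}^{-1}$ are symmetric and commute), the $y$-update gives the analogous difference in the $\bar{Q}^{-1}D_{\bar{\gamma}}^{-1}$ norm from $\langle \bar{y}^{t+1}+\bar{y}^t,-\bar{A}^\intercal\bar{x}^{t+1}\rangle$, and adding the two while using $\langle \bar{y},\bar{A}^\intercal\bar{x}\rangle=\langle \bar{x},\bar{A}\bar{y}\rangle$ to collapse the cross terms into $\langle \bar{x}^t,\bar{A}\bar{y}^t\rangle-\langle \bar{x}^{t+1},\bar{A}\bar{y}^{t+1}\rangle$ yields the telescoping identity, and hence the invariant.
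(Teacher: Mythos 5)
Your proof is correct and takes essentially the same approach as the paper: Section \ref{sec:ZeroMulti} obtains Theorem \ref{thm:MultiEnergy2} precisely by reducing (\ref{eqn:MultiPosNeg}) to a two-agent positive-negative definite game with payoff matrices $\bar{P}\bar{A}$ and $-\bar{Q}\bar{A}^\intercal$ and invoking Theorem \ref{thm:energy2}. Your write-up merely makes explicit the details the paper leaves implicit (the antisymmetry $\bar{A}^\intercal=-\bar{A}$, the blockwise transfer of the commutation hypothesis, and the validity of the Theorem \ref{thm:same}-style trajectory equivalence once the $P_i$ transforms are present).
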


Note that if $D_{\eta_i}$ commutes with $P_i$ then $\bar{P}$ commutes with $D_{\bar{\eta}}$ since both matrices are block diagonal and therefore any theorems that require that $D_{\bar{\eta}}$ and $\bar{P}$ commute hold in most standard applications of online optimization.

\begin{theorem}[Invariant Energy for (\ref{eqn:MultiPosNeg})]\label{thm:MultiEnergy3}
	Suppose $\bar{P}$ and $\bar{Q}$ commute with $D_{\bar{\eta}}$ and $D_{\bar{\gamma}}$ respectively.  
	Then the perturbed energy $\left\lVert {x}^t\right\rVert^2_{\bar{P}^{-1}D_{\bar{\eta}}^{-1}} - \left\lVert {y}^t\right\rVert^2_{\bar{Q}^{-1}D_{\bar{\gamma}}^{-1}} + \langle {x}^t, \bar{A}{y}^t\rangle$ is invariant when agents play (\ref{eqn:MultiPosPos}) and update their strategies with Algorithm \ref{alg:Verlet}. 
\end{theorem}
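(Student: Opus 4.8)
The plan is to reduce the claim to the two-agent coordination result already proved in Theorem~\ref{thm:energy3}, using the meta-game machinery of Section~\ref{sec:Reduce}. As set up in the paragraph preceding Theorem~\ref{thm:MultiEnergy2}, the network game (\ref{eqn:MultiPosPos}) is an instance of the two-agent meta-game (\ref{eqn:MetaGame}) with stacked strategies $\bar{x}^t=x^t$, $\bar{y}^t=y^t$ and meta-payoffs $\bar{P}\bar{A}$ and $\bar{Q}\bar{A}^\intercal$. By Theorem~\ref{thm:same}, running Algorithm~\ref{alg:Verlet} on (\ref{eqn:MultiPosPos}) with learning rates $\bar{\eta},\bar{\gamma}$ and initial point $(x^0,y^0)$ produces exactly the iterates generated by Algorithm~\ref{alg:2Agent} on the meta-game from $(\bar{x}^0,\bar{y}^0)=(x^0,y^0)$. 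Hence it suffices to establish invariance of the energy for the two-agent meta-dynamics.

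Next I would verify that, under the positive-positive structure $A^{(ji)}=[A^{(ij)}]^\intercal$, the meta-game is genuinely a two-agent positive-positive (coordination) game in the sense of (\ref{eqn:pospos}). The $(i,j)$ block of $\bar{A}^\intercal$ equals $[A^{(ji)}]^\intercal=A^{(ij)}$, so $\bar{A}=\bar{A}^\intercal$, and the two meta-agents face the coordination pairing $(\bar{P}\bar{A},\bar{Q}\bar{A}^\intercal)$, which matches the pairing $(PA,QA^\intercal)$ of (\ref{eqn:pospos}) under $P\mapsto\bar{P}$, $Q\mapsto\bar{Q}$, $A\mapsto\bar{A}$. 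The remaining hypothesis of Theorem~\ref{thm:energy3}, namely that $\bar{P}$ commutes with $D_{\bar{\eta}}$ and $\bar{Q}$ with $D_{\bar{\gamma}}$, is granted in the statement; as noted after Theorem~\ref{thm:MultiEnergy2}, it follows automatically from the per-agent commuting conditions because all four matrices are block diagonal.

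With these identifications Theorem~\ref{thm:energy3} applies verbatim to the meta-dynamics and gives that $\lVert\bar{x}^t\rVert^2_{\bar{P}^{-1}D_{\bar{\eta}}^{-1}}-\lVert\bar{y}^t\rVert^2_{\bar{Q}^{-1}D_{\bar{\gamma}}^{-1}}+\langle\bar{x}^t,\bar{A}\bar{y}^t\rangle$ is constant in $t$; substituting $\bar{x}^t=x^t$, $\bar{y}^t=y^t$ recovers the energy in the statement. Should a self-contained argument be preferred, one can instead mirror the proofs of Theorems~\ref{thm:energy2} and~\ref{thm:energy3} directly on Algorithm~\ref{alg:Verlet}: read off $\bar{A}y^t=\bar{P}^{-1}D_{\bar{\eta}}^{-1}(x^{t+1}-x^t)$ from line~\ref{line:original} and $\bar{A}^\intercal x^{t+1}=\bar{Q}^{-1}D_{\bar{\gamma}}^{-1}(y^{t+1}-y^t)$ from line~\ref{line:duplicate}, then add $\langle x^{t+1}+x^t,\bar{A}y^t\rangle$ to $-\langle y^{t+1}+y^t,\bar{A}^\intercal x^{t+1}\rangle$; the cross terms $\langle x^{t+1},\bar{A}y^t\rangle$ cancel, the coordination sign produces the $-\lVert y^t\rVert^2$ contribution, and the remaining telescope yields invariance. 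The computations are routine; the only point needing care is the bookkeeping of the second paragraph, i.e.\ checking that the block-transpose of $\bar{A}$ reproduces the coordination sign pattern and that the block-diagonal transformations preserve the commuting hypothesis.
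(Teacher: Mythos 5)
Your proposal is correct and follows the paper's own route: the paper likewise proves this by reducing (\ref{eqn:MultiPosPos}) to the two-agent meta-game with block-diagonal $\bar{P},\bar{Q}$ (noting, as you do, that block-diagonality preserves the commuting hypothesis and that the coordination structure makes $\bar{A}=\bar{A}^\intercal$, so the meta-game is an instance of (\ref{eqn:pospos})) and then invoking Theorem~\ref{thm:energy3}. Your extra self-contained telescoping computation is a valid, slightly more explicit version of the same argument.
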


Moreover, following directly from Theorems \ref{thm:Recurrence}, \ref{thm:BoundedOrbits}, and \ref{thm:converge}, Algorithm \ref{alg:Verlet}, is Poincar\'{e} recurrent, has bounded orbits, and converges to the set of Nash equilibria at rate O(1/T) in (\ref{eqn:MultiPosNeg}). 

\begin{theorem}[Recurrence, Bounded Orbits, and Convergence]\label{thm:multiResult}
	Suppose $\bar{P}$ and $\bar{Q}$ commute with $D_{\bar{\eta}}$ and $D_{\bar{\gamma}}$ respectively and $\left\lVert \bar{P}^{\frac{1}{2}}D_{\bar{\eta}}^{\frac{1}{2}}\right\rVert\cdot\left\lVert \bar{Q}^{\frac{1}{2}}D_{\bar{\gamma}}^{\frac{1}{2}}\right\rVert< \frac{2}{||\bar{A}||}$ when updating both agents' strategies with Algorithm \ref{alg:Verlet} in (\ref{eqn:MultiPosNeg}). 
	Then 
	\begin{enumerate}
		\item (Recurrence): for almost every initial condition $(x^0,y^0)$, there exists an increasing sequence of iterations $t_n$ such that $(x^{t_n},y^{t_n})\to (x^0,y^0)$. 	
		\item (Bounded Orbits): agent strategies $\{{x}^t,{y}^t\}_{t=0}^\infty$ are bounded. Specifically,  
		\begin{align*}
		\left\lVert {x}^t\right\rVert^2_{\bar{P}^{-1}D_{\bar{\eta}}^{-1}} + \left\lVert {y}^t\right\rVert^2_{\bar{Q}^{-1}D_{\bar{\gamma}}^{-1}}
		&\leq \frac{\left\lVert {x}^0\right\rVert^2_{\bar{P}^{-1}D_{\bar{\eta}}^{-1}} + \left\lVert {y}^0\right\rVert^2_{\bar{Q}^{-1}D_{\bar{\gamma}}^{-1}}+\left\langle {x}^0, \bar{A}{y}^0\right\rangle}
		{1-\frac{\left\lVert \bar{A}\right\rVert\cdot\left\lVert \bar{P}^{\frac{1}{2}}D_{\bar{\eta}}^{\frac{1}{2}}\right\rVert\cdot\left\lVert \bar{Q}^{\frac{1}{2}}D_{\bar{\gamma}}^{\frac{1}{2}}\right\rVert}{2}}.
		\end{align*}
		\item (Convergence): each agent has $O\left(1/T\right)$ time-average convergence to the set of Nash equilibria.
	\end{enumerate}	
\end{theorem}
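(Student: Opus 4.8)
The plan is to reduce (\ref{eqn:MultiPosNeg}) to a single instance of the 2-agent positive-negative definite game (\ref{eqn:posneg}) and then invoke the 2-agent results verbatim. First I would form the duplicated game: original agent $i$ controls $x_i$ and maximizes $\langle x_i, P_i\sum_{j\neq i}A^{(ij)}y_j\rangle$ while the doppelganger controls $y_i$ and maximizes $\langle y_i, P_i\sum_{j\neq i}A^{(ij)}x_j\rangle$. Stacking strategies as $\bar{x}=[x_1,\dots,x_n]$ and $\bar{y}=[y_1,\dots,y_n]$ and using the block matrices $\bar{A}$ and $\bar{P}=\bar{Q}$, this becomes a two-meta-agent game in which meta-agent 1 maximizes $\langle \bar{x},\bar{P}\bar{A}\bar{y}\rangle$ and meta-agent 2 maximizes $\langle \bar{y},\bar{P}\bar{A}\bar{x}\rangle$. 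By Theorem \ref{thm:same}, running Algorithm \ref{alg:Verlet} on (\ref{eqn:MultiPosNeg}) produces exactly the trajectory obtained by running Algorithm \ref{alg:2Agent} on this meta-game, so it suffices to analyze the latter.

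The one genuinely substantive check is that this meta-game is a bona fide instance of (\ref{eqn:posneg}), i.e. that meta-agent 2's payoff matrix equals $-\bar{Q}\bar{A}^\intercal$. I would verify this using the defining property $A^{(ji)}=-[A^{(ij)}]^\intercal$: the $(i,j)$ block of $\bar{A}^\intercal$ is $[A^{(ji)}]^\intercal=-A^{(ij)}$, so $\bar{A}^\intercal=-\bar{A}$, i.e. $\bar{A}$ is skew-symmetric. Combined with $\bar{Q}=\bar{P}$ this gives $-\bar{Q}\bar{A}^\intercal=\bar{P}\bar{A}$, which is precisely meta-agent 2's payoff matrix. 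Hence the meta-game is (\ref{eqn:posneg}) with $A=\bar{A}$, $P=\bar{P}$, $Q=\bar{Q}$, and learning rates $D_{\bar{\eta}},D_{\bar{\gamma}}$.

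Next I would transfer the hypotheses. The commutativity $\bar{P}D_{\bar{\eta}}=D_{\bar{\eta}}\bar{P}$ holds because both are block-diagonal and each block satisfies $P_iD_{\eta_i}=D_{\eta_i}P_i$ by assumption, and likewise for $\bar{Q},D_{\bar{\gamma}}$; the norm bound $\lVert\bar{P}^{1/2}D_{\bar{\eta}}^{1/2}\rVert\cdot\lVert\bar{Q}^{1/2}D_{\bar{\gamma}}^{1/2}\rVert<2/\lVert\bar{A}\rVert$ is assumed directly in the statement. With these in place, Theorem \ref{thm:BoundedOrbits} yields the bounded-orbit inequality (which is literally statement 2 with barred quantities), Theorem \ref{thm:Recurrence} yields recurrence (statement 1), and Theorem \ref{thm:converge} yields $\lVert\sum_{t=0}^{T-1}\bar{P}\bar{A}\bar{y}^t/T\rVert\le c/T$ for some constant $c$.

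Finally I would descend from the aggregate meta-agent statement to each individual agent. Since $\bar{P}\bar{A}\bar{y}$ is block-structured with $i$-th block $P_i\sum_{j\neq i}A^{(ij)}y_j$, the Euclidean norm of the full vector dominates that of each block, so $\lVert\sum_{t=0}^{T-1}P_i\sum_{j\neq i}A^{(ij)}y_j^t/T\rVert\le c/T$ for every $i$; as $P_i\sum_{j\neq i}A^{(ij)}x_j^*=\vec{0}$ characterizes agent $i$'s Nash condition, this is the claimed $O(1/T)$ per-agent convergence, with the symmetric statement for $x$ following from the dual update $y^T-y^0=\sum_{t}-\bar{Q}\bar{A}^\intercal x^t$. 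The proof is essentially bookkeeping once the reduction is in hand; the only place demanding care is the skew-symmetry computation showing the meta-game lands exactly in (\ref{eqn:posneg}), since that is what licenses reusing every 2-agent result without modification.
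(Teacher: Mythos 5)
Your proposal is correct and takes essentially the same route as the paper: the paper likewise reduces (\ref{eqn:MultiPosNeg}) to a 2-agent instance of (\ref{eqn:posneg}) with payoff matrices $\bar{P}\bar{A}$ and $-\bar{Q}\bar{A}^\intercal$ and then cites Theorems \ref{thm:Recurrence}, \ref{thm:BoundedOrbits}, and \ref{thm:converge} directly. Your write-up actually makes explicit the details the paper leaves implicit, namely the skew-symmetry computation $\bar{A}^\intercal=-\bar{A}$ that certifies the meta-game lies in (\ref{eqn:posneg}) and the block-norm argument giving per-agent convergence.
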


\section{Games with Additional Linear Payouts and Games Using Probability Vectors}\label{sec:bilinear}

We briefly remark that our results extend to the setting
\begin{align}
\max_{x_i\in{\cal X}_i} \left\langle x_i, \sum_{j\neq i} A^{(ij)}x_j-b_i\right\rangle \ for \ all \ i. \tag{Network Game with Additional Linear Payouts} \label{eqn:MultiAgentGame2}
\end{align}

A Nash equilibrium $x^*$ of this game satisfies $\sum_{j\neq i} A^{(ij)} x_j^*=b_i$ for each agent $i$. 
We can reduce this game to (\ref{eqn:MultiAgentGame}) simply by expressing $x_i$ relative to $x_i^*$ for each agent $i$. 
Formally, let $\hat{\cal X}_i=\bigcup_{x_i\in {\cal X}_i} \{x_i-x_i^*\}$ (in our setting, ${\cal X}_i$ is affine and therefore ${\cal X}_i=\hat{\cal X}_i$). 
Thus, (\ref{eqn:MultiAgentGame2}) is equivalent to
\begin{align*}
&\max_{x_i\in\hat{\cal X}_i} \left\langle (x_i+x_i^*), \sum_{j\neq i} A^{(ij)}(x_j+x_j^*)-b_i\right\rangle \ for \ all \ i  \label{eqn:MultiAgentGame3}\\
=&\max_{x_i\in\hat{\cal X}_i} \left\langle (x_i+x_i^*), \sum_{j\neq i} A^{(ij)}x_j\right\rangle \ for \ all \ i 
\end{align*} 
From agent $i$'s perspective, $\left\langle x_i^*, \sum_{j\neq i} A^{(ij)}x_j\right\rangle$ is constant. 
Thus, all maximizers of the previous expression also maximize
\begin{align*}
&\max_{x_i\in\hat{\cal X}_i} \left\langle x_i, \sum_{j\neq i} A^{(ij)}x_j\right\rangle \ for \ all \ i.
\end{align*} 
Thus, a game with additional linear payouts can always be expressed as a game without additional linear payouts after shifting the strategy space. 
Moreover, the constant $\left\langle x_i^*, \sum_{j\neq i} A^{(ij)}x_j\right\rangle$ plays no role for online optimization methods that rely on gradients of the utility function, e.g., gradient descent.
Thus, the behavior of gradient descent will remain unchanged and all previous results extend to (\ref{eqn:MultiAgentGame2}). 

This reduction gives some ideas on how to extend these results when ${\cal X}_i$ is the set of probability vectors, i.e., ${\cal X}_i=\{x\in \mathbb{R}^{S_i}_{\geq 0}: \sum_{s_i=1}^{S_i} x_{is_i}=1 \}$. 
After performing the substitution $x_{iS_i}=1-\sum_{s_i=1}^{S_i-1}x_{is_i}$ for each agent $i$, a network game using probability vectors reduces to (\ref{eqn:MultiAgentGame2}) where ${\cal X}_i$ is a compact, full-dimensional space. 
As long as the strategies remain in the interior when using Algorithm \ref{alg:Verlet}, the optimality guarantees will extend as well. 
Regrettably, the space ${\cal X}_i$ is not affine and the energy function may change when the strategies intersect with boundary and more theory needs to be developed to understand this setting.

\section{Experiments: Performance Relative to Optimistic Variants}

In practice, optimistic variants of follow-the-regularized-leader algorithms, e.g., optimistic gradient descent (Algorithm \ref{alg:OptGrdad} below), are often used due to their $O(1/T)$ time-average convergence to the set of Nash equilibria in zero-sum games.  
With the results of Sections \ref{sec:ZeroSum} and \ref{sec:Multi}, Algorithm \ref{alg:Verlet} provides another option for fast convergence.

To obtain this guarantee,  Algorithm \ref{alg:OptGrdad} requires the learning rate $\eta \leq 1/(2||A||)$ \cite{mokhtari2020convergence} while our approach, Algorithm \ref{alg:Verlet}, only requires $\eta \leq 2/||A||$.  
By Theorems \ref{thm:large1} and \ref{thm:large2}, larger learning rates lead to stronger optimization guarantees. 
As such, we hypothesize that by using larger learning rates, Algorithm \ref{alg:Verlet} can outperform \ref{alg:OptGrdad}.

In this section, we perform experiments to support this hypothesis and find that with 97.5\% confidence, Algorithm \ref{alg:Verlet}, on average, results in time-averaged strategies that are 2.585 times closer to the set of Nash equilibria than Algorithm \ref{alg:OptGrdad}.
We also compare Algorithm \ref{alg:Verlet} to an optimized version of Algorithm \ref{alg:OptGrdad} that uses additional memory to avoid matrix products. 
With 97.5\% confidence, Algorithm \ref{alg:Verlet}, on average, results in time-averaged strategies that are 1.742 times faster to the set of Nash equilibria than the optimized version of Algorithm \ref{alg:OptGrdad}.

\begin{varalgorithm}{OptGD}
	\caption{Multiagent Optimistic Gradient Descent.}\label{alg:OptGrdad}
	\begin{algorithmic}[1]
		\Procedure{SimGD}{$A,x^0,\bar{\eta}$}\Comment{Payoff Matrices, Initial Strategies and Learning Rates}
		\For{\texttt{$t=1,...,T$}}
			\For{\texttt{$i=1,...,N$}}
				\State $x_i^t:= x_i^{t-1} + 2\cdot {\bar{\eta}_i} \sum_{j\neq i} A^{(ij)}x_j^{t-1}-\bar{\eta}_i \sum_{j\neq i} A^{(ij)}x_j^{t-2}$ \Comment{Update Strategies Based on Previous Two Iterations}
			\EndFor
		\EndFor
		\EndProcedure
	\end{algorithmic}
\end{varalgorithm}

%\begin{algorithm}[H]
%	\SetAlgoLined
%	%	\KwResult{Write here the result }
%	\textbf{Input:} \\
%	\hspace*{1.2em} Agent $i's$ payoff matrix against agent $j$: $A^{(ij)}$ for $i=1,...,n$ and $j\neq i$\\
%	\hspace*{1.2em} Agent $i$ initial strategy: $\frac{1}{2}x^{-1}_i=x_i^0\in {\cal X}_i$ for $i=1,...,n$\\
%	\hspace*{1.2em} Agent $i$ learning rate: ${\bar{\eta}}_i>0$ for $i=1,...,n$\\
%	
%	
%	\For{$t=1,...,T$}{
%		\For{$i=1,...,N$}{
%			$x_i^t:= x_i^{t-1} + 2\cdot {\bar{\eta}_i} \sum_{j\neq i} A^{(ij)}x_j^{t-1}-\bar{\eta}_i \sum_{j\neq i} A^{(ij)}x_j^{t-2}$ \tcp*{Agent Updates Strategies Based on the Outcome of the Previous Two Iterations}
%		}
%	}
%	\caption{Optimistic Gradient Descent.}\label{alg:OptGrdad}
%\end{algorithm}

We remark that our approach, Algorithm \ref{alg:Verlet} has a distinct advantage over Algorithm \ref{alg:OptGrdad} as we guarantee time-average convergence in a generalization of zero-sum games -- a result not known for Algorithm \ref{alg:OptGrdad}.
However, we conjecture that many results currently in the literature extend to \ref{eqn:MultiPosNeg}s using the techniques we introduced in Section \ref{sec:ZeroSum}.

%\begin{algorithm}[H]
%	\SetAlgoLined
%	%	\KwResult{Write here the result }
%	\textbf{Input:} \\
%	\hspace*{1.2em} Agent $i's$ payoff matrix against agent $j$: $A^{(ij)}$ for $i=1,...,n$ and $j\neq i$\\
%	\hspace*{1.2em} Agent $i$ initial strategy: $\frac{1}{2}x^{-1}_i=x_i^0\in {\cal X}_i$ for $i=1,...,n$\\
%	\hspace*{1.2em} Agent $i$ learning rate: ${\bar{\eta}}_i>0$ for $i=1,...,n$\\
%	
%	$z_i^{-1}= {\bar{\eta}_i} \sum_{j\neq i} A^{(ij)}x_j^{-1}$
%	
%	\For{$t=1,...,T$}{
%		\For{$i=1,...,N$}{
%			$z_i^{t-1}= {\bar{\eta}_i} \sum_{j\neq i} A^{(ij)}x_j^{t-1}$
%			
%			$x_i^t:= x_i^{t-1} + 2\cdot z_i^{t-1}-\bar{\eta}_i \cdot z_i^{t-2}$ \tcp*{Agent Updates Strategies Based on the Outcome of the Previous Two Iterations}
%			
%		}
%	}
%	\caption{Optimistic Gradient Descent with Fewer Matrix Multiplications.}\label{alg:OptGrdad2}
%\end{algorithm}

\subsection{Description of Experiments}

We compare the performance of alternating and optimistic gradient descent with $N\in \{5,10,20\}$ agents where each agent has the same number of strategies ($k\in \{5,10,20\}$). 
We compare the performance of each algorithm across 30 games where $A^{(ij)}_{si,sj}$ is selected uniformly at random from $(-1,1)^{k\times k}$ for $i<j$ and where $A^{(ij)}=[-A^{(ji)}]^\intercal$ for $i>j$ (a zero-sum game) and perform statistical analysis after pairing the samples for each game in order to reduce the variance in the statistical estimates. 
For both algorithms, we select the learning rate to be as large as possible while still ensuring time-average convergence guarantees for any randomly selected set of payoff matrices. 
Specifically, we use the learning rate $\eta=2/(k\cdot (N-1))$ for alternating gradient descent and $\bar{\eta}=1/(2k\cdot (N-1))$ for optimistic gradient descent.  
As discussed in Section \ref{sec:selection}, this selection normalizes the learning rates of the two algorithms.

In our experiments, we also use an optimized version of optimistic gradient descent that uses more memory in exchange for computing fewer matrix products (Algorithm \ref{alg:OptGrdad2} below) and compare our method to both the standard and optimized implementations of optimistic gradient descent. 
Specifically, for a single game and initial condition, we run each of the three algorithms for 30 seconds and measure the distance to the Nash equilibrium with respect to the dual space -- we measure $ || \bar{A}y^t||$ where $\bar{A}$ is the combined payoff matrix introduced in Section \ref{sec:Reduce}. 
As discussed in Section \ref{sec:converge},  $||\bar{A}{y}||=\vec{0}$ if only if $y$ is a Nash equilibrium and $||\bar{A}y||$ measures the distance to the Nash equilibrium in a dual space.

\begin{varalgorithm}{\textoverline{Opt}GD}
	\caption{Multiagent Optimistic Gradient Descent with Fewer Matrix Multiplications.}\label{alg:OptGrdad2}
	\begin{algorithmic}[1]
		\Procedure{\textoverline{Opt}GD}{$A,x^0,\bar{\eta}$}\Comment{Payoff Matrices, Initial Strategies and Learning Rates}
		\For{\texttt{$t=1,...,T$}}	
		\State $z_i^{-1}= {\bar{\eta}_i} \sum_{j\neq i} A^{(ij)}x_j^{-1}$ \Comment{Store $\sum_{j\neq i} A^{(ij)}x_j$}
		\EndFor	
		\For{\texttt{$t=1,...,T$}}
		\For{\texttt{$i=1,...,N$}}
		\State $z_i^{t-1}= {\bar{\eta}_i} \sum_{j\neq i} A^{(ij)}x_j^{t-1}$	\Comment{Store $\sum_{j\neq i} A^{(ij)}x_j$}
		\State $x_i^t:= x_i^{t-1} + 2\cdot z_i^{t-1}-\bar{\eta}_i \cdot z_i^{t-2}$ \Comment{Update Strategies Based on Previous Two Iterations}
		\EndFor
		\EndFor
		\EndProcedure
	\end{algorithmic}
\end{varalgorithm}

Denote $D^{Opt}, D^{\overline{Opt}}$, and $D^{Alt}$ as the distance $||\bar{A}y||$ after 30 seconds of running Algorithms \ref{alg:OptGrdad}, \ref{alg:OptGrdad2}, and \ref{alg:Verlet} respectively. 
Since each instance of $D^{Opt}, D^{\overline{Opt}}$, and $D^{Alt}$ are generated from the same game and initial condition and are also run in sequence, we can pair the results of the individual instances to get an estimate on relative performances $D^{Opt}/D^{Alt}$ and $D^{\overline{Opt}}/D^{Alt}$.

All experiments were conducted in version 4.02 of the R-statistical software on Windows 10 using an i7-10700 processor (2.9GHz) with 32GB of RAM.  
To control for variability caused by computer processing, we generate a single game and run all three algorithms on the game prior to generating the next game. 
The source code and spreadsheet of results for the experiments can be downloaded at \href{http://www.jamespbailey.com/1OverTConvergence}{www.jamespbailey.com/1OverTConvergence}. 
%Since computer performance can decrease over time, we first generate a random game, run the standard implementation of optimistic gradient descent for 30 seconds, then the optimized implementation, and finally our approach (alternating gradient descent).
%While computer performance may introduce a small bias to our data, our ordering of the algorithms ensure that our conclusions remain valid since our approach performs last.

\subsection{Selection of Learning Rates}\label{sec:selection}
In our experiments, we use a single scalar learning rate for all agents.  
To guarantee optimistic gradient descent has $O(1/T)$ time-average convergence to the set of Nash equilibria, $\bar{\eta}$ is required to be at most $1/(2\cdot ||A||)$ \cite{mokhtari2020convergence}. 
However, as shown in Theorem \ref{thm:converge}, alternating gradient descent only requires $\eta < 2/||A||$. 
As such, in our experiments, we always select the learning rate for alternating gradient descent to be four times larger than the learning rate for optimistic gradient descent, i.e., $\eta=4\cdot \bar{\eta}$. 
As shown in Theorems \ref{thm:large1} and \ref{thm:large2}, larger learning suggest better performance for alternating gradient descent and if $\bar{\eta}$ is a valid learning rate for optimistic gradient descent, then $\eta=4\bar{\eta}$ is a valid learning rate for alternating gradient descent. 

We remark that this selection normalizes the values of the learning rates;  
we are forcing both algorithms to operate near the boundary for optimal performance, i.e., $\eta\approx 2/||A||$ and $\bar{\eta}=1/(2||A||)$.
As shown in Lemma \ref{lem:Bounds} in the Appendix, $||A||\leq k\cdot(N-1)$ and therefore we select learning rates $\eta= 2/(k\cdot (N-1))$ and $\bar{\eta}= 1/ (2k\cdot (N-1))$ for alternating and optimistic gradient descent respectively. 
As suggested by Propositions \ref{prop:unbounded} and \ref{prop:noconverge}, $\eta= 2/(k\cdot (N-1))$ will not perform well when $||A||=k\cdot (N-1)$.  
However, the probability that $||A||=k\cdot(N-1)$ is $0$ since the elements of $A$ are generated uniformly at random.

\subsection{Results of Experiments}
In all 270 generated instances, our method (alternating gradient descent) outperformed both implementations of optimistic gradient descent.
Specifically, alternating gradient resulted in strategies that were approximately 2.628 and 1.772 times closer to the set of Nash equilibria than the standard and optimized implementation of optimistic gradient descent respectively. Moreover, across all selections of agents and strategies, we are 97.5\% confident that, on average, alternating gradient descent  will result in strategies that are 2.585 and 1.743 times closer to the set of Nash equilibrium after 30 seconds than the standard and optimized implementation of optimistic gradient descent respectively.
Thus, alternating gradient descent performs significantly better than both implementations of optimistic gradient descent. 
%Results for each pairing of $N$ and $k$ for each game can be found in Appendix \ref{sec:TABLES}.

The relative performance of alternating gradient descent for $N\in \{5,10,20\}$ agents and $k\in \{5,10,20\}$ strategies can be viewed in Tables \ref{tab:OptAlt} and \ref{tab:BarOptAlt}.
For example, in 20 agent, 20 strategy games, we are 97.5\% confident that alternating gradient descent, on average, alternating gradient descent will result in strategies that are 2.5902 and 1.8040 times closer to the set of Nash equilibrium than the standard and optimized implementation of optimistic gradient descent respectively.

\begin{table}[ht]\centering\caption{95\% Confidence Intervals for the Mean of $D^{Opt}/D^{Alt}$ shows that Algorithm \ref{alg:Verlet} significantly outperforms \ref{alg:OptGrdad}.}\label{tab:OptAlt}\vspace{.1in}							
	\begin{tabular}{| c | c c c |}						
		\hline		&\multicolumn{3}{c|}{Strategies}			\\
		\hline	Agents	&5	&10	&20	\\
		\hline	5	&(2.3516,2.6571)	&(2.4694,2.8529)	&(2.5329,2.8443)	\\
		10	&(2.3810,2.6914)	&(2.5003,2.7186)	&(2.6883,2.8618)	\\
		20	&(2.3647,2.5233)	&(2.6771,2.8707)	&(2.5902,2.7336)	\\
		\hline					
	\end{tabular}						
\end{table}

\begin{table}[ht]\centering\caption{95\% Confidence Interval for the Mean of $D^{\overline{Opt}}/D^{Alt}$ shows that Algorithm \ref{alg:Verlet} significantly outperforms \ref{alg:OptGrdad2}.}\label{tab:BarOptAlt}\vspace{.1in}							
	\begin{tabular}{| c | c c c |}						
		\hline		&\multicolumn{3}{c|}{Strategies}			\\
		\hline	Agents	&5	&10	&20	\\
		\hline	5	&(1.7155,1.9284)	&(1.6081,1.8794)	&(1.6495,1.8617)	\\
		10	&(1.6338,1.8602)	&(1.6482,1.7849)	&(1.7557,1.8948)	\\
		20	&(1.6052,1.7333)	&(1.7509,1.8809)	&(1.8040,1.9007)	\\
		\hline					
	\end{tabular}						
\end{table}

\subsection{Importance of Large Learning Rates}

Finally, we test the importance of using larger learning rates;
a key feature of alternating gradient descent is that it enables learning rates four times larger than optimistic gradient descent. 
As suggested by Theorems \ref{thm:large1} and \ref{thm:large2} and shown in Table \ref{tab:Size}, larger learning rates are vital for Algorithm \ref{alg:Verlet}'s superior performance.

\begin{table}[ht]\centering\caption{Impact of Different Learning Rates on the Performance of Alternating Gradient Descent Relative to Optimistic Gradient Descent}\label{tab:Size} \vspace{.1in}							
	\begin{tabular}{| c | c c c |}						
		\hline	&	$\eta=\bar{\eta}$&	$\eta=2\bar{\eta}$&	$\eta=4\bar{\eta}$\\	
		\hline	$D^{Opt}/D^{Alt}$&	(0.6376,0.6934)&	(1.2168,1.3325)&	(2.5003,2.7186)\\	
		$D^{\overline{Opt}}/D^{Alt}$&	(0.4236,0.4706)&	(0.7906,0.8725)&	(1.6482,1.7849)\\	
		\hline					
	\end{tabular}						
\end{table}

\section{Conclusion}
In this paper, we have proven that alternating gradient descent achieves $O(1/T)$ time-average convergence to the set of Nash equilibria in a generalization of network zero-sum games. 
Further, we have experimentally shown with 97.5\% confidence that, on average, alternating gradient results in time-averaged strategies that are 2.585 times closer to the set of Nash equilibria than optimistic gradient descent. 
In addition to providing a faster algorithm for a more general set of games, this paper also demonstrates the potential power of carefully constructing close approximations of continuous-time learning dynamics. 

\bibliographystyle{plain}  
\bibliography{References}
\appendix

\section{Proof of Proposition \ref{prop:chaos2}}\label{app:Chaos}

\Chaos*

\begin{proof}
	With the selection of learning rates and the payoff matrix $A$, one iteration  of Algorithm \ref{alg:2Agent} maps the point $(x,y) \to (x+y, x+2y)$. 
	Since the mapping is linear and $V^0$ is convex with a finite number of extreme points, $V^t$ will also be convex with a finite number of extreme points. 
	Moreover, the extreme points of $V^t$ can be determined directly from the extreme points of $V^{t-1}$. 
	Let $E^t$ denote the extreme points of $V^t$.  
	We begin by claiming that $E^t=(\pm (F_{2t-2}, F_{2t-1}), \pm (F_{2t+1}, F_{2t+2}))$ where $F_k$ is the $k$th number in the Fibonacci sequence. 
	The sequence is defined by $F_{k}=F_{k-1}+F_{k-2}$ with $F_1=F_2=1$. 
	
	We proceed induction and consider $t=0$.  
	Extending the sequence backwards, $F_{-2}=-1$ and $F_{-1}=1$. 
	Thus, for $t=0$, $(\pm (F_{2t-2}, F_{2t-1}), \pm (F_{2t+1}, F_{2t+2}))=(\pm (-1, 1), \pm (1, 1))=E^0$ completing the base case. 	
	Next, by the inductive hypothesis, $E^{t-1}=	(\pm (F_{2t-4}, F_{2t-3}), \pm (F_{2t-1}, F_{2t}))$.
	As stated before, we compute $E^t$ directly from $E^{t-1}$ using the map $(x,y)\to (x+y, x+2y)$:
	First,
	\begin{align*}
		(F_{2t-4}, F_{2t-3})\to (F_{2t-4}+F_{2t-3}, F_{2t-4}+F_{2t-3}+F_{2t-3}) = (F_{2t-2}, F_{2t-2}+F_{2t-3})  = (F_{2t-2}, F_{2t-1}).
	\end{align*}
	Following identically, $-(F_{2t-4}, F_{2t-3})\to-(F_{2t-2}, F_{2t-1})$. 
	Next, 
	\begin{align*}
	(F_{2t-1}, F_{2t})\to (F_{2t-1}+F_{2t},F_{2t-1}+ F_{2t}+F_{2t}) = (F_{2t+1},F_{2t+1}+F_{2t})= (F_{2t+1},F_{2t+2}).
	\end{align*}
	Similarly, $-(F_{2t-1}, F_{2t})\to -(F_{2t+1},F_{2t+2})$. 
	Thus, $E^t= (\pm (F_{2t-2}, F_{2t-1}), \pm (F_{2t+1}, F_{2t+2}))$ as claimed. 
	
	The diameter of $V^t$ is then given by the distance between $(F_{2t+1}, F_{2t+2})$ and $-(F_{2t+1}, F_{2t+2})$ since $F_k$ is increasing for $k>0$.
	This distance is $\sqrt{2F_{2t+1}^2+2F_{2t+2}^2}$.
	It is well-known that $F_k\in \Theta(\phi^k)$ where $\phi= (1+\sqrt{5})/2$ is the golden ratio. 
	Thus, 
	\begin{align*}\sqrt{2F_{2t+1}^2+2F_{2t+2}^2}\in \Theta(\sqrt{2\phi^{4t+2}+2\phi^{4t+3}})=\Theta(\phi^{2t})
	\end{align*}
	and the diameter grows exponentially. 
	
	Finally, the volume of $V^0$ is 4.  
	By Theorem \ref{thm:Volume}, Volume is invariant  and therefore the volume of $V^t$ is also 4 thereby completing the proof. 
\end{proof}

\section{Bounding  $||A||$ for Experiments}

\begin{restatable}[]{lemma}{Bounds}\label{lem:Bounds}
	Let $S_i=k$ for all $i=1,...,N$ and $A^{(ij)}_{s_i,s_j} \sim U(-1,1)$ for $i<j$ and let $A^{(ji)}=[-A^{(ij)}]^\intercal$. Let $A$ be the combined payoff matrix as defined in Section \ref{sec:Reduce}.  Then $||A|| \leq k\cdot(N-1)$. Further, there exists an instance $A$ such that $||A|| \geq \frac{k}{\sqrt{3}}\cdot(N-1)$. 
\end{restatable}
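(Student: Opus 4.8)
The statement has two halves, which I would handle separately. Throughout, write $A$ for the combined $Nk\times Nk$ payoff matrix of Section \ref{sec:Reduce}, whose $(i,j)$ block equals $A^{(ij)}$ for $i\neq j$ and is zero for $i=j$; note that every entry has absolute value at most $1$ and that the zero-sum relation $A^{(ji)}=-[A^{(ij)}]^\intercal$ makes $A$ skew-symmetric.

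For the upper bound I would avoid eigenvalues entirely and invoke the elementary interpolation inequality $\lVert A\rVert\le\sqrt{\lVert A\rVert_1\,\lVert A\rVert_\infty}$, where $\lVert\cdot\rVert_1$ and $\lVert\cdot\rVert_\infty$ denote the maximum absolute column-sum and row-sum. Each row of $A$ is indexed by a pair $(i,s_i)$ and has nonzero entries only in the $N-1$ off-diagonal blocks, hence at most $(N-1)k$ entries, each of magnitude $\le 1$; so $\lVert A\rVert_\infty\le (N-1)k$, and symmetrically $\lVert A\rVert_1\le (N-1)k$. Combining gives $\lVert A\rVert\le (N-1)k$. (Skew-symmetry already forces $\lVert A\rVert_1=\lVert A\rVert_\infty$, but the interpolation bound needs no structure at all.)

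For the lower bound I would exhibit one explicit instance and test it against a single vector, rather than estimate the norm of a random matrix. Take $A^{(ij)}$ to be the $k\times k$ all-ones matrix $J$ for every $i<j$ (so $A^{(ji)}=-J$), which makes $A=S\otimes J$ with $S$ the $N\times N$ sign matrix $S_{ij}=\operatorname{sgn}(j-i)$. Feeding in the all-ones vector $\mathbf 1=\mathbf 1_N\otimes\mathbf 1_k$ and using the mixed-product rule gives $A\mathbf 1=(S\mathbf 1_N)\otimes(k\mathbf 1_k)$, where $(S\mathbf 1_N)_i=N+1-2i$. Hence $\lVert A\rVert^2\ge \lVert A\mathbf 1\rVert^2/\lVert\mathbf 1\rVert^2 = k^2\,\big(\sum_{i=1}^N (N+1-2i)^2\big)/N$, and the identity $\sum_{i=1}^N(N+1-2i)^2=N(N^2-1)/3$ collapses this to $k^2(N^2-1)/3\ge k^2(N-1)^2/3$, i.e. $\lVert A\rVert\ge \frac{k}{\sqrt 3}(N-1)$.

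The upper bound is routine. The only real decision points in the lower bound are (i) choosing a construction that diagonalizes the computation — the all-ones blocks turn $A$ into a Kronecker product whose norm factors as $k\lVert S\rVert$ — and (ii) recognizing that the factor $1/\sqrt 3$ is exactly the one produced by the quadratic sum $\sum_{i=1}^N(N+1-2i)^2=N(N^2-1)/3$ (equivalently, the second moment $1/3$ of the uniform distribution on $(-1,1)$), so that the all-ones test vector already saturates the target up to the harmless slack $\sqrt{N^2-1}\ge N-1$. I do not expect genuine obstacles beyond verifying this identity and checking that entries equal to $1$ are admissible: they lie in the closure of the support, and if strict interiority is demanded one takes entries $1-\varepsilon$ and lets $\varepsilon\to 0$.
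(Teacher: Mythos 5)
Your proof is correct, and it splits naturally into a half that mirrors the paper and a half that genuinely departs from it. For the lower bound you follow exactly the paper's route: the same instance ($A^{(ij)}$ all-ones above the diagonal, negated transposes below), the same test vector $\mathbf{1}_{Nk}$, and the same identity $\sum_{i=1}^N (N+1-2i)^2 = N(N^2-1)/3$; your Kronecker factorization $A = S\otimes J$ is only a cleaner bookkeeping device for the block computation the paper writes out longhand. The upper bound is where you diverge. The paper first proves an entrywise-domination lemma (if $A'_{uv}\ge |A_{uv}|$ for all $u,v$ then $\lVert A'\rVert \ge \lVert A\rVert$), dominates $A$ by the all-ones coordination matrix $B$, and then pins down $\lVert B\rVert = k(N-1)$ through an explicit eigenvector analysis --- roughly a page of work. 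You instead invoke the standard bound $\lVert A\rVert \le \sqrt{\lVert A\rVert_1\,\lVert A\rVert_\infty}$ and note that every row and column has at most $(N-1)k$ nonzero entries of magnitude at most $1$. This is shorter, entirely elementary, and needs no structure of the game at all; what the paper's longer route buys is the exact value of $\lVert B\rVert$ for the dominating coordination game (so one sees the domination step is tight), but that extra information is never used downstream. A final point in your favor: you observe that entries equal to $\pm 1$ lie only in the closure of the support of $U(-1,1)$, and that the slack $\sqrt{N^2-1} > N-1$ (for $N\ge 2$) absorbs a perturbation to entries $1-\varepsilon$, so a genuine instance exists; the paper passes over this silently.
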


The first part of the proof consists of showing that $||A||\leq ||B||$ where $B$ is a coordination game with $B^{(ij)}_{s_i,s_j}=1$.  
We then show that the largest eigenvalue of $B$ is $k\cdot(N-1)$ implying $||A||\leq ||B||=k\cdot(N-1)$. 
The second part of the proof simply consists of giving an $A$ and $x$ such that  $||A||=\max_{v} ||Av||/||v|| \geq ||Ax||/||x||=\frac{k}{\sqrt{3}}\cdot(N-1)$. 

\begin{proof}
	Let $A$ be an arbitrary $U\times V$ matrix.
	We first show that if $A'$ is such that $A'_{uv} \geq |A_{uv}|$ then $||A'|| \geq ||A||$. 
	Let $x$ be such that $||A||=||Ax||$ as in the definition of $||A||$.
	Then,
	\begin{align*}
	||A||=||Ax||&= \sqrt{\sum_{u=1}^U \left( \sum_{v=1}^V A_{uv} x_v\right)^2}\\
	&\leq \sqrt{\sum_{u=1}^U \left( \sum_{v=1}^V |A_{uv}|\cdot  |x_v|\right)^2}\\\\
	&\leq \sqrt{\sum_{u=1}^U \left( \sum_{v=1}^V A'_{uv}\cdot  |x_v|\right)^2}\\
	&= {\left\lVert A' |x| \right\rVert} \leq \max_{y: ||y||=1} ||A'y|| = ||A'||
	\end{align*} 
	
	Next, let  $\mathbf{1}_w$ be a vector of $w$ 1's and let $B^{(ij)}=\mathbf{1}_k\mathbf{1}_k^\intercal$ for all $i\neq j$ (a $k\times k$ matrix of 1's)  and let $B^{{(ii)}}= 0\cdot \mathbf{1}_k\mathbf{1}_k^\intercal$.  
	This corresponds to a coordination game where the payout for every pair of pure strategies is 1. 
	By selection, $B_{uv} \geq |A_{uv}|$ when $A^{(ij)}_{si,s_j}$ is generated between $-1$ and $1$ and therefore, by the previous claim, $||B||\geq ||A||$.  
	
	It is well-known that the spectral norm and Euclidean norms are equivalent, i.e., $|\lambda_{max}|=||B||$ where $\lambda_{max}$ is the largest eigenvalue for $B$.   Let $\lambda$ and $v$ be any eigenvalue/eigenvector pair for $B$. We write the components of the eigenvector $v$ as $v_{js_j}$ in order to freely move between the notations $B$ and $B^{(ij)}$.  
	We also use $v_j=\{v_{j1},v_{j2},...,v_{jk}\}$ to denote the portion of the eigenvector that multiplies by $B^{(ij)}$ in the definition of an eigenpair. 
	Since $\lambda$ and $v$ are an eigenpair,
	\begin{align*}		
	Bv = \lambda v &\Rightarrow \sum_{j\neq i} B^{(ij)} v_j = \lambda v_i \ \forall_{i=1}^N\\
	&\Rightarrow \sum_{j\neq i} \mathbf{1}_k\mathbf{1}_k^\intercal v_j = \lambda v_i \ \forall_{i=1}^N.
	\end{align*}
	The term $\mathbf{1}_k\mathbf{1}_k^\intercal v_j=\mathbf{1}_k(\mathbf{1}_k^\intercal v_j)$ is a vector of constants and there exists a $c_i$ such that $v_i=c_i\cdot \mathbf{1}_k$.
	Continuing from above, this implies
	\begin{align*}
	\sum_{j\neq i} c_j \mathbf{1}_k\mathbf{1}_k^\intercal\mathbf{1}_k= \lambda c_i \mathbf{1}_k \ \forall_{i=1}^N & \Rightarrow \sum_{j\neq i} c_j\cdot k  \mathbf{1}_k= \lambda c_i\mathbf{1}_k \ \forall_{i=1}^N\\
	& \Rightarrow \sum_{j\neq i} c_j\cdot k= \lambda c_i \ \forall_{i=1}^N\\
	& \Rightarrow|\lambda | \cdot | c_i| = |k \sum_{j\neq i} c_j| \leq k \sum_{j\neq i } |c_j|\ \forall_{i=1}^N
	\end{align*} 
	Further, we may select $\lambda, v$ and $c$ so that $\sum_{i=1}^N c_i^2=N$ which implies there exists an $i$ such that $|c_i| \geq 1$ and $\sum_{j\neq i} c_j ^2 \leq N-1$. 
	Selecting such an $i$ implies
	\begin{align*}
	|\lambda| \leq \frac{k \sum_{j\neq i } |c_j|}{|c_i|}& \leq  {k \sum_{j\neq i } |c_j|}  \leq k\cdot (N-1)
	\end{align*}
	where the last inequality follows since $\sum_{j\neq i} x_i : \sum_{j\neq i } x_j^2 \leq N-1, x_i \geq 0$ is a concave function over a convex, symmetric domain and therefore has the symmetric maximizer $x_j= 1$ for all $j\neq i$.
	Thus, $||B||=|\lambda_{max} |\leq k\cdot (N-1)$. 
	We remark that that this bound for $||B||$ is tight -- it is straightforward to verify that  $\lambda = k\cdot(N-1)$ and  $v=\mathbf{1}_{Nk}$ are an eigenpair.
	This completes the first part of the proof since $||A||\leq ||B|| \leq k\cdot (N-1)$.

	To show $||A|| \geq \frac{k}{\sqrt{3}}(N-1)$, let $A^{(ij)}=\mathbf{1}_k\mathbf{1}_k^\intercal$ for $i<j$ and let $A^{(ij)}= [-A^{(ji)}]^\intercal$ for $i>j$.  
	As such, $A$ is block diagonal matrix with 0's on the block diagonal, positive 1's above the diagonal, and negative 1's below the diagonal. 
	Then 
	\begin{align*}
	||A||= \max_{x\in {\cal X}} \frac{||Ax||}{||x||}& \geq \frac{||A\mathbf{1}_{Nk}||}{||\mathbf{1}_{Nk}||}\\
	&= \frac{\sqrt{\sum_{i=1}^n||\sum_{j\neq i} A^{(ij)}\mathbf{1}_k||^2}}{\sqrt{Nk}}\\
	&= \frac{\sqrt{\sum_{i=1}^n||\sum_{j> i} A^{(ij)}\mathbf{1}_k+\sum_{j< i} A^{(ij)}\mathbf{1}_k||^2}}{\sqrt{Nk}}\\
	&= \frac{\sqrt{\sum_{i=1}^n||\sum_{j> i} \mathbf{1}_k\mathbf{1}_k^\intercal\mathbf{1}_k-\sum_{j< i} \mathbf{1}_k\mathbf{1}_k^\intercal\mathbf{1}_k||^2}}{\sqrt{Nk}}\\
	&= \frac{\sqrt{\sum_{i=1}^nk^2||\sum_{j> i} \mathbf{1}_k-\sum_{j< i} \mathbf{1}_k||^2}}{\sqrt{Nk}}\\
	&= \frac{\sqrt{\sum_{i=1}^nk^2(N-2i+1)^2||\mathbf{1}_k||^2}}{\sqrt{Nk}}\\
	&= \frac{\sqrt{k^3N(N^2-1)/3}}{\sqrt{Nk}}\\
	&= \frac{k}{\sqrt{3}}\cdot \sqrt{N^2-1}  \geq \frac{k}{\sqrt{3}}\cdot \sqrt{N^2-2N+1}=\frac{k}{\sqrt{3}}\cdot (N-1)
	\end{align*}
\end{proof}

\newpage

\end{document}